\newtheorem{theorem}{Theorem}
\newtheorem{lemma}{Lemma}
\newtheorem{proposition}{Proposition}
\newtheorem{corollary}{Corollary}
\newcommand{\valset}[1]{V_{{#1}}}
\newcommand{\rvalset}[2]{R_{{#1},{#2}}}
\newcommand{\combset}[2]{C_{{#1},{#2}}}
\newcommand{\altern}{a}
\newcommand{\mech}[2]{{\mbox{\rm Mech}}_{{#1},{#2}}}
\newcommand{\rmech}[3]{{\mbox{\rm Mech}}^{\mbox{\bf\tiny{#1}}}_{{#2},{#3}}}
\newcommand{\mratio}{\mbox{\rm ratio}}
\newcommand{\ratio}[2]{r_{{#1},{#2}}}
\newcommand{\aratio}[1]{r_{{#1}}}
\newcommand{\gratio}[3]{r^{\mbox{\bf\tiny{{#1}}}}_{{#2},{#3}}}
\newcommand{\agratio}[2]{r^{\mbox{\bf\tiny{{#1}}}}_{{#2}}}
\newcommand{\unilat}[3]{{U^{{#3}}_{{#1},{#2}}}}
\newcommand{\duple}[3]{{D^{{#3}}_{{#1},{#2}}}}
\newcommand{\oul}{0.610}
\newcommand{\ouu}{0.611}
\newcommand{\ogl}{0.616}
\newcommand{\ogu}{0.641}
\newcommand{\cul}{0.660}
\newcommand{\cgl}{0.660}
\newcommand{\cgu}{0.750}
\newcommand{\cone}{c_1}
\newcommand{\ctwo}{c_2}
\begin{document}

\title{Truthful Approximations to Range Voting\thanks{The authors acknowledge support from the Danish National Research Foundation and The National Science
Foundation of China (under the grant 61061130540) for the Sino-Danish Center for the Theory of Interactive Computation,
within which this work was performed. The authors also acknowledge support from the Center for Research
in Foundations of Electronic Markets (CFEM), supported by the Danish Strategic Research Council.
Author's addresses: Department of Computer Science, Aarhus University, Aabogade 34, DK-8200 Aarhus N, Denmark.}
}
\author{
Aris Filos-Ratsikas \and Peter Bro Miltersen
}



\maketitle
\begin{abstract}
We consider the fundamental mechanism design problem of {\em
approximate social welfare maximization} under {\em general cardinal
preferences} on a finite number of alternatives and {\em without
money}. The well-known {\em range voting} scheme can be thought of as
a non-truthful mechanism for exact social welfare maximization in this
setting.  With $m$ being the number of alternatives, we exhibit a
randomized truthful-in-expectation ordinal mechanism implementing an outcome
whose expected social welfare is at least an $\Omega(m^{-3/4})$ fraction of
the social welfare of the socially optimal alternative. On the other hand, we 
show that for sufficiently many agents and any
truthful-in-expectation ordinal mechanism, there is a valuation profile where
the mechanism achieves at most an $O(m^{-2/3})$ fraction of the
optimal social welfare in expectation. Furthermore, we prove that no truthful-in-expectation (not necessarily ordinal)
mechanism can achieve 0.94-fraction of the optimal social welfare. We get tighter bounds for the natural
special case of $m = 3$, and in that case furthermore obtain separation results concerning the approximation
ratios achievable by natural restricted classes of
truthful-in-expectation mechanisms. In particular, we show that for
$m=3$ and a sufficiently large number of agents, the best mechanism that is {\em ordinal} as well as 
{\em mixed-unilateral} has an approximation ratio between \oul\ 
and \ouu, the best {\em ordinal}
mechanism has an approximation ratio between \ogl\ and \ogu, while the
best {\em mixed-unilateral} mechanism has an approximation ratio bigger than\ \cul. In particular, the best mixed-unilateral non-ordinal (i.e., cardinal) mechanism strictly
outperforms all ordinal ones, even the non-mixed-unilateral ordinal ones.
\end{abstract}
\section{Introduction}
We consider the fundamental mechanism design problem of {\em
approximate social welfare maximization} under {\em general cardinal
preferences} and {\em without money}. In this setting, there is a
finite set of agents (or {\em voters}) $N = \{1,\ldots,n\}$ and a finite set of
alternatives (or {\em candidates}) $M = \{1,\ldots,m\}$.  Each voter $i$ has a private
valuation function $u_i: M \rightarrow {\mathbb{R}}$ that can be
arbitrary, except that we require\footnote{We make this requirement primarily for convenience; to avoid having to qualifiy in technically annoying ways a number of definitions and statements of this paper as well as definitions and statements of previous ones.} that it is injective, i.e., we insist that it induces a total order on candidates. Standardly, the function $u_i$ is considered well-defined only up to positive affine
transformations. That is, we consider $x \rightarrow a u_i(x) + b$, for
$a  > 0$ and any $b$, to be a different representation of
$u_i$. Given this, we fix the representative $u_i$ that maps the
least preferred candidate of voter $i$ to $0$ and the most preferred
candidate to $1$ as the canonical representation of $u_i$ and we
shall assume that all $u_i$ are thus canonically represented throughout 
this paper. In particular, we shall let $\valset{m}$
denote the set of all such functions.

We shall be interested in
{\em direct revelation mechanisms without money} that elicit the {\em valuation profile} ${\bf u} = (u_1,
u_2, \ldots, u_n)$ from the voters and based on this elect a candidate $J({\bf u}) \in
M$. We shall allow mechanisms to be randomized and $J({\bf u})$ is
therefore in general a random map.  In fact, we shall define a mechanism simply to be a random map $J: {\valset{m}}^n \rightarrow M$. We prefer mechanisms that
are {\em truthful-in-expectation}, by which we mean that the following condition is satisfied: For each voter $i$, and all ${\bf u} = (u_i, u_{-i}) \in
{\valset{m}}^n$ and $\tilde u_i \in \valset{m}$, we have 
$E[u_i(J(u_i,u_{-i})] \geq E[u_i(J(\tilde u_i, u_{-i})]$. 
That is, if voters are assumed to be expected utility maximizers, the
optimal behavior of each voter is always to reveal their true valuation
function to the mechanism.  As truthfulness-in-expectation is the only notion of truthfulness of interest to us in this paper, we shall use ``truthful'' as a synonym for ``truthful-in-expectation'' from now on. Furthermore, we are interested in
mechanisms for which the expected {\em social welfare}, i.e.,
$E[\sum_{i=1}^n u_i(J({\bf u}))]$, is as high as possible, and we shall in
particular be interested in the {\em approximation ratio} $\mratio(J)$ of the
mechanism, defined by
\[ \mratio(J) = \inf_{{\bf u} \in {\valset{m}}^n} \frac{E[\sum_{i=1}^n u_i(J({\bf u}))]}{\max_{j \in M}\sum_{i=1}^n u_i(j)}, \]
trying to achieve mechanisms with as high an approximation ratio as possible.
Note that for $m=2$, the problem is easy; a majority vote is a truthful mechanism that achieves optimal social welfare, i.e., it has approximation ratio 1, so we only consider the problem for $m \geq 3$. 

A mechanism without money for general cardinal preferences can be
naturally interpreted as a {\em cardinal voting scheme} in which each
voter provides a {\em ballot} giving each candidate $j \in M$ a
numerical score between 0 and 1. A winning candidate is then determined
based on the set of ballots. With this interpretation, the well-known
{\em range voting scheme} is simply the determinstic mechanism that elects the
socially optimal candidate argmax$_{j \in M}\sum_{i=1}^n u_i(j)$, or, more precisely, elects this candidate {\em if} ballots are reflecting the true 
valuation functions $u_i$. In particular, range voting has by construction an approximation ratio of 1. However, range voting is not a 
truthful mechanism.

Before stating our results, we mention for comparison the approximation ratio of some simple truthful mechanisms. Let {\em random-candidate} be the mechanism that elects a candidate uniformly at random, without looking at the ballots. Let {\em random-favorite} be the mechanism that picks a voter uniformly at random and elects his favorite candidate; i.e., the (unique) candidate to which he assigns valuation $1$. Let {\em random-majority} be the mechanism that picks two candidates uniformly at random and elects one of them by a majority vote. It is not difficult to see that as a function of $m$ and assuming that $n$ is sufficiently large, {\em random-candidate} as well as {\em random-favorite} have approximation ratios $\Theta(m^{-1})$, so this is the trivial bound we want to beat. Interestingly, {\em random-majority} performs even worse, with an approximation ratio of $\Theta(m^{-2})$.

As our first main result, we exhibit a randomized
truthful mechanism with an approximation ratio of $0.37
m^{-3/4}$. The mechanism is the following very simple one: {\em With
probability $3/4$, pick a candidate uniformly at random. With
probability $1/4$, pick a random voter, and pick a candidate uniformly
at random from his $\lfloor m^{1/2} \rfloor$ most preferred
candidates.} Note that this mechanism is {\em ordinal}: Its behavior
depends only on the {\em rankings} of the candidates on the ballots,
not on their numerical scores. We know no asymptotically better
truthful mechanism, even if we allow general (cardinal)
mechanisms, i.e., mechanisms that can depend on the numerical scores
in other ways.  We also show a negative result: For sufficiently many
voters and any truthful ordinal mechanism, there is a
valuation profile where the mechanism achieves at most an
$O(m^{-2/3})$ fraction of the optimal social welfare in expectation.
The negative result also holds for non-ordinal mechanisms that are {\em mixed-unilateral}, by which we mean mechanisms that elect a candidate based on the ballot of a single randomly chosen voter.

We get tighter bounds for the natural case of $m = 3$ candidates and
for this case, we also obtain separation results concerning the
approximation ratios achievable by natural restricted classes of
truthful mechanisms. Again, we first state the
performance of the simple mechanisms defined above for comparison: For
the case of $m=3$, {\em random-favorite} and {\em random-majority}
both have approximation ratios $1/2 + o(1)$ while {\em
random-candidate} has an approximation ratio of $1/3$.  We show that
for $m=3$ and large $n$, the best mechanism that is {\em ordinal} as
well as {\em mixed-unilateral} has an approximation ratio between \oul\
and \ouu. The best {\em ordinal} mechanism has an approximation ratio
between \ogl\ and \ogu. Finally, the best {\em mixed-unilateral}
mechanism has an approximation ratio larger than \cul. In particular,
the best mixed-unilateral mechanism strictly outperforms all
ordinal ones, even the non-unilateral ordinal ones. The
mixed-unilateral mechanism that establishes this is a convex
combination of {\em quadratic-lottery}, a mechanism of Feige and Tennenholtz \cite{Feige10}
and {\em random-favorite}, that was defined above.
\subsection{Background, related research and discussion}
Characterizing strategy-proof social choice functions (a.k.a.,
truthful direct revelation mechanisms without money) under general
preferences is a classical topic of mechanism design and social choice
theory. The celebrated Gibbard-Satterthwaite theorem
\cite{Gibbard73,Satterthwaite75} states that when the number $m$ of candidates is at least 3, any {\em deterministic} and {\em onto} truthful mechanism\footnote{Even though the theorem is usually stated for ordinal mechanisms, it is easy to see that it holds even without assuming that the mechanism is ordinal.} must be a {\em dictatorship},
i.e., it is a function of the ballot of a single distinguished voter
only, and outputs the favorite (i.e., top ranking) candidate of that
voter. Gibbard \cite{Gibbard77} extended the Gibbard-Satterthwaite theorem to the case
of randomized ordinal mechanisms, and we shall heavily use his theorem when proving our negative results on ordinal mechanisms:
\begin{theorem}\cite{Gibbard77}\label{thm-Gibbard77}
The ordinal mechanisms without money that are truthful under general cardinal preferences\footnote{without ties, i.e., valuation functions must be injective, as we require throughout this paper, except in Theorem \ref{thm:anyupper}. If ties were allowed, the characterization would be much more complicated.} are exactly the convex combinations of truthful {\em unilateral} ordinal mechanisms and truthful {\em duple} mechanisms.
\end{theorem}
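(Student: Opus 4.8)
Since this is Gibbard's theorem, the plan is to reconstruct his argument. The ``$\Leftarrow$'' direction is routine: a convex combination of ordinal mechanisms is ordinal, and truthfulness-in-expectation survives convex combination because $E[u_i(J({\bf u}))]$ is linear in the mixing distribution over component mechanisms, so if every component weakly prefers truthtelling for voter $i$, so does the mixture. All the content is in ``$\Rightarrow$''. The first step is to put truthfulness in a usable form: since $J$ depends only on reported rankings while $u_i$ ranges over \emph{all} canonical representations of voter $i$'s true ranking $\succ_i$, the requirement $E[u_i(J(u_i,u_{-i}))]\ge E[u_i(J(\tilde u_i,u_{-i}))]$ for all such $u_i$ is equivalent to first-order stochastic dominance of the outcome lotteries: for every prefix (``top set'') $T$ of $\succ_i$, the winner lands in $T$ at least as often when voter $i$ reports $\succ_i$ as when he reports any other ranking.

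The second step mines this for local conservation laws. Take $\tilde\succ_i$ to be $\succ_i$ with two adjacent candidates $a,b$ transposed, say $b$ immediately above $a$ in $\succ_i$. Writing out the stochastic-dominance inequalities for both orderings (truthtelling when the truth is $\succ_i$, and when it is $\tilde\succ_i$) and intersecting them forces: $\Pr[\text{winner}=c]$ is unchanged for every $c\notin\{a,b\}$; $\Pr[\text{winner}\in\{a,b\}]$ is unchanged; and $\Pr[\text{winner}=a]$ weakly increases as $a$ moves up. Iterating transpositions that keep $a$ and $b$ adjacent shows that the probability the winner lies in $\{a,b\}$, and more generally the manner in which $J$ ``decides $a$ against $b$,'' is independent of \emph{where} the block $\{a,b\}$ sits in any single voter's ranking; and top-set probabilities are strongly monotone. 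These localization facts are the technical core.

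The third step assembles the decomposition; I would organize it by induction on the number $m$ of candidates, the base case $m=2$ being immediate since then every truthful mechanism is trivially a truthful duple mechanism. Using the localization laws one isolates, for each unordered pair $\{a,b\}$, the ``$\{a,b\}$-part'' of $J$ --- the portion of the winning probability that ever passes between $a$ and $b$ as reports vary --- and shows each such part is governed either by the ranking of one fixed voter (a unilateral contribution) or by a scheme whose winner always lies in $\{a,b\}$ (a duple contribution). Peeling off the duple parts and the parts controlled purely by pairwise behaviour leaves a residual mechanism insensitive to one candidate, to which the induction hypothesis applies; the mixing weights are read off from the conservation laws of the second step, and one verifies they are nonnegative and sum to $1$.

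I expect the third step to be the real obstacle: defining precisely the ``part of $J$ associated with $\{a,b\}$'' and proving it is always unilateral or duple and that these parts reassemble into $J$ \emph{exactly}. That no naive ``derandomize and recombine'' shortcut exists is already clear for $n=1$, where the truthful single-voter mechanisms are precisely the random positional rules --- output the reported rank-$k$ candidate with probability $p_k$ for some non-increasing $p_1\ge\cdots\ge p_m$ --- and such a rule with, e.g., $p=(2/3,1/3,0)$ is not a convex combination of deterministic truthful mechanisms, since the only deterministic truthful single-voter mechanism is the dictatorship. Hence the unilateral pieces and their weights must be recovered from the stochastic-dominance structure itself, and coordinating this across $n\ge 2$ voters so that the non-unilateral residue collapses exactly to duples is the delicate part; I would follow Gibbard's proof rather than attempt a shortcut.
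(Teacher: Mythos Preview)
The paper does not prove this theorem; it is simply quoted as Gibbard's 1977 result and used as a black box (see its invocations in Corollary~\ref{cor-ud} and in the corollary following Theorem~\ref{thm-neg}). So there is no ``paper's own proof'' to compare against.

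As for your sketch on its own merits: the overall architecture is right. The equivalence of truthfulness-in-expectation with first-order stochastic dominance over top sets (your first step) and the adjacent-transposition localization laws (your second step) are indeed the engine of Gibbard's argument. Your honest assessment that the third step is where the work lies is accurate, and your $n=1$ example correctly shows why one cannot simply derandomize. However, the inductive organization you propose (``peeling off the duple parts \ldots leaves a residual mechanism insensitive to one candidate'') is not quite how Gibbard structures it, and as written it is vague enough that it is not clear it would close: you have not said why the residual after removing all pairwise-duple contributions is a bona fide truthful mechanism on $m-1$ candidates, nor why the unilateral pieces extracted for different pairs $\{a,b\}$ are consistent with a single voter index $i^*$. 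If you actually intend to reconstruct the proof rather than cite it, you should follow Gibbard's paper directly for step three; the sketch as it stands is a plan, not a proof.
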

Here, a {\em unilateral} mechanism is a randomized mechanism whose
(random) output depends on the ballot of a single distinguished voter
$i^*$ only. Note that a unilateral truthful mechanism does not have to be a
dictatorship. For instance, the mechanism that elects with probability
$\frac{1}{2}$ each of the two top candidates according to the ballot of
voter $i^*$ is a unilateral truthful mechanism.  
A {\em duple} mechanism is an ordinal mechanism for which there are two distinguished candidates so that all other candidates are elected with probability 0, for all valuation profiles.

An optimistic interpretation of Gibbard's 1977 result as opposed to
his 1973 result that was suggested, e.g., by Barbera \cite{Barbera79}, is that
the class of randomized truthful mechanisms is quite rich and contains
many arguably ``reasonable'' mechanisms--in contrast to dictatorships, 
which are
clearly ``unreasonable''. However, we are not aware of any suggestions
in the social choice literature of any well-defined quality measures
that would enable us to rigorously compare these mechanisms and in particular
find the best. Fortunately, one of the main conceptual
contributions from computer science to mechanism design in general is
the suggestion of one such measure, namely the notion of worst case
approximation ratio relative to some objective function. Indeed, a
large part of the computer science literature on mechanism design
(with or without money) is the construction and analysis of
approximation mechanisms, following the agenda set by the seminal papers by Nisan and Ronen \cite{NisanRonen} for the case of mechanisms with money and Procaccia and Tennenholz \cite{PT} for the case of mechanisms without money (i.e., social choice functions). Following this research program, and using
Gibbard's characterization, Procaccia \cite{Procaccia10} gave in a paper conceptually 
very
closely related to the present one, upper and lower bounds on the
approximation ratio achievable by ordinal mechanisms for
various objective functions under general preferences. However, he only considered objective
functions that can be defined ordinally (such as, e.g., Borda count),
and did in particular not consider approximating the optimal social
welfare, as we do in the present paper.

The (approximate) optimization of social welfare (i.e. sum of valuations) is indeed a very
standard objective in mechanism design. In particular, in the setting
of mechanisms {\em with} money and agents with quasi-linear utilities,
the celebrated class of Vickrey-Clarke-Groves (VCG) mechanisms exactly
optimize social welfare, while classical negative results such as
Roberts' theorem, state that under general cardinal preferences (and
subject to some qualifications), weighted social welfare
is the only objective one can maximize exactly
truthfully, even with money (see Nisan \cite{Nisan07} for an exposition of all these results). It therefore seems to us extremely natural to try to
understand how well one can approximate this objective truthfully
without money under general cardinal preferences. 
One possible reason that the problem was not considered
previously to this paper (to the best of our knowledge) is that,
arguably, social welfare is a somewhat less natural objective function
without the assumption of quasi-linearity of utilities made in
the setting of mechanisms with money. Indeed, assuming quasi-linearity
essentially means forcing the valuations of all agents to be in the unit of
dollars, making it natural to subsequently add them up. On the other hand, in the
setting of social choice theory, the valuation functions are to be
interpreted as von Neumann-Morgenstern utilities (i.e, they are meant
to encode orderings on lotteries), and in particular are only
well-defined up to affine transformations. In this setting, the social
welfare has to be defined as above, as the result of adding up the valuations of all
players, {\em after} these are normalized by scaling to, say, the interval [0,1]. While this is arguably {\em ad hoc}, we note again that optimizing social welfare in this sense is in fact the intended (hoping for truthful ballots) outcome of the well known {\em range voting scheme} (
\verb=http://en.wikipedia.org/wiki/Range_voting=) which is a good piece of evidence for its naturalness.\footnote{It was pointed out to us that it is not completely clear that it is part of the range voting scheme that voters are asked to calibrate their scores so that 0 is the score of their least preferred candidate and 1 is the score of their most preferred candidate. However, without {\em some} calibration instructions, the statement "Score the candidates on a scale from 0 to 1" simply does not make sense and we believe that the present calibration instructions are the most natural ones imaginable.}

As already noted, social welfare is a cardinal objective, i.e., it depends on
the actual numerical valuations of the voters; not just their rankings
of the candidates. While it makes perfect sense to measure how well
ordinal mechanisms can approximate a cardinal objective, such as
social welfare, it certainly also makes sense to see if improvements
to the approximation of the optimal social welfare can be made by mechanisms that actually look at
the numerical scores on the ballots and not just the rankings, i.e.,
cardinal mechanisms. The limitations of ordinal mechanisms were considered recently by Boutilier {\em et al.} \cite{Boutilier} in a very interesting paper closely related to the present one, but crucially, their work did not consider incentives, i.e., they did not require truthfulness of the mechanisms in their investigations. On the other hand, truthfulness is the pivotal property in our approach.
The characterization of truthful mechanisms of Theorem \ref{thm-Gibbard77} does not apply to cardinal
mechanisms.  But noting that the definition of ``unilateral'' is not
restricted to ordinal mechanisms, one might naturally suspect that a
similar characterization would also apply to cardinal mechanisms. In a followup paper, Gibbard \cite{Gibbard78}, indeed proved a theorem along those lines,
but interestingly, his result does {\em not} apply to truthful {\em
direct} revelation mechanisms (i.e., strategy-proof social choice
functions), which is the topic of the present paper, but only to {\em
indirect} revelation mechanisms with finite strategy space. Also, the
restriction to finite strategy space (which is in direct contradiction
to direct revelation) is crucial for the proof. Somewhat
surprisingly, to this date, a characterization for the cardinal case
is still an open problem! For a discussion of this situation and for interesting counterexamples to tempting characterization attempts similar to the characterization for the ordinal case of Gibbard \cite{Gibbard77}, see 
\cite{Barbera98,Barbera10}, the bottomline being that we at the moment do not have a good understanding of what can be done with cardinal truthful mechanisms for general preferences.
Concrete examples of cardinal mechanisms for general 
preferences were given in a number of a papers in the economics and social choice
literature \cite{Zeckhauser73,Freixas84,Barbera98} and the computer
science literature \cite{Feige10}. It is interesting that while the
social choice literature gives examples suggesting that the space of
cardinal mechanisms is rich and even examples of instances where a cardinal
mechanism for voting can yield a (Pareto) better result than all
ordinal mechanisms \cite{Freixas84}, there was apparently no systematic
investigation into constructing ``good'' cardinal mechanisms for
unrestricted preferences. Here, as in the ordinal case, we suggest that the notion of
approximation ratio provides a meaningful measure of quality that
makes such investigations possible, and indeed, our present paper is meant to start such investigations. Our investigations are very much helped by
the work of Feige and Tennenholtz \cite{Feige10} who considered and characterized the {\em
strongly truthful, continuous, unilateral} cardinal mechanisms.
While their agenda was mechanisms for which the objective is information
elicitation itself rather than mechanisms for approximate optimization
of an objective function, the mechanisms they suggest still turn out to be useful for social welfare optimization. In particular, our construction establishing the gap between the approximation ratios for cardinal and ordinal mechanisms for three candidates is based on their {\em quadratic lottery}.

\subsection{Organization of paper}
In Section \ref{sec-prel} we give formal definitions of the concepts
informally discussed above, and state and prove some useful lemmas. In
Section \ref{sec-many}, we present our results for an
arbitrary number of candidates $m$. In Section \ref{sec-few}, we present our results for $m=3$. We conclude with a discussion of open problems in Section \ref{sec-conc}.

\section{Preliminaries}\label{sec-prel}
We let $\valset{m}$ denote the set of canonically represented valuation functions on $M = \{1,2,\ldots,m\}$. That is, $\valset{m}$ is
  the set of injective functions $u: M \rightarrow [0,1]$
  with the property that $0$ as well as $1$ are contained in the image
  of $u$.  

We let $\mech{m}{n}$ denote the set of truthful
mechanisms for $n$ voters and $m$ candidates. That is, $\mech{m}{n}$
is the set of random maps $J: {\valset{m}}^n \rightarrow M$ with the property that for voter $i \in \{1,\ldots,n\}$, and all ${\bf u} = (u_i, u_{-i}) \in
{\valset{m}}^n$ and $\tilde u_i \in \valset{m}$, we have 
$E[u_i(J(u_i,u_{-i})] \geq E[u_i(J(\tilde u_i, u_{-i})]$. Alternatively, instead of viewing a mechanism as a random map, we can view it as a map from ${\valset{m}}^n$ to $\Delta_m$, the set of probability density functions on $\{1,\ldots,m\}$. With this interpretation, note that $\mech{m}{n}$ is a convex subset of the vector space of all maps from ${\valset{m}}^n$ to ${\mathbb{R}}^m$.

We shall be interested in certain special classes of
mechanisms. In the following definitions, we throughout view a mechanism $J$ as
a map from
${\valset{m}}^n$ to $\Delta_m$.

An {\em ordinal} mechanism $J$
is a mechanism with the following property:
$J(u_i, u_{-i}) = J(u'_i, u_{-i})$, for any voter $i$, any preference
profile ${\bf u} = (u_i, u_{-i})$, and any valuation function $u'_i$
with the property that for all pairs of candidates $j, j'$, it is the
case that $u_i(j) < u_i(j')$ if and only if $u'_i(j) <
u'_i(j')$. Informally, the behavior of an ordinal mechanism only depends on the
ranking of candidates on each ballot; not on the numerical valuations.
We let $\rmech{O}{m}{n}$ denote those mechanisms in $\mech{m}{n}$ that are ordinal. 

Following Barbera \cite{Barbera79}, we define an {\em anonymous}
 mechanism $J$ as one that does not depend on the names of
 voters. Formally, given any permutation $\pi$ on $N$,
 and any ${\bf u} \in (\valset{m})^n$, we have 
$J({\bf u}) = J(\pi \cdot {\bf u})$, where $\pi \cdot {\bf u}$
 denotes the vector $(u_{\pi(i)})_{i=1}^n$.

Similarly following Barbera \cite{Barbera79}, we define a {\em neutral}
 mechanism $J$ as one that does not depend on the names of
 candidates. Formally, given any permutation $\sigma$ on $M$, any
 ${\bf u} \in (\valset{m})^n$, and any candidate $j$, we have $J({\bf u})_{\sigma(j)} = J(u_1 \circ \sigma, u_2 \circ \sigma, \ldots, u_n \circ \sigma)_j$.

Following \cite{Gibbard77,Barbera98}, a {\em unilateral} mechanism is a mechanism for which there
exists a single voter $i^*$ so that for all valuation profiles
$(u_{i^*}, u_{-i^*})$ and any alternative valuation profile $u'_{-i^*}$
for the voters except $i^*$, we have $J(u_{i^*}, u_{-i^*}) = J(u_{i^*}, u'_{-i^*})$.
Note that $i^*$ is {\em not} allowed to be chosen at random in the
definition of a unilateral mechanism. In this paper, we shall say that
a mechanism is {\em mixed-unilateral} if it is 
a convex combination of unilateral truthful
mechanisms.  Mixed-unilateral mechanisms are quite attractive
seen through the ``computer science lens'': They are mechanisms of
{\em low query complexity}; consulting only a single randomly chosen
voter, and therefore deserve special attention in their own right.
We let $\rmech{U}{m}{n}$ denote those mechanisms in $\mech{m}{n}$ that
are mixed-unilateral. Also, we let $\rmech{OU}{m}{n}$ denote
those mechanisms in $\mech{m}{n}$ that are ordinal as well as mixed-unilateral.

Following Gibbard \cite{Gibbard77}, a {\em duple} mechanism
$J$ is an ordinal\footnote{Barbera {\em et al.} \cite{Barbera98} gave a much more general definition of duple mechanism; their duple mechanisms are not restricted to be ordinal. In this paper, ``duple'' refers exclusively to Gibbard's original notion.}
mechanism for which there exist two candidates $j^*_1$ and
$j^*_2$ so that for all valuation profiles, $J$ elects all other candidates with probability $0$.


We next give names to some specific important mechanisms.
We let $\unilat{m}{n}{q} \in \rmech{OU}{m}{n}$ be the mechanism for $m$ candidates and $n$ voters that picks a voter uniformly at random, and elects uniformly at random a candidate among his $q$ most preferred candidates. 
We let {\em random-favorite} be a nickname for $\unilat{m}{n}{1}$ and {\em random-candidate} be a nickname for $\unilat{m}{n}{m}$.
We let $\duple{m}{n}{q} \in \rmech{O}{m}{n}$, for $\lfloor n/2 \rfloor + 1 \leq q \leq n + 1$, be the mechanism for $m$ candidates and $n$ voters that picks two candidates
  uniformly at random and eliminates all other candidates. 
  It then checks for each voter which of the two candidates
  he prefers and gives that candidate a ``vote''. If a candidate gets at least $q$ votes, she is elected. Otherwise, a coin is flipped to
  decide which of the two candidates is elected. We let {\em random-majority} be a nickname for
  $\duple{m}{n}{\lfloor n/2 \rfloor + 1}$. Note also that $\duple{m}{n}{n+1}$ is just another name for {\em random-candidate}.
Finally, we shall be interested in the following mechanism $Q_n$ for three candidates shown to be in $\rmech{U}{3}{n}$ by Feige and Tennenholtz \cite{Feige10}: Select a voter uniformly at random, and let $\alpha$ be the valuation of his second most preferred candidate. Elect his most preferred candidate with probability $(4-\alpha^2)/6$, his second most preferred candidate with probability $(1+2 \alpha)/6$ and his least preferred candidate with probability $(1-2 \alpha + \alpha^2)/6$. We let {\em quadratic-lottery} be a nickname for $Q_n$. Note that {\em quadratic-lottery} is not ordinal. Feige and Tennenholtz \cite{Feige10} in fact presented several explicitly given non-ordinal one-voter truthful mechanisms, but {\em quadratic-lottery} is particularly amenable to an approximation ratio analysis due to the fact that the election probabilities are quadratic polynomials. 

We let $\mratio(J)$ denote the approximation ratio of a mechanism $J \in \mech{m}{n}$, when the objective is social welfare. That is,
\[ \mratio(J) = \inf_{{\bf u} \in {\valset{m}}^n} \frac{E[\sum_{i=1}^n u_i(J({\bf u}))]}{\max_{j \in M}\sum_{i=1}^n u_i(j)}. \]

We let $\ratio{m}{n}$ denote the best possible approximation ratio when there are $n$ voters and $m$ candidates. That is, 
$\ratio{m}{n} = \sup_{J \in \mbox{\small \rm Mech}_{m,n}} \mratio(J)$.
Similarly, we let $\gratio{C}{m}{n} = \sup_{J \in 
\mbox{\small \rm Mech}^{\bf C}_{m,n}} \mratio(J),$
for {\bf C} being either {\bf O}, {\bf U} or {\bf OU}. 
We let $\aratio{m}$ denote the asymptotically best possible approximation ratio when the number of voters approaches infinity. That is, $\aratio{m} = \liminf_{n \rightarrow \infty} \ratio{m}{n}$,
and we also extend this notation to the restricted classes of mechanisms with the obvious notation $\agratio{O}{m}, \agratio{U}{m}$ and $\agratio{OU}{m}$. 

The importance of neutral and anonymous mechanisms is apparent from the following simple lemma:
\begin{lemma}\label{lem-naenough}
For all $J \in \mech{m}{n}$,  there is a $J' \in \mech{m}{n}$ so that $J'$ is anonymous and neutral and so that $\mratio(J') \geq \mratio(J)$. Similarly, 
for all $J \in \rmech{C}{m}{n}$,  there is $J' \in \rmech{C}{m}{n}$ so that $J'$ is anonymous and neutral and so that $\mratio(J') \geq \mratio(J)$, for {\bf C} being either {\bf O}, {\bf U} or {\bf OU}.
\end{lemma}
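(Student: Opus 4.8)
The plan is to obtain $J'$ from $J$ by \emph{symmetrizing} over all relabelings of voters and of candidates. For a permutation $\pi$ of $N$ and a permutation $\sigma$ of $M$, first observe that $u \mapsto u \circ \sigma$ maps $\valset{m}$ to itself, since composing with a permutation preserves injectivity and the property that $0$ and $1$ lie in the image; hence we may define a mechanism $J^{\pi,\sigma}$ by
\[ J^{\pi,\sigma}(u_1,\dots,u_n)_j \;=\; J\bigl(u_{\pi(1)}\circ\sigma,\dots,u_{\pi(n)}\circ\sigma\bigr)_{\sigma^{-1}(j)}, \]
and set $J' = \frac{1}{n!\,m!}\sum_{\pi,\sigma} J^{\pi,\sigma}$, the average over all $\pi$ and all $\sigma$. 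Since $\mech{m}{n}$ is a convex subset of the space of maps $\valset{m}^n \to {\mathbb R}^m$ (as noted in the text), and since each of $\rmech{O}{m}{n}$, $\rmech{U}{m}{n}$, $\rmech{OU}{m}{n}$ is closed under convex combination (ordinality is a linear identity $J(\cdot)=J(\cdot)$; mixed-unilaterality and the class OU are by definition convex hulls), it will suffice to prove: (i) each $J^{\pi,\sigma} \in \mech{m}{n}$; (ii) $\mratio(J^{\pi,\sigma}) = \mratio(J)$, which by linearity of the numerator gives $\mratio(J') \geq \min_{\pi,\sigma}\mratio(J^{\pi,\sigma}) = \mratio(J)$; (iii) $J'$ is anonymous and neutral; and (iv) if $J$ lies in one of the restricted classes, so does each $J^{\pi,\sigma}$, and hence so does $J'$.

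For (i): writing ${\bf w}=(u_{\pi(l)}\circ\sigma)_l$ and substituting $k=\sigma^{-1}(j)$, a short computation gives $E[u_i(J^{\pi,\sigma}({\bf u}))] = \sum_k (u_i\circ\sigma)(k)\,J({\bf w})_k = E[w_{l^*}(J({\bf w}))]$ where $l^* = \pi^{-1}(i)$, using that $w_{l^*}=u_i\circ\sigma$; a deviation by voter $i$ to $\tilde u_i$ in $J^{\pi,\sigma}$ corresponds exactly to voter $l^*$ reporting $\tilde u_i\circ\sigma \in \valset{m}$ in $J$ with all other reports fixed, so truthfulness of $J$ for voter $l^*$ yields truthfulness of $J^{\pi,\sigma}$ for voter $i$. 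For (ii): for a fixed ${\bf u}$, because $\{u_1,\dots,u_n\}$ becomes $\{w_1,\dots,w_n\}$ as a multiset after composing every entry with $\sigma$, we get $E[\sum_i u_i(J^{\pi,\sigma}({\bf u}))] = E[\sum_l w_l(J({\bf w}))]$, and $\max_j\sum_i u_i(j) = \max_k\sum_l w_l(k)$ since $j\mapsto\sigma(j)$ is a bijection of $M$; thus the objective ratio at ${\bf u}$ for $J^{\pi,\sigma}$ equals that at ${\bf w}$ for $J$, and since ${\bf u}\mapsto{\bf w}$ is a bijection of $\valset{m}^n$ we conclude $\mratio(J^{\pi,\sigma})=\mratio(J)$. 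Finally, for a fixed ${\bf u}$ the numerator $E[\sum_i u_i(J({\bf u}))]$ is linear in $J({\bf u})\in\Delta_m$, so the ratio at ${\bf u}$ of a convex combination of mechanisms is the corresponding convex combination of their ratios at ${\bf u}$, and taking the infimum over ${\bf u}$ gives the claimed bound on $\mratio(J')$.

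For (iii): averaging over $S_n$ makes $J'$ anonymous, since $J^{\pi,\sigma}(\rho\cdot{\bf u}) = J^{\rho\pi,\sigma}({\bf u})$ and $\rho\pi$ ranges over $S_n$ as $\pi$ does; averaging over $S_m$ makes $J'$ neutral, since $J^{\pi,\sigma}(u_1\circ\tau,\dots,u_n\circ\tau)_j = J^{\pi,\tau\sigma}({\bf u})_{\tau(j)}$ and $\tau\sigma$ ranges over $S_m$ as $\sigma$ does; both conclusions match the definitions in the text verbatim. For (iv): if $J$ is ordinal, then $J^{\pi,\sigma}$ is ordinal, because replacing some $u_i$ by an order-equivalent $u'_i$ replaces $u_i\circ\sigma$ by the order-equivalent $u'_i\circ\sigma$ as an argument of $J$; if $J$ is unilateral with distinguished voter $i^*$, then $J^{\pi,\sigma}$ is unilateral with distinguished voter $\pi(i^*)$; and $J\mapsto J^{\pi,\sigma}$ is a linear operation, hence sends convex combinations of unilateral truthful mechanisms to convex combinations of unilateral truthful mechanisms, and likewise preserves OU. The argument is in the end a routine symmetrization; the only steps that require genuine care are verifying that truthfulness survives the candidate-relabeling — this is precisely where one uses that valuations remain canonically normalized under composition with $\sigma$ and that a deviation is transported faithfully through the twist — and confirming that the social-welfare ratio is invariant under both relabelings simultaneously.
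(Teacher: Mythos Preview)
Your proof is correct and follows exactly the same symmetrization idea as the paper's own proof: average $J$ over all permutations of voters and candidates. The paper only gives a one-sentence sketch of this argument, whereas you have carefully verified the details (truthfulness transports through the twist, the ratio is invariant, the averaged mechanism is anonymous and neutral, and the restricted classes are preserved); nothing is missing or mistaken.
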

\begin{proof}
Given any mechanism $J$, we can ``anonymize'' and ``neutralize'' $J$
by applying a uniformly chosen random permutation to the set of
candidates and an independent uniformly chosen random permutation to
the set of voters before applying $J$. This yields an anonymous and
neutral mechanism $J'$ with at least a good an approximation ratio as
$J$. Also, if $J$ is ordinal and/or mixed-unilateral, then so is $J'$.
\end{proof}

Lemma \ref{lem-naenough} makes the characterizations of the following theorem very useful.
\begin{theorem}\label{thm-na}
The set of anonymous and neutral mechanisms in $\rmech{OU}{m}{n}$ is equal to the set of convex combinations of the mechanisms
$\unilat{m}{n}{q}$, for $q \in \{1,\ldots,m\}$. 
Also, the set of anonymous and neutral mechanisms in $\mech{m}{n}$ that can be obtained as convex combinations of duple mechanisms is equal to the set of convex combinations of the mechanisms 
$\duple{m}{n}{q}$, for $q \in
\{\lfloor n/2 \rfloor + 1, \lfloor n/2 \rfloor + 2, \ldots,
n, n+1\}$.
\end{theorem}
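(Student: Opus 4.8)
The plan is to prove each of the two claimed equalities by establishing the two inclusions separately; ``$\supseteq$'' is a routine verification and ``$\subseteq$'' is the substantive part, handled by a symmetrization argument followed by a convex‑hull computation. For ``$\supseteq$'': each $\unilat{m}{n}{q}$ is anonymous, neutral, ordinal, and equals the uniform‑over‑voters convex combination of the unilateral truthful mechanisms ``elect uniformly at random among voter $i$'s top $q$ candidates'', so it lies in $\rmech{OU}{m}{n}$; each $\duple{m}{n}{q}$ with $q\ge\lfloor n/2\rfloor+1$ is anonymous, neutral, and equals a uniform‑over‑pairs convex combination of duple mechanisms (for $q>n/2$ at most one of the two candidates can reach $q$ votes, so the threshold rule is monotone in each voter's vote and hence truthful). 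Since $\mech{m}{n}$, $\rmech{OU}{m}{n}$, anonymity and neutrality are all preserved under convex combination, the convex hulls of these two families are contained in the two asserted classes.

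For ``$\subseteq$'' of the first statement, let $J$ be anonymous, neutral, ordinal and mixed‑unilateral. Grouping the unilateral truthful mechanisms in a convex decomposition of $J$ by their distinguished voter, and using that a convex combination of truthful mechanisms is truthful and a convex combination of mechanisms that are all unilateral with the same distinguished voter $i$ is again such, we write $J({\bf u})=\sum_{i\in N}\mu_i\,G_i(u_i)$ with each $G_i$ a one‑voter truthful mechanism and $\sum_i\mu_i=1$. Averaging this identity over all permutations of the voters and invoking anonymity of $J$ collapses it to $J({\bf u})=\tfrac1n\sum_{i\in N}H(u_i)$ with $H:=\sum_i\mu_i G_i$ again a one‑voter truthful mechanism; ordinality and neutrality of $J$ then transfer to $H$ coordinatewise. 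By neutrality, an ordinal one‑voter mechanism is completely described by a vector $(p_1,\dots,p_m)\in\Delta_m$ via ``elect the candidate the voter ranks $r$‑th with probability $p_r$'', and by the rearrangement inequality it is truthful precisely when $p_1\ge p_2\ge\cdots\ge p_m$. The set of such vectors is the convex hull of $v^{(q)}:=(\underbrace{1/q,\dots,1/q}_{q},0,\dots,0)$, $q=1,\dots,m$: for monotone $p$ the coefficients $\lambda_q:=q(p_q-p_{q+1})$ for $q<m$ and $\lambda_m:=m\,p_m$ are nonnegative, sum to $1$ by Abel summation, and satisfy $\sum_q\lambda_q v^{(q)}=p$. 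Since $v^{(q)}$ is exactly the one‑voter mechanism underlying $\unilat{m}{n}{q}$, averaging over voters turns the decomposition of $H$ into $J=\sum_{q=1}^m\lambda_q\,\unilat{m}{n}{q}$.

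For ``$\subseteq$'' of the second statement we run the same scheme one level up. Grouping the duples in a convex decomposition of $J$ by their distinguished pair gives $J=\sum_P\mu_P L_P$ with each $L_P$ a truthful duple mechanism supported on the pair $P$; truthfulness forces $L_P$ to depend only on each voter's relative ranking of the two candidates of $P$ (otherwise a voter could gain by misreporting the position of an irrelevant candidate). Averaging over voter permutations and using anonymity makes each $L_P$ anonymous, hence described by a function $f_P\colon\{0,1,\dots,n\}\to[0,1]$ giving the probability that a fixed candidate of $P$ wins as a function of the number of voters preferring it, and the truthfulness constraint is exactly that $f_P$ is non‑decreasing. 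Averaging over candidate permutations and using neutrality makes the pair uniformly random and leaves a single non‑decreasing function $f$ subject to the anti‑symmetry $f(k)+f(n-k)=1$. It then suffices to show that the polytope of non‑decreasing $f\colon\{0,\dots,n\}\to[0,1]$ with $f(k)+f(n-k)=1$ is the convex hull of the step functions underlying $\duple{m}{n}{q}$ for $q\in\{\lfloor n/2\rfloor+1,\dots,n+1\}$: writing $s=\lfloor n/2\rfloor$, the anti‑symmetry and monotonicity leave as free data exactly the non‑decreasing tuple $\tfrac12\le f(s+1)\le\cdots\le f(n)\le 1$, i.e.\ a point of an ``ordered box'' whose vertices are the tuples $(\tfrac12,\dots,\tfrac12,1,\dots,1)$, and these vertices correspond precisely to the mechanisms $\duple{m}{n}{q}$; decomposing $f$ over them and reassembling yields $J$ as a convex combination of the $\duple{m}{n}{q}$.

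I expect the main obstacle to be the careful bookkeeping in the two symmetrization steps of the second statement — in particular verifying that averaging over $\mathrm{Sym}(M)$ simultaneously uniformizes the distribution over pairs and produces the relation $f(k)+f(n-k)=1$ — together with correctly matching the vertices of the ``ordered box'' polytope to the mechanisms $\duple{m}{n}{q}$ for both parities of $n$ (the parity affects whether the midpoint value $f(n/2)=\tfrac12$ is forced). By contrast, the one‑voter and one‑pair truthfulness characterizations, via the rearrangement inequality and coordinatewise monotonicity, and the two convex‑hull computations are short once the setup is in place.
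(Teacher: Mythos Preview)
Your proposal is correct, and it takes a genuinely different route from the paper's proof. The paper does not argue from first principles at all: it simply invokes Barbera's results that the anonymous neutral mechanisms in $\rmech{OU}{m}{n}$ are exactly the \emph{point voting schemes} (described by a monotone vector $a_1\ge\cdots\ge a_m\ge 0$ summing to $1$) and that the anonymous neutral convex combinations of duple mechanisms are exactly the \emph{supporting size schemes} (described by $b_0\le\cdots\le b_n$ with $b_i+b_{n-i}=1$), and then observes in one line each that these two families coincide with the convex hulls of $\{\unilat{m}{n}{q}\}$ and $\{\duple{m}{n}{q}\}$ respectively. In other words, the paper outsources precisely the two ``$\subseteq$'' directions that you work out in detail.

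What you do instead is essentially reprove Barbera's characterization by the symmetrization argument (averaging over voter permutations to collapse to a single one-voter mechanism $H$, respectively a single per-pair rule $f_P$, and then over candidate permutations to obtain neutrality of $H$ and the uniform pair together with the anti-symmetry $f(k)+f(n-k)=1$), followed by the explicit convex-hull computations via the coefficients $\lambda_q=q(p_q-p_{q+1})$ and the ordered-box vertex description. This buys self-containment and makes the result independent of the cited reference; the paper's approach buys brevity. Your own caveat about the bookkeeping in the candidate-permutation averaging for the duple case is well placed---that step is where the care is needed (getting simultaneously the uniform pair, a single $f$, and the relation $f(k)+f(n-k)=1$)---but it goes through, and the rest of your outline (the rearrangement-inequality characterization of truthful neutral one-voter ordinal mechanisms, the monotonicity characterization for the duple rule, and the two convex-hull decompositions) is standard and correct.
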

\begin{proof}
A very closely related statement was shown by Barbera \cite{Barbera78}. We sketch how to derive the theorem from that statement.

Barbera (in \cite{Barbera78}, as summarized in the proof of Theorem 1
in \cite{Barbera79}) showed that the anonymous, neutral mechanisms in
$\rmech{OU}{m}{n}$ are exactly the {\em point voting schemes}
and that the anonymous, neutral mechanism that are convex combinations of duple mechanism are exactly {\em
supporting size schemes}.
A point voting scheme is given by $m$ real
numbers $(a_j)_{j=1}^m$ summing to $1$, with $a_1 \geq a_2 \geq \cdots
\geq a_m \geq 0$. It picks a voter uniformly at random, and elects the
candidate he ranks $k$th with probability $a_k$, for $k=1,\ldots,m$. 
It is easy to see that
the point voting schemes are exactly the convex combinations of
$\unilat{m}{n}{q}$, for $q \in \{1,\ldots,m\}$.
A supporting size
scheme is given by $n+1$ real numbers $(b_i)_{i=0}^{n}$ with $b_n \geq
b_{n-1} \cdots \geq b_0 \geq 0$, and $b_i + b_{n-i} = 1$ for $i \leq
n/2$. It picks two different candidates $j_1, j_2$ uniformly at random
and elects candidate $j_k, k=1,2$ with probability $b_{s_k}$ where
$s_k$ is the number of voters than rank $j_k$ higher than
$j_{3-k}$. It is easy to see that the supporting size schemes are
exactly the convex combinations of $\duple{m}{n}{q}$, for $q \in
\{\lfloor n/2 \rfloor + 1, \lfloor n/2 \rfloor + 2, \ldots, n + 1\}$.
\end{proof}

The following corollary is immediate from Theorem \ref{thm-Gibbard77} and Theorem \ref{thm-na}.
\begin{corollary}\label{cor-ud} The ordinal, anonymous and neutral mechanisms in
$\mech{m}{n}$ are exactly the convex combinations of the mechanisms
$\unilat{m}{n}{q}$, for $q \in \{1,\ldots,m\}$ and $\duple{m}{n}{q}$,
for $q \in \{\lfloor n/2 \rfloor + 1, \lfloor n/2 \rfloor + 2, \ldots,
n\}$.
\end{corollary}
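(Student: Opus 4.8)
The plan is to prove the two inclusions of ``exactly'' separately, the trivial one first.

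For the inclusion $\supseteq$, I would simply observe that each $\unilat{m}{n}{q}$ and each $\duple{m}{n}{q}$ is ordinal, anonymous, neutral and truthful, that $\mech{m}{n}$ is convex, and that ordinality, anonymity and neutrality are all preserved under convex combinations; hence every convex combination of the listed mechanisms lies in $\rmech{O}{m}{n}$ and is anonymous and neutral. Here I would also note that $\duple{m}{n}{n+1}$ and $\unilat{m}{n}{m}$ are both just \emph{random-candidate}, which is precisely why the index $q = n+1$ is absent from the duple family on the right-hand side.

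For the main inclusion $\subseteq$, let $J$ be ordinal, anonymous and neutral. By Theorem~\ref{thm-Gibbard77} write $J = \sum_k \lambda_k U_k + \sum_\ell \mu_\ell D_\ell$ as a convex combination of truthful unilateral ordinal mechanisms $U_k$ and truthful duple mechanisms $D_\ell$. Now apply the symmetrization operation of Lemma~\ref{lem-naenough}, namely averaging over a uniformly random relabeling of the candidates together with an independent uniformly random relabeling of the voters. Since $J$ is already anonymous and neutral, this operation fixes $J$; since it is linear, it can be applied term-by-term to the right-hand side. A relabeled copy of a truthful unilateral ordinal mechanism is again truthful, ordinal and unilateral (with a possibly different distinguished voter), so the average of the relabeled copies of $U_k$ is an anonymous, neutral, ordinal, mixed-unilateral mechanism, i.e.\ an anonymous and neutral member of $\rmech{OU}{m}{n}$; by the first part of Theorem~\ref{thm-na} it is a convex combination of the $\unilat{m}{n}{q}$, $q \in \{1,\ldots,m\}$. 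Likewise a relabeled copy of a truthful duple mechanism is a truthful duple mechanism, so the average of the relabeled copies of $D_\ell$ is an anonymous and neutral convex combination of duple mechanisms, hence by the second part of Theorem~\ref{thm-na} a convex combination of the $\duple{m}{n}{q}$, $q \in \{\lfloor n/2\rfloor+1,\ldots,n+1\}$.

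Substituting these expansions back into $J = \sum_k \lambda_k U_k + \sum_\ell \mu_\ell D_\ell$ (after symmetrization) exhibits $J$ as a convex combination of the $\unilat{m}{n}{q}$, $q \in \{1,\ldots,m\}$, and the $\duple{m}{n}{q}$, $q \in \{\lfloor n/2\rfloor+1,\ldots,n+1\}$; finally, using $\duple{m}{n}{n+1} = \unilat{m}{n}{m}$, any weight on $q = n+1$ in the duple family is folded into the weight on $\unilat{m}{n}{m}$, removing that index and giving exactly the stated form. The only points that need care — and the reason this is ``immediate'' rather than instantaneous — are verifying that relabeling preserves the ``unilateral'', ``duple'', ``ordinal'' and ``truthful'' labels used above, tracking that symmetrization turns a single unilateral (resp.\ duple) mechanism into a convex combination of such so that Theorem~\ref{thm-na} is applicable, and the bookkeeping identifying $\duple{m}{n}{n+1}$ with $\unilat{m}{n}{m}$ to pin down the index ranges. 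If one wishes to avoid any worry about infinite mixtures in Gibbard's decomposition, one may first pass to the finite-dimensional space of ordinal mechanisms (which depend only on the $m!$ rankings of each of the $n$ ballots) and invoke Carathéodory, but this is inessential.
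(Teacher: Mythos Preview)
Your proposal is correct and is precisely the argument the paper has in mind: the paper's entire proof is the sentence ``immediate from Theorem~\ref{thm-Gibbard77} and Theorem~\ref{thm-na}'', and you have simply unpacked what ``immediate'' means, including the symmetrization step and the bookkeeping that identifies $\duple{m}{n}{n+1}$ with $\unilat{m}{n}{m}$.
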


We next present some lemmas that allow us to understand the asymptotic behavior of $\ratio{m}{n}$ and $\gratio{C}{m}{n}$ for fixed $m$ and large $n$, for {\bf C} being either {\bf O}, {\bf U} or {\bf OU}.

\begin{lemma}\label{lem-scale}
For any positive integers $n,m,k$, we have
$\ratio{m}{kn} \leq
\ratio{m}{n}$ and $\gratio{C}{m}{kn} \leq \gratio{C}{m}{n}$, for {\bf C} being either {\bf O}, {\bf U} or {\bf OU}. 
\end{lemma}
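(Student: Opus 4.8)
The plan is to use a standard vote-replication argument. Given a mechanism $J \in \mech{m}{kn}$, I would define a mechanism $J' \in \mech{m}{n}$ by replicating each ballot $k$ times: for a profile $\mathbf{u} = (u_1,\ldots,u_n) \in \valset{m}^n$, set $J'(\mathbf{u}) := J(\underbrace{u_1,\ldots,u_1}_{k},\underbrace{u_2,\ldots,u_2}_{k},\ldots,\underbrace{u_n,\ldots,u_n}_{k})$. Write $\mathbf{u}^{\times} \in \valset{m}^{kn}$ for this replicated profile. The key numerical observation is that for every candidate $j$ we have $\sum_{\ell=1}^{kn} u^{\times}_{\ell}(j) = k\sum_{i=1}^{n} u_i(j)$, so both the numerator and the denominator of the approximation ratio scale by exactly $k$; hence the ratio achieved by $J'$ at $\mathbf{u}$ equals the ratio achieved by $J$ at $\mathbf{u}^{\times}$. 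Since the replicated profiles form a subset of $\valset{m}^{kn}$, taking infima gives $\mratio(J') \geq \mratio(J)$, and therefore $\ratio{m}{n} \geq \mratio(J') \geq \mratio(J)$; taking the supremum over $J \in \mech{m}{kn}$ yields $\ratio{m}{kn} \leq \ratio{m}{n}$.

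The step that needs care is truthfulness of $J'$. A deviation by a single voter $i$ in the $n$-voter world corresponds to a simultaneous deviation of all $k$ of voter $i$'s clones in the $kn$-voter world, which is not a single-voter deviation for $J$, so I cannot invoke truthfulness of $J$ directly. Instead I would use a hybrid (telescoping) argument: fix $u_{-i}$ and an arbitrary misreport $\tilde u_i$, and let $\mathbf{v}^{(\ell)}$ be the $kn$-voter profile in which the first $\ell$ clones of voter $i$ report the true $u_i$ and the remaining $k-\ell$ report $\tilde u_i$, with every other voter reporting truthfully. Then $\mathbf{v}^{(0)}$ and $\mathbf{v}^{(k)}$ are exactly the replicated profiles underlying $J'(\tilde u_i, u_{-i})$ and $J'(u_i,u_{-i})$, and $\mathbf{v}^{(\ell)}$ differs from $\mathbf{v}^{(\ell-1)}$ only in the report of clone number $\ell$, whose true valuation is $u_i$. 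Truthfulness of $J$ applied to that clone gives $E[u_i(J(\mathbf{v}^{(\ell)}))] \geq E[u_i(J(\mathbf{v}^{(\ell-1)}))]$ for each $\ell$; chaining these $k$ inequalities yields $E[u_i(J'(u_i,u_{-i}))] \geq E[u_i(J'(\tilde u_i,u_{-i}))]$, so $J' \in \mech{m}{n}$.

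Finally, for the restricted-class inequality I would check that replication preserves each of the properties $\mathbf{O}$, $\mathbf{U}$, and $\mathbf{OU}$: if $J$ is ordinal then $J'$ is ordinal, since the ranking on each clone ballot coincides with the ranking on the corresponding original ballot; if $J$ is unilateral with distinguished voter $i^{*}$, a clone of some original voter $i$, then $J'$ depends only on $u_i$ and is a truthful single-ballot map, hence unilateral and truthful; and forming a convex combination commutes with replication componentwise, so mixed-unilateral (respectively ordinal-and-mixed-unilateral) is preserved. Thus $J' \in \rmech{C}{m}{n}$ whenever $J \in \rmech{C}{m}{kn}$, and combining with the approximation-ratio computation gives $\gratio{C}{m}{kn} \leq \gratio{C}{m}{n}$ for $\mathbf{C}$ being $\mathbf{O}$, $\mathbf{U}$, or $\mathbf{OU}$. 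The only genuinely non-routine point is the hybrid argument establishing truthfulness of $J'$; everything else is bookkeeping.
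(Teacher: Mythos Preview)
Your proposal is correct and follows essentially the same approach as the paper: define $J'$ by $k$-fold ballot replication, observe that the approximation ratio is preserved because both numerator and denominator scale by $k$, and establish truthfulness of $J'$ via a hybrid (telescoping) argument that changes the report of one clone at a time. The paper runs the hybrid in the opposite direction (starting from the truthful profile and switching clones to the misreport one by one), but this is purely cosmetic; your version is equally valid. Your explicit verification that replication preserves the classes $\mathbf{O}$, $\mathbf{U}$, and $\mathbf{OU}$ is slightly more detailed than the paper's one-line remark, but the content is the same.
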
 

\begin{proof}
Suppose
we are given any mechanism $J$ in $\mech{m}{kn}$ with approximation ratio
$\alpha$. We will convert it to a mechanism $J'$ in $\mech{m}{n}$
with the same approximation ratio, hence proving $\ratio{m}{kn} \leq
\ratio{m}{n}$. The natural idea is to let $J'$ simulate $J$ on the profile where we simply make $k$ copies of each of the $n$ ballots. More specifically, let $\mathbf{u'}=\left(u'_1,\ldots,u'_n\right)$ be a valuation profile with $n$ voters and $\mathbf{u}=\left(u_1,\ldots,u_{kn}\right)$ be a valuation profile with $kn$ voters, such that $u_{ik+1}=u_{ik+2}=\ldots=u_{(i+1)k}=u'_{i
+1}$, for $i=0,\ldots,n-1$, where ``$=$'' denotes component-wise equality. Then let $J'(\mathbf{u'})=J(\mathbf{u})$. To complete the proof, we need to prove that if $J$ is truthful, $J'$ is truthful as well.

Let $\mathbf{u}=\left(u_1,\ldots,u_{kn}\right)$ be the profile defined above for $kn$ agents and let $\mathbf{u'}$ be the corresponding $n$ agent profile. We will consider deviations of agents with the same valuation functions to the same misreported valuation vector $\hat{u}$; without loss of generality, we can assume that these are agents $1,\ldots,k$. For ease of notation, let $\mathbf{u_{i+1}}=(u_{ik+1}=u_{ik+2}=\ldots=u_{(i+1)k})$ be a \emph{block} of valuation functions, for $i=0,\ldots,n-1$ and note that given this notation, we can write $\mathbf{u}=\left(\mathbf{u_1},\mathbf{u_2},\ldots,\mathbf{u_n}\right)=\left(u_1,\ldots,u_k,\mathbf{u_2},\ldots,\mathbf{u_n}\right)$. Let $v^{*}=E[u_i(J(\mathbf{u}))]$.  Now consider the profile $\left(\hat{u},u_2,\ldots,u_k,\mathbf{u_2},\ldots,\mathbf{u_n}\right)$. By truthfulness, it holds that agent $1$'s expected utility in the new profile (and with respect to $u_1$) is at most $v^{*}$. Next, consider the profile $\left(\hat{u},\hat{u},u_3,\ldots,u_k,\mathbf{u_2},\ldots,\mathbf{u_n}\right)$ and observe that agent $2$'s utility from misreporting should be at most her utility before misreporting, which is at most $v^{*}$. Continuing like this, we obtain the valuation profile $\left(\hat{u},\hat{u},\ldots,\hat{u},\mathbf{u_2},\ldots,\mathbf{u_n}\right)$ in which the expected utility of agents $1,\ldots,k$ is at most $v^{*}$ and hence no deviating agent gains from misreporting. Now observe that the new profile $\left(\hat{u},\hat{u},\ldots,\hat{u},\mathbf{u_2},\ldots,\mathbf{u_n}\right)$ corresponds to an $n$-agent profile $(\mathbf{\hat{u_i}',u_{-i}'})=\left(\hat{u}_1',u'_2,\ldots,u'_n\right)$ which is obtained from $\mathbf{u'}$ by a single miresport of agent $1$. By the discussion above and the way $J'$ was constructed, agent $1$ does not benefit from this misreport and since the misreported valuation function was arbitrary, $J'$ is truthful.

The same proof works for $\gratio{C}{m}{kn} \leq \gratio{C}{m}{n}$, for {\bf C} being either {\bf O}, {\bf U} or {\bf OU}.     
\end{proof}

\begin{lemma}\label{lem-kill}
For any $n,m$ and $k < n$, we have
$\ratio{m}{n} \geq \ratio{m}{n-k} - \frac{km}{n}.$ 
Also, 
$\gratio{C}{m}{n} \geq \gratio{C}{m}{n-k} - \frac{km}{n},$
for {\bf C} being either {\bf O}, {\bf U}, or {\bf OU}.
\end{lemma}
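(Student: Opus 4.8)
The plan is to take a good mechanism for $n-k$ voters and lift it to a mechanism for $n$ voters by simply ignoring $k$ of the voters. Concretely, given any $J \in \mech{m}{n-k}$ (or in $\rmech{C}{m}{n-k}$), I would define $J' \colon \valset{m}^n \to \Delta_m$ by $J'(u_1,\ldots,u_n) = J(u_1,\ldots,u_{n-k})$. First I would check that $J'$ is truthful: voters $1,\ldots,n-k$ interact with $J'$ exactly as they interact with $J$, so their truthfulness is inherited from $J$; voters $n-k+1,\ldots,n$ have no effect at all on the outcome, so no misreport can increase their expected utility. Moreover, if $J$ is ordinal then so is $J'$ (it depends only on the rankings of the first $n-k$ ballots), and if $J$ is a convex combination of unilateral truthful mechanisms each reading one of the first $n-k$ ballots, then so is $J'$; hence $J' \in \rmech{C}{m}{n}$ whenever $J \in \rmech{C}{m}{n-k}$, for {\bf C} being {\bf O}, {\bf U}, or {\bf OU}.

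Next I would bound $\mratio(J')$ in terms of $\mratio(J)$. Fix a profile ${\bf u} = (u_1,\ldots,u_n) \in \valset{m}^n$, let $j^*$ be a socially optimal candidate for ${\bf u}$, and write $\mathrm{OPT} = \sum_{i=1}^n u_i(j^*)$. Since the valuation functions are nonnegative, the expected social welfare of $J'$ on ${\bf u}$ (summed over all $n$ voters) is at least the expected social welfare of $J$ on $(u_1,\ldots,u_{n-k})$ (summed over the first $n-k$ voters), which by definition of $\mratio(J)$ is at least $\mratio(J) \cdot \max_j \sum_{i=1}^{n-k} u_i(j) \geq \mratio(J)\,(\mathrm{OPT} - k)$, using that the $k$ discarded voters contribute at most $k$ to the total score of $j^*$. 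On the other hand, each voter assigns value $1$ to some candidate, so $\sum_{j=1}^m \sum_{i=1}^n u_i(j) \geq n$ and therefore $\mathrm{OPT} \geq n/m$.

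Combining these and using $\mratio(J) \leq 1$, the ratio achieved by $J'$ on ${\bf u}$ is at least
\[ \mratio(J)\cdot\frac{\mathrm{OPT}-k}{\mathrm{OPT}} \;\geq\; \mratio(J) - \frac{k}{\mathrm{OPT}} \;\geq\; \mratio(J) - \frac{km}{n}. \]
Taking the infimum over ${\bf u}$ gives $\mratio(J') \geq \mratio(J) - km/n$, and taking the supremum over all admissible $J$ yields $\ratio{m}{n} \geq \ratio{m}{n-k} - km/n$ and likewise $\gratio{C}{m}{n} \geq \gratio{C}{m}{n-k} - km/n$; if the relevant supremum is not attained, one applies the inequality to mechanisms with ratio within $\varepsilon$ of it and lets $\varepsilon \to 0$.

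There is no serious obstacle here; the only points requiring a moment's care are (i) that dropping voters genuinely preserves truthfulness and membership in the restricted classes {\bf O}, {\bf U}, {\bf OU}, and (ii) the passage from the multiplicative factor $(\mathrm{OPT}-k)/\mathrm{OPT}$ to the clean additive loss $km/n$, which relies on the easy lower bound $\mathrm{OPT} \geq n/m$ together with $\mratio(J) \leq 1$.
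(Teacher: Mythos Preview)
Your proof is correct and follows essentially the same approach as the paper: lift a mechanism for $n-k$ voters to one for $n$ voters by discarding $k$ ballots, then bound the loss via $\mathrm{OPT} \geq n/m$ and $\alpha(\mathrm{OPT}-k)/\mathrm{OPT} \geq \alpha - km/n$. The only cosmetic difference is that the paper discards $k$ voters chosen uniformly at random (and randomly relabels the rest) whereas you deterministically drop the last $k$; the randomization is not needed for the bound, and your verification of truthfulness and of membership in the classes {\bf O}, {\bf U}, {\bf OU} is clean and correct.
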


\begin{proof}
We construct a mechanism $J'$ in $\mech{m}{n}$ from a mechanism $J$ in
$\mech{m}{n-k}$. The mechanism $J'$ simply simulates $J$ after removing
$k$ voters, chosen uniformly at random and randomly mapping the remaining voters to $\{1,\ldots,n\}$.
In particular, if $J$ is ordinal (or mixed-unilateral, or both)
then so is $J'$. Suppose $J$ has approximation ratio
$\alpha$. Consider running $J'$ on any profile where the socially
optimal candidate has social welfare $w^*$. Note that $w^* \geq n/m$, since each voter assigns valuation $1$ to some candidate. Ignoring $k$ voters
reduces the social welfare of any candidate by at most $k$, so $J'$ is
guaranteed to return a candidate with expected social welfare at least
$\alpha (w^* - k)$. This is at least a $\alpha (1 - k/w^*) \geq \alpha - \frac{km}{n}$ fraction of $w^*$. Since the profile was arbitrary, we are done.
\end{proof}

\begin{lemma}\label{lem-limit}For any $m, n \geq 2, \epsilon>0$ and all $n' \geq (n-1)m/\epsilon$, we have $\ratio{m}{n'} \leq \ratio{m}{n} + \epsilon$ and $\gratio{C}{m}{n'} \leq \gratio{C}{m}{n} + \epsilon$,
for {\bf C} being either {\bf O}, {\bf U}, or {\bf OU}.
\end{lemma}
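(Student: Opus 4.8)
The plan is to obtain the bound by chaining Lemma \ref{lem-scale} and Lemma \ref{lem-kill}: first inflate $n'$ up to the nearest multiple of $n$, which can only decrease the achievable approximation ratio (Lemma \ref{lem-scale}), and then observe that the reverse move of deleting the few extra voters costs at most $\epsilon$ in the ratio (Lemma \ref{lem-kill}). The hypothesis $n' \geq (n-1)m/\epsilon$ is exactly what is needed to make that last cost $\leq \epsilon$.

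Concretely, given $n' \geq (n-1)m/\epsilon$ (so in particular $n' \geq 1$), set $q = \lceil n'/n \rceil$ and $k = qn - n'$, so that $0 \leq k \leq n-1$ and $qn \geq n' \geq 1$. First apply Lemma \ref{lem-kill} with voter count $qn$ in place of $n$ and deletion count $k$ (legitimate since $k < qn$, and trivially true when $k=0$): since $qn - k = n'$, this gives $\ratio{m}{qn} \geq \ratio{m}{n'} - \frac{km}{qn}$, i.e. $\ratio{m}{n'} \leq \ratio{m}{qn} + \frac{km}{qn}$. Next apply Lemma \ref{lem-scale} with multiplier $q$ to get $\ratio{m}{qn} \leq \ratio{m}{n}$. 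Combining, $\ratio{m}{n'} \leq \ratio{m}{n} + \frac{km}{qn}$, and then $\frac{km}{qn} \leq \frac{(n-1)m}{qn} \leq \frac{(n-1)m}{n'} \leq \frac{(n-1)m}{(n-1)m/\epsilon} = \epsilon$, which finishes the estimate for $\ratio{m}{n'}$. The identical chain, invoking the $\textbf{C}$-versions of Lemmas \ref{lem-scale} and \ref{lem-kill}, yields $\gratio{C}{m}{n'} \leq \gratio{C}{m}{n} + \epsilon$ for $\textbf{C}$ being $\textbf{O}$, $\textbf{U}$, or $\textbf{OU}$.

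There is essentially no hard step here; the proof is just bookkeeping on top of the two preceding lemmas. The only points that need a moment of care are: checking that $k < qn$ so Lemma \ref{lem-kill} genuinely applies (and noting the $k=0$ case reduces to a direct appeal to Lemma \ref{lem-scale}); using $k \leq n-1$ together with $qn \geq n'$ to get the clean $\frac{km}{qn} \leq \epsilon$ bound; and confirming that the quantifier ``for all $n' \geq (n-1)m/\epsilon$'' is precisely the threshold at which this last inequality holds uniformly.
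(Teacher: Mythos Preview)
Your proof is correct and follows essentially the same approach as the paper: round $n'$ up to the nearest multiple $qn$ of $n$, apply Lemma~\ref{lem-scale} to get $\ratio{m}{qn}\leq\ratio{m}{n}$, and apply Lemma~\ref{lem-kill} with $qn$ voters and $k=qn-n'$ deletions to bound $\ratio{m}{n'}-\ratio{m}{qn}$ by $(n-1)m/n'\leq\epsilon$. Your write-up is in fact slightly more careful than the paper's (which contains a typo writing ``divisible by $m$'' where ``divisible by $n$'' is meant, and writes $=$ where $\leq$ is intended), and your explicit check that $k<qn$ and handling of the $k=0$ case are nice touches.
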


\begin{proof}
If $n$ divides $n'$, the statement follows from Lemma \ref{lem-scale}. Otherwise, let $n^*$ be the smallest number larger than $n'$ divisible by $m$; we have $n^* < n' + n$. By Lemma \ref{lem-scale}, we have 
$\ratio{m}{n^*} = \ratio{m}{n}$. By Lemma \ref{lem-kill}, we have
$\ratio{m}{n^*} \geq \ratio{m}{n'} - \frac{(n-1)m}{n^*}$.
Therefore, $\ratio{m}{n'} \leq \ratio{m}{n} + \frac{(n-1)m}{n^*} \leq \ratio{m}{n} + \frac{(n-1)m}{n'}$. The same arguments work for proving $\gratio{C}{m}{n'} \leq \gratio{C}{m}{n} + \epsilon$,
for {\bf C} being either {\bf O}, {\bf U}, or {\bf OU}. 
\end{proof}

In particular, Lemma \ref{lem-limit} implies that $\ratio{m}{n}$ converges to a limit as $n \rightarrow \infty$.

\subsection{Quasi-combinatorial valuation profiles}

It will sometimes be useful to restrict the set of valuation functions to a certain finite domain $\rvalset{m}{k}$ for an integer parameter $k \geq m$. Specifically, we define: 
\[ \rvalset{m}{k} = \left\{ u \in \valset{m} | \Im(u) \subseteq \{0, \frac{1}{k}, \frac{2}{k}, \ldots, \frac{k-1}{k}, 1\}\right\} \]
where $\Im(u)$ denotes the image of $u$.
Given a valuation function $u \in \rvalset{m}{k}$, we define its {\em alternation number} $\altern(u)$ as
\[ a(u) = \# \{j \in \{0,\ldots,k-1\} | [\frac{j}{k} \in \Im(u)] \oplus [\frac{j+1}{k}  \in \Im(u)] \}, \]
where $\oplus$ denotes exclusive-or. That is, the alternation number of $u$ is the number of indices $j$ for which exactly one of $j/k$ and $(j+1)/k$ is in the image of $u$. Since $k \geq m$ and $\{0,1\} \subseteq \Im(u)$, we have that the alternation number of $u$ is at least $2$. We shall be interested in the class of valuation functions $\combset{m}{k}$ with minimal alternation number. Specifically, we define:
\[  \combset{m}{k}  =  \{ u \in \rvalset{m}{k} |  \altern(u) = 2 \} \]
and shall refer to such valuation functions as {\em quasi-combinatorial valuation functions}. Informally, the quasi-combinatorial valuation functions have all valuations as close to 0 or 1 as possible.

The following lemma will be very useful in later sections. It states that in order to analyse the approximation ratio of an ordinal and neutral mechanism, it is sufficient to understand its performance on quasi-combinatorial valuation profiles.
\begin{lemma}\label{lem-quasicomb}
Let $J \in \mech{m}{n}$ be ordinal and neutral. Then
\[  \mratio(J) = \liminf_{k \rightarrow \infty} \min_{{\bf u} \in (\combset{m}{k})^n} \frac{E[\sum_{i=1}^n u_i(J({\bf u}))]}{\sum_{i=1}^n u_i(1)}. \]
\end{lemma}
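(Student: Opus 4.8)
The plan is to prove the two inequalities separately. Let $R$ denote the right-hand side of the claimed identity. The inequality $\mratio(J) \le R$ is the easy direction: each $\combset{m}{k}$ is a subset of $\valset{m}$, so for every $k$ the minimum over $(\combset{m}{k})^n$ of the welfare ratio is an infimum over a subset of the domain defining $\mratio(J)$, hence is at least $\mratio(J)$; taking $\liminf_{k\to\infty}$ preserves this. (Here I use neutrality only to the extent that it lets us assume candidate $1$ is socially optimal, so that the denominator $\sum_i u_i(1)$ equals $\max_j \sum_i u_i(j)$ on the relevant profiles — or more carefully, one notes $\sum_i u_i(1) \le \max_j \sum_i u_i(j)$ always, so the ratio with denominator $\sum_i u_i(1)$ is at least the true welfare ratio, which is $\ge \mratio(J)$.)

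The substantive direction is $\mratio(J) \ge R$. First I would fix an arbitrary profile ${\bf u} \in (\valset{m})^n$ and $\epsilon > 0$, and approximate it: round every valuation $u_i(j)$ to a nearby multiple of $1/k$ for $k$ large, obtaining ${\bf u}^{(k)} \in (\rvalset{m}{k})^n$ that preserves the ordinal type of each $u_i$ (possible since the $u_i$ are injective, so for $k$ large enough distinct values stay distinct and ordered). Since $J$ is ordinal, $J({\bf u}^{(k)}) = J({\bf u})$ in distribution, and both numerator and denominator of the welfare ratio change by $O(nm/k)$, which is negligible as $k\to\infty$ while $n,m$ are fixed; also the denominator is bounded below by $n/m$. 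So it suffices to bound the welfare ratio on profiles in $(\rvalset{m}{k})^n$, uniformly as $k\to\infty$.

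The heart of the argument is then to reduce from general profiles in $(\rvalset{m}{k})^n$ to quasi-combinatorial ones, i.e. to show that the minimum of the welfare ratio over $(\rvalset{m}{k})^n$ is attained (or approached) on $(\combset{m}{k})^n$. The key observation is that, since $J$ is ordinal, the output distribution $J({\bf u})$ depends only on the ordinal types of the ballots, while both $\sum_i u_i(J({\bf u}))$ and $\sum_i u_i(1)$ are \emph{linear} in the vector of valuation entries. Fix all the ordinal types; then the set of valuation profiles in $(\rvalset{m}{k})^n$ (or its continuous relaxation, profiles in $(\valset{m})^n$ realizing those types) consistent with those types is a polytope, and the welfare ratio, being a ratio of two linear functionals, is quasiconcave, hence its infimum over the polytope is attained at a vertex. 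A vertex of this polytope is a profile where, for each voter, the chain of inequalities $0 = u_i(\sigma_i(1)) \le u_i(\sigma_i(2)) \le \cdots \le u_i(\sigma_i(m)) = 1$ (with $\sigma_i$ the ranking) is tight except at one "jump" — i.e. each $u_i$ takes only values $0$ and $1$ except possibly one intermediate value, which is exactly the statement that $a(u_i) = 2$, i.e. $u_i$ is quasi-combinatorial. Discretizing such a vertex to $\rvalset{m}{k}$ (the one free intermediate coordinate rounded to a multiple of $1/k$) keeps it in $\combset{m}{k}$ and changes the ratio by $O(1/k)$.

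I expect the main obstacle to be making the "vertex = quasi-combinatorial" step fully rigorous: one must argue carefully that, holding ordinal types fixed, the extreme points of the feasible polytope of valuation profiles are exactly the profiles in which each individual $u_i$ has alternation number $2$, and handle the bookkeeping that rounding a vertex to the grid $\{0,1/k,\dots,1\}$ lands inside $\combset{m}{k}$ with only an $O(nm/k)$ loss. A secondary technical point is that the feasible region for a fixed ordinal type is closed and bounded but the welfare ratio's infimum could in principle be approached rather than attained if the denominator degenerates — but the denominator $\sum_i u_i(1)$ is bounded below by the number of voters whose top choice is candidate $1$, and by neutrality plus the reduction we may assume this is positive (or treat candidate $1$ as the social optimum, which has welfare $\ge n/m > 0$), so no degeneracy occurs. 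Combining: for every ${\bf u}$ and every $\epsilon$, the welfare ratio at ${\bf u}$ is at least $\min_{{\bf v} \in (\combset{m}{k})^n}(\text{ratio}) - O(nm/k)$ for all large $k$, giving $\mratio(J) \ge \liminf_k \min_{(\combset{m}{k})^n}(\cdots) = R$, as desired.
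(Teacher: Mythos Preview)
Your strategy is the same as the paper's at heart: once the ordinal types are fixed, both numerator and denominator of $g({\bf u})$ are linear in the valuation entries, so $g$ is linear-fractional and its infimum over the feasible region is attained at an extreme point. The paper, however, executes this one coordinate at a time rather than via a global polytope argument. It stays entirely inside $\rvalset{m}{k}$ and inducts on $\sum_i \altern(u_i)$: if some $u_i$ has $\altern(u_i)>2$, there is a maximal block of consecutive grid values in $\Im(u_i)$ that can be rigidly translated without changing the ordinal type; along this one-parameter family $g$ is fractional-linear in the shift, hence monotone on the interval, so sliding the block until it abuts an adjacent one does not increase $g$ and drops the alternation number. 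Every intermediate profile stays injective and on the grid, so no rounding or limiting is ever needed.

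Your version contains a concrete error in the vertex description. A vertex of the closed chain polytope $0=v_1\le v_2\le\cdots\le v_m=1$ has \emph{all} coordinates equal to $0$ or $1$: a single strict jump splits the chain into a $0$-block and a $1$-block, with no intermediate value. Such a vertex is not injective, hence not in $\valset{m}$, and the alternation number is not defined for it; so ``vertex $\Leftrightarrow \altern(u_i)=2$'' is not literally true. The correct statement is that the functions in $\combset{m}{k}$ are exactly the injective grid perturbations of these $0$--$1$ step functions and converge to them as $k\to\infty$; with that fix your rounding step goes through. You also need to say something about vertices at which every voter places candidate $1$ in the $0$-block, since there the denominator vanishes. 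The paper's coordinate-wise induction sidesteps both issues by never leaving $\rvalset{m}{k}$.
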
 
\begin{proof}
For a valuation profile ${\bf u} = (u_i)$, define 
$g({\bf u}) = \frac{E[\sum_{i=1}^n u_i(J({\bf u}))]}{\sum_{i=1}^n u_i(1)}.$
We show the following equations:
\begin{eqnarray}
\mratio(J) 
& = & \inf_{{\bf u} \in \valset{m}^n} \frac{E[\sum_{i=1}^n u_i(J({\bf u}))]}{\max_{j \in M}\sum_{i=1}^n u_i(j)} \\
& = & \inf_{{\bf u} \in \valset{m}^n} g({\bf u}) \label{eq-fixed-cand}\\ 
& = & \liminf_{k \rightarrow \infty} \min_{{\bf u} \in (\rvalset{m}{k})^n} g({\bf u}) \label{eq-discretized}\\
& = & \liminf_{k \rightarrow \infty} \min_{{\bf u} \in (\combset{m}{k})^n} g({\bf u}) \label{eq-quasicomb}
\end{eqnarray}

Equation (\ref{eq-fixed-cand}) follows from the fact that since $J$ is neutral, it is invariant over permutations of the set of candidates, so there is a worst case instance (with respect to approximation ratio) where the socially optimal candidate is candidate 1.
Equation (\ref{eq-discretized}) follows from the facts that
(a) each ${\bf u} \in (\valset{m})^n$ can be written as ${\bf u} = \lim_{k \rightarrow \infty} {\bf v}_k$ where $({\bf v}_k)$ is a sequence so that ${\bf v}_k \in (\rvalset{m}{k})^n$ and where the limit is with respect to the usual Euclidean topology (with the set of valuation functions being considered as a subset of a finite-dimensional Euclidean space), and
(b) the map $g$ is continuous in this topology (to see this, observe that the denominator in the formula for $g$ is bounded away from 0).
Finally, equation (\ref{eq-quasicomb}) follows from the following claim:
\[ \forall {\bf u}  \in (\rvalset{m}{k})^n\  \exists {\bf u}' \in (\combset{m}{k})^n :  g({\bf u}') \leq g({\bf u}). \]
With ${\bf u} = (u_1, \ldots, u_n)$,  we shall prove this claim by induction in $\sum_i \altern(u_i)$ (recall that $\altern(u_i)$ is the alternation number of $u_i$).

For the induction basis, the smallest possible value of $\sum_i \altern(u_i)$ is $2n$, corresponding to all $u_i$ being quasi-combinatorial. For this case, we let ${\bf u}'={\bf u}$.

For the induction step, consider a valuation profile ${\bf u}$ with $\sum_i \altern(u_i) > 2n$. Then, there must be an $i$ so that the alternation number $\altern(u_i)$ of $u_i$ is strictly larger than 2 (and therefore at least 4, since alternation numbers are easily seen to be even numbers). Then, there must be $r, s \in \{2,3,\ldots,k-2\}$, so that $r \leq s$, $\frac{r-1}{k} \not \in \Im(u_i)$, 
$\{\frac{r}{k}, \frac{r+1}{k}, \ldots, \frac{s-1}{k}, \frac{s}{k} \} \subseteq \Im(u_i)$ and $\frac{s+1}{k} \not \in \Im(u_i)$. Let $\tilde{r}$ be the largest number strictly smaller than $r$ for which $\frac{\tilde{r}}{k} \in \Im(u_i)$; this number exists since $0 \in \Im(u_i)$. Similarly, let $\tilde{s}$ be the smallest number strictly larger than $s$ for which $\frac{\tilde{s}}{k} \in \Im(u_i)$; this number exists since $1 \in \Im(u_i)$.  We now define a valuation function $u^x \in \valset{m}$ for any $x \in [\tilde{r}-r+1; \tilde{s}-s-1]$, as follows: $u^x$ agrees with $u_i$ on all candidates $j$ {\em not} in $u_i^{-1}(\{\frac{r}{k}, \frac{r+1}{k}, \ldots, \frac{s-1}{k}, \frac{s}{k} \})$, 
while for candidates 
$j \in u_i^{-1}(\{\frac{r}{k}, \frac{r+1}{k}, \ldots, \frac{s-1}{k}, \frac{s}{k} \})$
, we let 
$u^x(j) = u_i(j) + \frac{x}{k}$.  Now consider the function $h: x \rightarrow g((u^x, u_{-i}))$, where
$(u^x, u_{-i})$ denotes the result of replacing $u_i$ with $u^x$ in the profile ${\bf u}$. Since $J$ is ordinal, we see by inspection of the definition of the function $g$, that $h$ on the domain $[\tilde{r}-r+1; \tilde{s}-s-1]$ is a fractional linear function $x \rightarrow (ax + b)/(cx + d)$ for some $a,b,c,d \in {\mathbb R}$. As $h$ is defined on the entire interval $[\tilde{r}-r+1; \tilde{s}-s-1]$, we therefore have that $h$ is either monotonely decreasing or monotonely increasing in this interval, or possibly constant. If $h$ is monotonely increasing, we let $\tilde{\bf u} = (u^{\tilde{r}-r+1}, u_{-i})$, and apply the induction hypothesis on $\tilde{\bf u}$. If $h$ is monotonely decreasing, we let $\tilde {\bf u} =  (u^{\tilde{s}-s-1}, u_{-i})$, and apply the induction hypothesis on $\tilde{\bf u}$. If $h$ is constant on the interval, either choice works. This completes the proof.

\begin{figure}
\label{fig-quasicomb}
\begin{tikzpicture}
\usetikzlibrary{arrows}
\usetikzlibrary{shapes}
\tikzstyle{every node}=[draw=black, inner sep=1pt]
\draw[|->] (0,0)--(17,0);
\node [draw=white,inner sep=0pt,label=below:{$R_{m,k}$}] at (17,0) {};
\node [circle,label=below:{$0$}] at (1,0) {};
\node [circle,label=below:{$\frac{1}{10}$}] at (2.5,0) {};
\node [circle,label=below:{$\frac{2}{10}$}] at (4,0) {};
\node [circle,label=below:{$\frac{3}{10}$}] at (5.5,0) {};
\node [circle,label=below:{$\frac{4}{10}$}] at (7,0) {};
\node [circle,label=below:{$\frac{5}{10}$}] at (8.5,0) {};
\node [circle,label=below:{$\frac{6}{10}$}] at (10,0) {};
\node [circle,label=below:{$\frac{7}{10}$}] at (11.5,0) {};
\node [circle,label=below:{$\frac{8}{10}$}] at (13,0) {};
\node [circle,label=below:{$\frac{9}{10}$}] at (14.5,0) {};
\node [circle,label=below:{$1$}] at (16,0) {};
\node [draw=white,inner sep=0pt,label=below:{$\in \Im(u_i)$}] at (14,1.5) {};
\node [draw=white,inner sep=0pt,label=below:{$\notin \Im(u_i)$}] at (14,-0.7) {}; 

\draw[thick,dashed,fill=gray,opacity=0.2] (0.2,-1.5) to [out=-15, in=115] (3,0.5) to [out=-75,in=-115] (4.5,-1) to [out=75,in=120] (6,0.5) to [out=-75,in=-115] (12.9,0.5) to [out=75,in=115] (15.5,-0.5) to [out=-75,in=0] (17,-1.5);

\draw[|->] (0,-4)--(17,-4);
\node [draw=white,inner sep=0pt,label=below:{$R_{m,k}$}] at (17,-4) {};
\node [circle,label=below:{$0$}] at (1,-4) {};
\node [circle,label=below:{$\frac{1}{10}$}] at (2.5,-4) {};
\node [circle,label=below:{$\frac{2}{10}$}] at (4,-4) {};
\node [circle,label=below:{$\frac{3}{10}$}] at (5.5,-4) {};
\node [circle,label=below:{$\frac{4}{10}$}] at (7,-4) {};
\node [circle,label=below:{$\frac{5}{10}$}] at (8.5,-4) {};
\node [circle,label=below:{$\frac{6}{10}$}] at (10,-4) {};
\node [circle,label=below:{$\frac{7}{10}$}] at (11.5,-4) {};
\node [circle,label=below:{$\frac{8}{10}$}] at (13,-4) {};
\node [circle,label=below:{$\frac{9}{10}$}] at (14.5,-4) {};
\node [circle,label=below:{$1$}] at (16,-4) {};
\node [draw=white,inner sep=0pt,label=below:{$\in \Im(u_i)$}] at (14,-2.5) {};
\node [draw=white,inner sep=0pt,label=below:{$\notin \Im(u_i)$}] at (14,-4.9) {}; 

\draw[thick,dashed,fill=gray,opacity=0.2] (0.2,-5.5) to [out=-15, in=115] (3,-3.5) to [out=-75,in=-115] (4.5,-5) to [out=75,in=120] (9,-4) to [out=-70,in=120] (17,-5.7);

\end{tikzpicture}
\caption{Example of the induction step of the proof of Lemma \ref{lem-quasicomb} for $m=7$ and $k=10$. Here, $r=4$, $s=7$, $\tilde{r}=2$ and $\tilde{s}=10$ and hence $x \in \left[-1,2\right]$. The bottom figure depicts the induced profile when $h(x)$ is monotonely decreasing in $[-1,2]$.}
\end{figure}
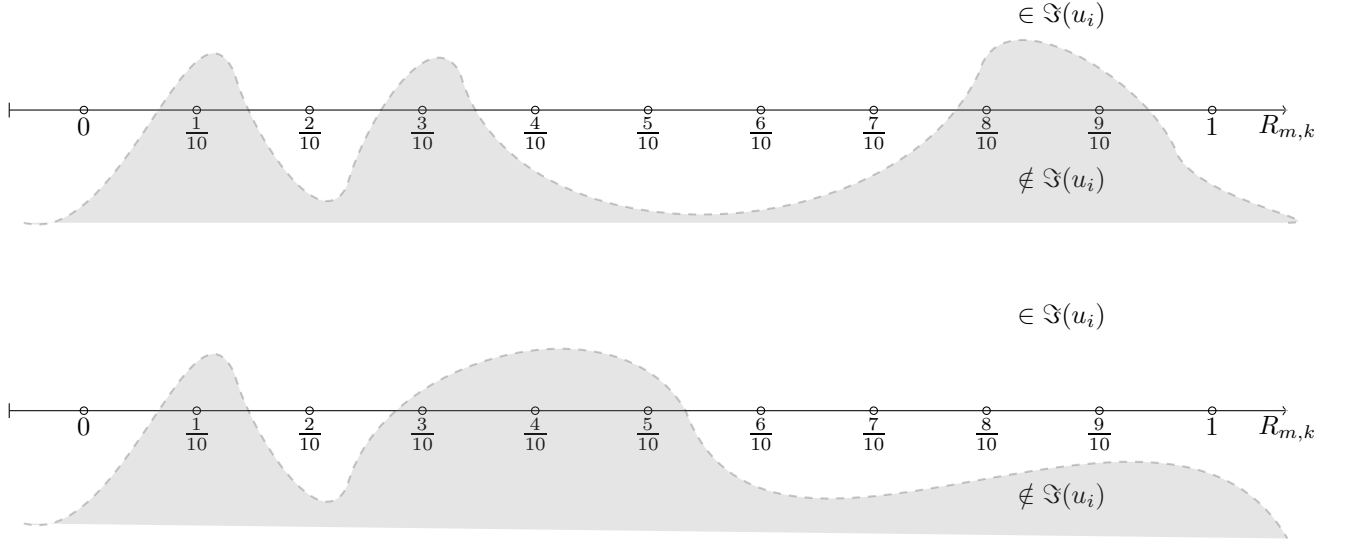

\end{proof}

\section{Mechanisms and negative results for the case of many candidates}\label{sec-many}

We can now analyze the approximation ratio of the mechanism $J \in \rmech{OU}{m}{n}$ that with probability
$3/4$ elects a uniformly random candidate and with probability
$1/4$ uniformly at random picks a voter and elects a candidate uniformly
at random from the set of his $\lfloor{m^{1/2}}\rfloor$ most
preferred candidates.  
\begin{theorem} 
Let $n \geq 2, m \geq 3$.
Let $J = \frac{3}{4} \unilat{m}{n}{m} + \frac{1}{4}
\unilat{m}{n}{\lfloor {m^{1/2}}\rfloor}$. Then, 
$\mratio(J) \geq 0.37 m^{-3/4}$.
\end{theorem}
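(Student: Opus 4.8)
The plan is to bound the expected social welfare from below by analyzing the two components of $J$ separately and then optimizing the trade-off, working with a worst-case profile in which candidate $1$ is the socially optimal candidate and has social welfare $w^* = \sum_{i=1}^n u_i(1) \geq n/m$. Write $t = \lfloor m^{1/2} \rfloor$. The first component, $\unilat{m}{n}{m}$ (random-candidate), elects candidate $1$ with probability $1/m$, contributing expected social welfare at least $w^*/m$ from that candidate alone; so $J$ gets at least $\frac{3}{4} \cdot \frac{1}{m} \cdot w^*$ from this part. This already shows $\mratio(J) \geq \frac{3}{4m}$, which is the right order only when $m$ is small; the point of the second component is to help precisely when $w^*$ is large, i.e.\ when many voters rank candidate $1$ near the top.

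The key step is to lower-bound the contribution of $\frac{1}{4}\unilat{m}{n}{t}$. Here I would set up the following dichotomy on the worst-case profile. For each voter $i$, let $\rho_i$ denote the rank that voter $i$ assigns to candidate $1$ (so $\rho_i = 1$ means candidate $1$ is $i$'s favorite). When we run $\unilat{m}{n}{t}$, we pick voter $i$ uniformly and then elect a uniformly random one of $i$'s top $t$ candidates; the expected valuation that voter $i$ himself derives is $\frac{1}{t}\sum_{\ell=1}^{t} (\text{$i$'s $\ell$th highest value})$, and since $i$'s top value is $1$, this is at least $1/t$. Summing over the random choice of voter, $\unilat{m}{n}{t}$ yields expected social welfare at least $\frac{1}{t}\sum_{i} 1 = n/t$ just from the valuations voters assign to the elected candidate for themselves — but more usefully, I would instead track candidate $1$: the probability $\unilat{m}{n}{t}$ elects candidate $1$ is $\frac{1}{n}\sum_{i : \rho_i \leq t} \frac{1}{t}$, contributing $\frac{1}{nt}\cdot|\{i : \rho_i \leq t\}|\cdot w^*$... — and then I would bound $w^*$ itself in terms of how many voters place candidate $1$ in their top $t$. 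The clean way: split voters into those with $\rho_i \leq t$ and those with $\rho_i > t$. A voter with $\rho_i > t$ contributes at most $u_i(1)$ to $w^*$, but I want an upper bound on $w^*$; instead observe $w^* = \sum_i u_i(1) \leq |\{i:\rho_i\le t\}| \cdot 1 + \sum_{i:\rho_i>t} u_i(1)$, and a voter with $\rho_i > t$ has at least $t$ candidates valued above candidate $1$, so (using that values lie in $[0,1]$) this doesn't immediately cap $u_i(1)$. The correct handle is the simplest one: since $J$ always gets at least $\frac{3}{4m}w^*$ from random-candidate, it suffices to produce a second bound of the form $\mratio(J) \geq c\, \frac{1}{\sqrt m}\cdot\frac{n}{w^*}$ from the unilateral part, and then take the better of the two; the product-type argument $\min(A/w^*, B\,n/w^*)$ combined with $w^* \le n$ gives the $m^{-3/4}$ scaling once we balance $A \sim \sqrt m \cdot B$.

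Concretely, for the second bound: random-candidate's guaranteed welfare $\frac34 \cdot \frac{w^*}{m}$ handles the regime $w^* \le n m^{-1/4}$ (giving ratio $\ge \frac34 m^{-1}\cdot\frac{w^*}{w^*}$, hmm) — so instead I balance as follows. Random-candidate gives expected welfare $\ge \frac{3}{4}\cdot\frac{n}{m}$ unconditionally (since electing any candidate uniformly, each voter contributes expected value $\ge$ his average value $\ge 1/m$... actually each voter's expected value under a uniform candidate is $\frac1m\sum_j u_i(j) \ge \frac1m$), hence welfare $\ge \frac{3}{4}\cdot\frac{n}{m}$, giving ratio $\ge \frac{3}{4m}\cdot\frac{n}{w^*} \ge \frac{3}{4m}$. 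The unilateral part gives welfare $\ge \frac14\cdot\frac{n}{t} \ge \frac{n}{8\sqrt m}$ (each voter's self-value under $\unilat{m}{n}{t}$ is $\ge \frac1t \ge \frac{1}{\sqrt m}$, wait this double counts). The main obstacle I anticipate is exactly this bookkeeping: getting a bound that genuinely interpolates, i.e.\ one that is large when $w^*$ is close to $n$ (many voters love candidate $1$, so the unilateral part finds it) and the other large when $w^*$ is small (random-candidate suffices since the target is low). I would prove: (i) from random-candidate, welfare $\ge \frac{3}{4}\cdot\frac{n}{m}$, so ratio $\ge \frac{3n}{4m\,w^*}$; (ii) from $\unilat{m}{n}{t}$, the probability of electing candidate $1$ times $w^*$, where that probability is $\ge \frac{1}{nt}\sum_i [\rho_i \le t]$, and separately $w^* \le \sum_i [\rho_i\le t] + (\text{small})$, no — rather I bound $w^* - (\text{welfare from unilateral part})$ directly. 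Once the two inequalities ratio $\ge \alpha/w^*$ and ratio $\ge \beta (n - w^*)/(w^* \sqrt m) + (\text{something})$ are in hand (the precise second form is what the calculation must produce, using that a voter not ranking $1$ in the top $\sqrt m$ "wastes" about $\sqrt m$ slots, so can't have too many such voters if $w^*$ is large), taking $\max$ over the two and the worst $w^* \in [n/m, n]$ yields the minimum at $w^* \asymp n m^{1/4}$, giving ratio $\Theta(m^{-3/4})$; chasing the constants $3/4$ and $1/4$ and $\lfloor m^{1/2}\rfloor \ge \frac12 m^{1/2}$ through carefully produces the stated $0.37\, m^{-3/4}$, with the $\lfloor\cdot\rfloor$ and the $m\ge 3$ hypothesis needed only to absorb rounding into the constant.
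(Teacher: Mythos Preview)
Your high-level strategy --- balance the $\tfrac{3n}{4m}$ floor from \emph{random-candidate} against the contribution of $\unilat{m}{n}{t}$, which is large when many voters rank candidate $1$ in their top $t$ --- is exactly right, and the ``wastes about $\sqrt m$ slots'' intuition is pointing at the correct phenomenon. But the proposal never turns this into an actual inequality, and the stated form of the second bound, ``ratio $\ge \beta(n-w^*)/(w^*\sqrt m) + (\text{something})$'', is not the right shape: nothing in the setup produces an $(n-w^*)$ factor. The missing step is a concrete relation such as
\[
w^* \;\le\; |F| \;+\; \frac{1}{t}\sum_{i,j} u_i(j),
\]
where $F=\{i:\rho_i\le t\}$. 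This follows because a voter with $\rho_i>t$ has at least $t$ candidates valued above $u_i(1)$, so $\sum_j u_i(j) > t\,u_i(1)$, giving $u_i(1) < \tfrac{1}{t}\sum_j u_i(j)$; summing over $i\notin F$ yields the display. Combined with $E \ge \tfrac{3}{4m}\sum_{i,j}u_i(j)$, $E \ge \tfrac{|F|w^*}{4nt}$, and $\sum_{i,j}u_i(j)\ge n$, one can then genuinely optimize and extract the $m^{-3/4}$ rate. Your write-up stops exactly where this work should begin; the repeated ``hmm'' / ``wait'' passages and the abandoned attempts (``$w^* \le \sum_i[\rho_i\le t] + (\text{small})$, no'') show you have not yet found this inequality.

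The paper takes a somewhat different route. It first invokes its reduction to \emph{quasi-combinatorial} profiles (Lemma~\ref{lem-quasicomb}), so that every valuation is within $\delta$ of $0$ or of $1$. This allows a clean dichotomy for non-friendly voters: either the voter is \emph{generous} (top $t{+}1$ values all near $1$, hence contributing roughly $\sqrt m$ to $\sum_{i,j}u_i(j)$), or else $u_i(1)<\delta$ outright. A three-case split on the sizes of $w^*$ and $\sum_{i,j}u_i(j)$ then produces the constant $0.37$. Your direct approach, once completed with the inequality above, avoids the quasi-combinatorial lemma at the cost of slightly messier constant-chasing; either way, the piece you are missing is precisely the bound that controls $w^*$ in terms of $|F|$ and the total valuation mass $\sum_{i,j}u_i(j)$.
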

\begin{proof}
For a valuation profile ${\bf u} = (u_i)$, we define 
$g({\bf u}) = \frac{E[\sum_{i=1}^n u_i(J({\bf u}))]}{\sum_{i=1}^n u_i(1)}.$
By Lemma \ref{lem-quasicomb}, since $J$ is ordinal, it is enough to bound from below $g({\bf u})$ for all ${\bf u} \in (\combset{m}{k})^n$ with $k \geq 1000(n m)^2$. Let $\epsilon = 1/k$. Let $\delta = m \epsilon$. Note that all functions of ${\bf u}$ map each alternative either to a valuation smaller than $\delta$ or a valuation larger than $1-\delta$. 

Since each voter assigns valuation $1$ to at least one candidate, and since $J$ with probability $3/4$ picks a candidate uniformly at random from the set of all candidates, we have $E[\sum_{i=1}^n u_i(J({\bf u}))] \geq 3n/(4m)$. Suppose $\sum_{i=1}^n u_i(1) \leq 2 m^{-1/4} n$. Then $g({\bf u}) \geq \frac{3}{8} m^{-3/4}$, and we are done. So we shall assume from now on that 
\begin{equation}
\sum_{i=1}^n u_i(1) >  2 m^{-1/4} n. \label{eq-many-votes-for-one}
\end{equation}

Obviously, $\sum_{i=1}^n u_i(1) \leq n$. Since $J$ with probability $3/4$ picks a candidate uniformly at random from the set of all candidates, we have that $E[\sum_{i=1}^n u_i(J({\bf u}))] \geq \frac{3}{4m} \sum_{i,j} u_i(j)$. 
So if $\sum_{i,j} u_i(j) \geq \frac{1}{2} n m^{1/4}$, we have $g({\bf u}) \geq \frac{3}{8} m^{-3/4}$, and we are done. So we shall assume from now on that 
\begin{equation}
\sum_{i,j} u_i(j) <  \frac{1}{2} n m^{1/4} \label{eq-nongen}.
\end{equation}

Still looking at the fixed quasi-combinatorial ${\bf u}$, let a voter $i$ be called {\em
generous} if his $\lfloor m^{1/2} \rfloor + 1$ most preferred candidates
are all assigned valuation greater than $1 - \delta$.
Also, let a voter $i$ be called {\em friendly} if he has candidate
$1$ among his $\lfloor m^{1/2} \rfloor$ most preferred candidates.  Note that if a voter
is neither generous nor friendly, he assigns to candidate $1$ valuation at most $\delta$. This means that the total contribution
to $\sum_{i=1}^n u_i(1)$ from such voters is less than $n \delta < 0.001/m$.
Therefore, by equation (\ref{eq-many-votes-for-one}), the union of
friendly and generous voters must be a set of size at least $1.99
m^{-1/4} n$.

If we let $g$ denote the number of generous voters, we have $\sum_{i,j} u_i(j) \geq g m^{1/2} (1 - \delta) \geq 0.999 g m^{1/2}$, so by equation (\ref{eq-nongen}), we have that $0.999 g m^{1/2} < \frac{1}{2} n m^{1/4}$. In particular $g < 0.51 m^{-1/4} n$. So since the union of friendly and generous voters must be a set of size at least a $1.99 m^{-1/4} n$ voters, we conclude that there are at least $1.48 m^{-1/4} n$ friendly voters, i.e. the friendly voters is at least a $1.48 m^{-1/4}$ fraction of the set of all voters. 
But this ensures that $\unilat{m}{n}{\lfloor {m^{1/2}}\rfloor}$ elects candidate $1$ with probability at least $1.48 m^{-1/4}/m^{1/2} \geq 1.48 m^{-3/4}$. Then, $J$ elects candidate $1$ with probability at least $0.37 m^{-3/4}$ which means that $g({\bf u}) \geq 0.37 m^{-3/4}$, as desired. This completes the proof. 
\end{proof}

We next show our negative result. We show that any convex combination of (not necessarily ordinal) unilateral and duple mechanisms performs poorly. 
\begin{theorem}\label{thm-neg}
Let $m \geq 20$ and let $n = m-1+g$ where $g = \lfloor m^{2/3} \rfloor$. For any mechanism $J$ that is a convex combination of unilateral and duple mechanisms in $\mech{m}{n}$, we have $\mratio(J) \leq 5 m^{-2/3}$. 
\end{theorem}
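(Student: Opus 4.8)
First I would reduce to the case where $J$ is anonymous and neutral. By Lemma~\ref{lem-naenough}, the anonymized-and-neutralized version $J'$ of $J$ satisfies $\mratio(J')\ge \mratio(J)$, and since a voter- or candidate-permutation of a unilateral (resp.\ duple) mechanism is again unilateral (resp.\ duple), $J'$ is still a convex combination of unilateral and duple mechanisms; so it is enough to bound $\mratio(J')$. Averaging the unilateral part of $J$ over voter-permutations turns it into ``pick a voter uniformly at random, then apply a single rule'', and averaging over candidate-permutations makes that rule neutral; by Theorem~\ref{thm-na} the anonymous, neutral duple part is a convex combination of the $\duple{m}{n}{q}$. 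Hence we may assume
\[ J \;=\; p\cdot\bigl(\text{pick a voter uniformly at random, then apply a neutral rule } f:\valset{m}\to\Delta_m\bigr)\;+\;\sum_q \mu_q\,\duple{m}{n}{q},\qquad p+\textstyle\sum_q\mu_q=1. \]
Because $\duple{m}{n}{q}$ picks a uniformly random pair of candidates, it elects any fixed candidate with probability at most $2/m$ on every profile; this disposes of the duple part once and for all.

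The one structural fact I need about the neutral rule $f$ is this. Fix an injective ballot $v\in\valset{m}$, so its $m$ values occupy well-defined \emph{slots} $1,\dots,m$ in decreasing order, and for a relabeling $w=v\circ\pi$ of $v$ let $c_s(w)$ be the candidate in slot $s$. Neutrality of $f$ forces $f(w)_{c_s(w)}$ to equal a number $\phi_s$ that is the same for all relabelings $w$ of $v$ (since $c_s(v\circ\pi)=\pi^{-1}(c_s(v))$ and $f(v\circ\pi)_{\pi^{-1}(j)}=f(v)_{j}$), and $\sum_{s=1}^m\phi_s=1$. Consequently, if all ballots in a block of voters are relabelings of a common $v$, and in that block a candidate $c$ lies only in slots belonging to a set $S$, occupying each slot $s\in S$ in at most $N_s$ of those ballots, then the total weight $f$ puts on $c$ over the block is at most $\sum_{s\in S}\phi_s N_s\le (\max_{s\in S}N_s)\cdot\sum_{s\in S}\phi_s \le \max_{s\in S}N_s$. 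Note this uses \emph{only} neutrality of $f$ — we are not allowed to assume $f$ is ordinal or truthful.

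Now the adversarial profile, with a fixed target candidate $c^*$ and with $h:=\lfloor m^{1/3}\rfloor$ (here $g=\lfloor m^{2/3}\rfloor$, $n=m-1+g$). Take a huge $k$ and two blocks of voters. The $m-1$ \emph{core} voters all use the ballot shape with one value $1$ and the other $m-1$ values equal to $0,\tfrac1k,\dots,\tfrac{m-2}{k}$; arrange that each of the $m-1$ candidates $\ne c^*$ is the top candidate of exactly one core voter, and that $c^*$ occupies each of the $m-1$ non-top slots $2,\dots,m$ in exactly one core voter. The $g$ \emph{booster} voters all use the ballot shape with $h$ ``high'' values $1,1-\tfrac1k,\dots,1-\tfrac{h-1}{k}$ and $m-h$ ``low'' values $0,\tfrac1k,\dots,\tfrac{m-h-1}{k}$; place $c^*$ in a high slot of every booster, spread so it occupies each high slot $1,\dots,h$ in at most $\lceil g/h\rceil$ boosters, and distribute the remaining $h-1$ high slots per booster over the candidates $\ne c^*$ so that each lies in a high slot of at most $\lceil g(h-1)/(m-1)\rceil$ boosters (a routine combinatorial assignment, feasible since $g(h-1)\le m$). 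Writing $\mathrm{SW}(j)=\sum_{i=1}^n u_i(j)$ and taking $k$ large enough to make the low values negligible, we get $\mathrm{SW}(c^*)=g(1\pm o(1))$ while $\mathrm{SW}(j)\le 1+\lceil g(h-1)/(m-1)\rceil+o(1)=O(1)$ for every $j\ne c^*$; in particular $c^*$ is the unique socially optimal candidate and $\mathrm{OPT}=\mathrm{SW}(c^*)$.

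Finally, bound $\mratio(J)$ on this profile. With $w_{c^*}=\tfrac1n\sum_i f(u_i)_{c^*}$ we have $\Pr[J\text{ elects }c^*]\le p\,w_{c^*}+\tfrac2m$. Applying the block observation to the core voters (where $c^*$ lies only in slots $2,\dots,m$, each at most once) bounds their total contribution to $n\,w_{c^*}$ by $\sum_{s\ge 2}\phi^{\mathrm{core}}_s\le 1$; applying it to the booster voters (where $c^*$ lies only in slots $1,\dots,h$, each at most $\lceil g/h\rceil$ times) bounds theirs by $\lceil g/h\rceil$. Hence $n\,w_{c^*}\le 1+\lceil g/h\rceil$, so $w_{c^*}=O\bigl(\tfrac{g/h}{n}\bigr)=O(m^{-2/3})$. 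Since $E\bigl[\sum_{i}u_i(J({\bf u}))\bigr]\le \mathrm{SW}(c^*)\Pr[J\text{ elects }c^*]+\max_{j\ne c^*}\mathrm{SW}(j)$, dividing by $\mathrm{SW}(c^*)=g(1\pm o(1))$ gives
\[ \mratio(J)\;\le\;\Pr[J\text{ elects }c^*]\;+\;\frac{\max_{j\ne c^*}\mathrm{SW}(j)}{\mathrm{SW}(c^*)}\;\le\;\Bigl(w_{c^*}+\tfrac2m\Bigr)+\Bigl(\tfrac1g+\tfrac{h}{m-1}\Bigr)(1+o(1)), \]
and with $g=\lfloor m^{2/3}\rfloor$, $h=\lfloor m^{1/3}\rfloor$ every term on the right is $O(m^{-2/3})$; a direct check of the constants (tightest at $m=20$) yields the stated bound $5m^{-2/3}$.

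\textbf{Main obstacle.} The delicate point is the construction of the profile: we must simultaneously keep $c^*$ socially dominant with $\mathrm{SW}(c^*)=\Theta(m^{2/3})$ and all other candidates at $\mathrm{SW}=O(1)$, \emph{and} ``hide'' $c^*$ from every neutral unilateral rule by spreading it over $\Theta(m^{1/3})$ slots per block — and it is precisely the tension between these two requirements that fixes the exponent at $m^{-2/3}$. Making the two floor-rounded parameters (with $g=\lfloor m^{2/3}\rfloor$ forced by the number of voters and $h=\lfloor m^{1/3}\rfloor$ chosen to balance the bound) cooperate with the constant $5$ is what forces the hypothesis $m\ge 20$. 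The fact that $f$ may be cardinal and non-truthful means the slot-invariance observation above — rather than any point-voting-scheme structure — is the only leverage available in bounding $w_{c^*}$.
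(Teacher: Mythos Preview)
Your argument is correct, but it follows a genuinely different path from the paper's proof, and the difference is worth noting.

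The paper constructs a simpler profile in which the target candidate $m$ sits in a \emph{fixed} rank within each voter type: rank $m$ (bottom) for the $m-1$ ``core'' voters, and rank $k+1$ for the $g$ ``booster'' voters. To bound the probability that a neutral one-voter rule elects $m$, the paper then invokes \emph{truthfulness}: if $m$ sits at rank $r+1$ in a ballot, she can be elected with probability at most $1/(r+1)$, since otherwise some higher-ranked candidate $j^*$ would get strictly smaller probability and the voter could profitably swap $j^*$ and $m$. Summing gives $\Pr[m\text{ elected}] \le 1/m + (g/m)(1/k)$.

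Your route avoids truthfulness entirely and relies only on neutrality: the slot-invariance observation $f(w)_{c_s(w)}=\phi_s$ lets you bound the unilateral weight on $c^*$ by $\max_s N_s$ per block, provided you \emph{spread} $c^*$ evenly across slots. This forces a more elaborate profile (placing $c^*$ in every non-top slot once among the core voters, and balancing it across the $h$ high slots among the boosters), but the payoff is that the argument works for any neutral one-voter rule $f$, truthful or not. One minor correction: your remark that ``we are not allowed to assume $f$ is \ldots\ truthful'' is overly cautious --- since the component unilateral mechanisms lie in $\mech{m}{n}$ and anonymizing/neutralizing preserves truthfulness, the resulting $f$ \emph{is} truthful. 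You simply never need this fact.

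In short: the paper trades a simpler profile for an appeal to truthfulness; you trade a more intricate profile for a structurally weaker hypothesis on $f$. Both reach the same $O(m^{-2/3})$ with comparable constants, and your numerical check at $m=20$ goes through.
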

\begin{proof}
Let $k= \lfloor m^{1/3} \rfloor$. By applying the same proof technique as in the proof of Lemma \ref{lem-naenough}, we can assume that $J$ can be decomposed into a convex combination of mechanisms $J_\ell$, with each $J_\ell$ being anonymous as well as neutral, and each $J_\ell$ either being a mechanism of the form $\duple{m}{n}{q}$ for some $q$ (by Theorem \ref{thm-na}), or a mechanism that applies a truthful one-voter neutral mechanism $U$ to a voter chosen unformly at random.

We now describe a single profile for which any such mechanism $J_\ell$ performs 
badly. Let $M_1,..,M_g$ be a partition of $\{1,\ldots,kg\}$ with $k$ candidates in each set.
The bad profile has the following voters:
\begin{itemize}
\item{}For each $i \in \{1,\ldots,m-1\}$ a voter that assigns $1$ to
candidate $i$, $0$ to candidate $m$ and valuations smaller than
$1/m^2$ to the rest.
\item{}For each $j \in \{1,\ldots,g\}$ a vote that assigns valuations strictly bigger than $1-1/m^2$ to members of $M_j$, valuation $1-1/m^2$ to m, and valuations smaller than $1/m^2$ to the rest.
\end{itemize}
Note that the social welfare of candidate $m$ is $(1-1/m^2) g$ while the
social welfares of the other candidates are all smaller than $2+{1/m}$. Thus,
the conditional expected approximation ratio given that the mechanism
does not elect $m$ is at most $(2+{1/m})/(1-1/m^2)g \leq 3 m^{-2/3}$.  
We therefore only need to estimate the probability that candidate $m$ is elected.
For a mechanism of the form $\duple{m}{n}{q}$, candidate $m$ is chosen with probability at most $2/m$, since such a mechanism first eliminates all candidates but two and these two are chosen uniformly at random. 

For a mechanism that picks a voter uniformly at random and applies a truthful one-agent neutral mechanism $U$ to the ballot of this voter, we make the following claim: Conditioned on a particular voter $i^*$ being picked, the conditional probability that $m$ is chosen is at most $1/(r+1)$, where $r$ is the number of candidates that outranks $m$ on the ballot of voter $i$. Indeed, if candidate $m$ were chosen with conditional probability strictly bigger than $1/(r+1)$, she would be chosen with strictly higher probability than some other candidate $j^*$ who outranks $m$ on the ballot of voter $i^*$. But if so, since $U$ is neutral, voter $i$ would increase his utility by switching $j^*$ and $m$ on his ballot, as this would switch the election probabilities of $j^*$ and $m$ while leaving all other election probabilities the same. This contradicts that $U$ is truthful. Therefore, our claim is correct. This means that candidate $m$ is chosen with probability at most $1/m + (g/m)\cdot(1/k) \leq 1/m + m^{2/3}/(m(m^{1/3}-1)) \leq 2 m^{-2/3}$, since $m \geq 20$.

We conclude that on the bad profile, the expected approximation ratio of any mechanism $J_\ell$ in the decomposition is at most $3 m^{-2/3} + 2 m^{-2/3} = 5 m^{-2/3}$. Therefore, the expected approximation ratio of $J$ on the bad profile is also at most $5 m^{-2/3}$. 
\end{proof}

\begin{corollary}
For all $m$, and all sufficiently large $n$ compared to $m$, any mechanism $J$ in $\rmech{O}{m}{n} \cup \rmech{U}{m}{n}$ has approximation ratio $O(m^{-2/3})$. 
\end{corollary}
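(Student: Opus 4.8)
The plan is to combine Theorem~\ref{thm-neg} with Gibbard's characterization (Theorem~\ref{thm-Gibbard77}) and the scaling lemmas of Section~\ref{sec-prel}; essentially no new construction is needed. First I would fix $m \geq 20$ and set $n_0 = m - 1 + \lfloor m^{2/3}\rfloor$, the value of $n$ appearing in Theorem~\ref{thm-neg}. Every mechanism in $\rmech{O}{m}{n_0}$ is, by Theorem~\ref{thm-Gibbard77}, a convex combination of truthful unilateral ordinal mechanisms and truthful duple mechanisms, hence in particular a convex combination of unilateral and duple mechanisms; and every mechanism in $\rmech{U}{m}{n_0}$ is by definition a convex combination of unilateral truthful mechanisms, which is the special case with no duple component. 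Consequently Theorem~\ref{thm-neg} gives $\gratio{O}{m}{n_0} \leq 5 m^{-2/3}$ and $\gratio{U}{m}{n_0} \leq 5 m^{-2/3}$.

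Next I would lift this single-$n$ bound to all sufficiently large $n$ using Lemma~\ref{lem-limit}, which applies to the classes $\mathbf{O}$ and $\mathbf{U}$. Applying that lemma with $\epsilon = m^{-2/3}$ and with $n_0$ in the role of its ``$n$'', we obtain that for every $n' \geq (n_0 - 1)m^{5/3}$ and $\mathbf{C} \in \{\mathbf{O},\mathbf{U}\}$, $\gratio{C}{m}{n'} \leq \gratio{C}{m}{n_0} + m^{-2/3} \leq 6 m^{-2/3}$. Since $(n_0-1)m^{5/3}$ is a fixed function of $m$, this is precisely the statement that for $m \geq 20$ and all sufficiently large $n$ (compared to $m$), any $J \in \rmech{O}{m}{n} \cup \rmech{U}{m}{n}$ satisfies $\mratio(J) \leq 6 m^{-2/3}$.

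It remains to cover the finitely many small values $3 \leq m \leq 19$. For these, $m^{-2/3}$ is bounded below by the absolute constant $19^{-2/3}$, whereas trivially $\mratio(J) \leq 1$ for every mechanism $J$ (the elected candidate never attains social welfare exceeding the optimum), so $\mratio(J) \leq 1 \leq 19^{2/3}\, m^{-2/3}$ regardless of $n$. Taking the absolute constant $c = \max(6, 19^{2/3}) = 19^{2/3}$ and the threshold $N(m)$ equal to $(n_0(m)-1)m^{5/3}$ for $m \geq 20$ (and $N(m)$ arbitrary for smaller $m$), we conclude $\mratio(J) \leq c\,m^{-2/3}$ for all $m$ and all $n \geq N(m)$, which is the claimed $O(m^{-2/3})$ bound.

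There is no genuinely hard step; the only points deserving care are (i) verifying that the hypothesis of Theorem~\ref{thm-neg} — being a convex combination of unilateral and duple mechanisms — really does subsume both $\rmech{O}{m}{n}$ (through Gibbard's theorem) and $\rmech{U}{m}{n}$ (by the definition of mixed-unilateral), and (ii) choosing $\epsilon$ in Lemma~\ref{lem-limit} to be $m^{-2/3}$ rather than a fixed constant, so that the additive loss incurred in the scaling argument is itself $O(m^{-2/3})$ and does not degrade the asymptotic bound.
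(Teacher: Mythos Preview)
Your proposal is correct and follows exactly the approach indicated by the paper's one-line proof (``Combine Theorem~\ref{thm-Gibbard77}, Lemma~\ref{lem-limit} and Theorem~\ref{thm-neg}''), simply with the details spelled out: Gibbard's characterization to bring $\rmech{O}{m}{n}$ under the hypothesis of Theorem~\ref{thm-neg}, the definition of mixed-unilateral for $\rmech{U}{m}{n}$, and Lemma~\ref{lem-limit} with $\epsilon = m^{-2/3}$ to pass from the single value $n_0$ to all large $n$. Your explicit treatment of the small-$m$ range and the care in choosing $\epsilon$ proportional to $m^{-2/3}$ are appropriate refinements that the paper leaves implicit.
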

\begin{proof}
Combine Theorem \ref{thm-Gibbard77}, Lemma \ref{lem-limit} and Theorem \ref{thm-neg}.
\end{proof}

As followup work to the present paper, in a working manuscript, Lee \cite{Lee14} states a lower bound of $\Omega(m^{-2/3})$ that closes the gap between our upper and lower bounds. The mechanism achieving this bound is a convex combination of random-favorite and the mixed unilateral mechanism that uniformly at random elects one of the $m^{1/3}$ most preferred candidates of a uniformly chosen voter.  The main question that we would like to answer is how well one can do with (general) cardinal mechanisms. The next theorem provides a weak upper bound.

\begin{theorem}\label{thm-dm}
All mechanisms $J \in \mech{m}{n}$ for $m,n \geq 3$ have $\mratio(J) < 0.94$.
\end{theorem}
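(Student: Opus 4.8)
The plan is to exhibit a small explicit family of valuation profiles on which the truthfulness constraints, together with the demand of a $0.94$-approximation, are jointly unsatisfiable. I would first reduce to $m=3$: in the witnessing instance every voter assigns valuation less than a tiny $\epsilon$ to each of the candidates $4,\dots,m$, so each such candidate has social welfare at most $\epsilon n$, a vanishing fraction of the optimum (which we arrange to be $\Theta(n)$). Hence any $J$ with $\mratio(J)\ge 0.94$ must, on this instance, elect a candidate from $\{1,2,3\}$ with probability $1-O(\epsilon)$, and since truthfulness of $J$ is inherited on the sub-domain in which candidates $4,\dots,m$ are always bottom-ranked with scores below $\epsilon$, it suffices to rule out $\mratio(J)\ge 0.94$ for $m=3$. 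The construction will use several ``blocks'' of identical voters; the number of voters may be taken as convenient, leftover voters being absorbed at negligible cost (or by Lemma~\ref{lem-limit}, which lets an upper bound for one number of voters propagate).

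The core of the argument uses the block--deviation (telescoping) technique already employed in the proof of Lemma~\ref{lem-scale}. I would partition the voters into a constant number of nearly-equal blocks, the voters in each block sharing one common true valuation, and consider a short sequence of profiles $P_0,P_1,\dots,P_r$ in which $P_{j+1}$ arises from $P_j$ by having \emph{all} voters of one block simultaneously replace their reported valuation. Chaining the single-voter truthfulness inequalities across a block --- exactly as in the proof of Lemma~\ref{lem-scale} --- yields, for each transition, an inequality $\sum_c v(c)\,q^{P_j}(c)\ \ge\ \sum_c v(c)\,q^{P_{j+1}}(c)$, where $v$ is that block's true valuation and $q^{P}$ is the output distribution of $J$ on $P$. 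On each $P_j$ I would also impose the $0.94$-approximation constraint, i.e.\ that the expected social welfare under the reported valuations of $P_j$ is at least $0.94$ times the optimum for $P_j$. Regarding the $q^{P_j}$ as unknown distributions, this is a finite linear program, and the aim is to pick the profiles so that it is infeasible --- intuitively, to find a way of traversing a cycle of profiles so that the telescoped truthfulness inequalities force an output distribution that the approximation constraints forbid. Useful leverage: on ``endpoint'' profiles one can make a single candidate's social welfare dominate the others' by a factor tending to infinity as a secondary scaling parameter $\delta\to 0$, which forces that candidate's election probability to be at least $0.94-o(1)$ and pins down the relevant $q^{P_j}$ up to $o(1)$.

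The main obstacle is the design of this family, and it is genuinely where the work lies. A two-profile argument (one block deviating once) cannot work: whichever side is the social-welfare majority has its preferred candidate socially optimal both before and after the deviation, so the one-directional telescoped inequality is never contradicted; and if the deviating block is itself the majority, then on the truthful profile the mechanism is forced toward a candidate that block \emph{likes}, which pushes the inequality the wrong way. One is therefore forced to keep all three candidates genuinely in play and to use a longer, essentially cyclic, sequence of block deviations, and then the free parameters --- block sizes, the intermediate valuations, and the scaling parameter $\delta$ --- must be tuned so that the infeasibility threshold of the linear program falls strictly below $0.94$ rather than at some weaker constant; this is also why the resulting bound is not claimed to be tight.
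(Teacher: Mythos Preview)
Your high-level plan---reduce to $m=3$, combine truthfulness inequalities with approximation constraints into a linear system, and show infeasibility---is sound and is in fact the paper's strategy. But you make a concrete error: you assert that ``a two-profile argument (one block deviating once) cannot work,'' and the paper's proof is exactly a two-profile argument with a \emph{single} voter deviating once.

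The gap in your impossibility reasoning is the tacit assumption that the socially optimal candidate must be some block's top choice. With cardinal valuations a \emph{compromise} candidate---nobody's favourite---can be socially optimal, and this is what the paper exploits. With $n=3$ voters and candidates $A,B,C$, take $u_1(B)=u_2(B)=u_3(C)=1$, $u_1(C)=u_2(C)=u_3(B)=0$, $u_1(A)=0.7$, $u_2(A)=u_3(A)=0.8$. Here $A$ is optimal (welfare $2.3$) though it tops no ballot; $B$ has welfare $2$. A $0.94$ ratio forces $\Pr[A]\ge 0.54$, which caps voter~$1$'s expected utility at $1-0.3\cdot 0.54=0.838$. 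Now let voter~$1$ misreport $u'_1(A)=0.0001$: the social optimum switches to $B$, which is \emph{voter~$1$'s own top candidate}. Writing $p_A,p_B,p_C$ for the election probabilities on the new profile, truthfulness gives $0.7p_A+p_B\le 0.838$, while the $0.94$-approximation gives $1.6001\,p_A+2p_B+p_C\ge 1.88$; together with $p_A+p_B+p_C=1$ these force $p_B+p_C>1$, a contradiction. So a single deviation suffices precisely because it shifts the optimum onto the deviator's favourite---your case analysis (``majority's preferred candidate is optimal both before and after'') does not cover this configuration.

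A secondary point: even apart from that mistake, your proposal is a plan rather than a proof---the actual construction (``genuinely where the work lies'') is never supplied, so as written it does not establish the theorem.
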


\begin{proof}
We will prove the theorem for mechanisms in $\mech{3}{3}$. By simply adding alternatives for which every agent has valuation almost $0$ and then applying Lemma \ref{lem-limit}, the theorem holds for any $m, n \geq 3$.
 
Assume for contradiction that there exists a mechanism $J \in
\mech{3}{3}$, with $\mratio(J) \geq 0.94$. Consider the valuation profile
${\bf u}$ with three voters $\{1,2,3\}$, three candidates $\{A,B,C\}$,
and valuations $u_1(B) = u_2(B) = u_3(C) = 1$, $u_1(C) = u_2(C) =
u_3(B) = 0$, $u_1(A) = 0.7$ and $u_2(A) = u_3(A) = 0.8$.  The social
optimum on profile ${\bf u}$ is candidate $A$, with social welfare
$w_A=2.3$, while $w_B=2$ and $w_C=1$. Since $J$'s expected social
welfare is at least a $0.94$ fraction of $w_A$, i.e. $2.162$, the
probability of $A$ being elected is at least $0.54$, as otherwise the
expected social welfare would be smaller than $0.54 \cdot 2.3 + 0.46 \cdot
2 = 2.162$ . The expected utility $\tilde u$ of voter 1 in that case is
at most $0.54 \cdot 0.7+ 0.46 \cdot 1 = 0.838$.

Next, consider the profile ${\bf u}'$ identical to ${\bf u}$ except
that $u'_1(A) = 0.0001$.  Let $p_A,p_B,p_C$ be the probabilities of
candidates $A,B$ and $C$ being elected on this profile,
respectively. The social optimum is $B$ with social welfare $2$. It
must be that $0.7 p_A+ p_B \leq \tilde u$, otherwise on profile ${\bf u}$,
voter 1 would have an incentive to misreport $u_1(A)$ as $0.0001$. Also,
since $J$ has an approximation ratio of at least $0.94$, it must be
the case that $1.6001 \cdot p_A + 2 \cdot p_B + p_C \geq 1.88$. By
those two inequalities, we have: $0.9001p_A+p_B+p_C \geq 1.8800 -
\tilde u$ $\Rightarrow$ $0.9001(p_A+p_B+p_C)+0.0999p_B+0.0999p_C \geq
1.8800 - \tilde u$ $\Rightarrow$ $0.0999(p_B+p_C) \geq 0.10419$
$\Rightarrow$ $p_B+ p_C \geq 1.42$ which is not possible. Hence, it
cannot be that $\mratio(J) \geq 0.94$.
\end{proof} 

Recall that in the definition of valuation functions $u_i$, we required $u_i$ to be injective, i.e. ties are not allowed in the image of the function. If we actually allow $u_i$ to map the same real number to different candidates (with $0$ and $1$ still in the image of the function), we can prove a much stronger upper bound on the approximation ratio of any truthful mechanism. The proof is based on a bound proven by Filos-Ratsikas et al. \cite{Aris14} for the \emph{one-sided matching problem}. There are some interesting technical difficulties in adapting their proof to work for thel setting without ties. As we do not want to declare either the ``ties" or the ``no ties" model the ``right one", we want all positive results (mechanisms) to be proven for the setting with ties and all negative ones (upper bounds on approximation ratio) to be proven for the setting without ties. The proof of the following theorem is the only one of the paper which isn't easily modified to work for both settings. 

\begin{theorem}\label{thm:anyupper}
Let $J'$ be any voting mechanism for $n$ agents and $m$ alternatives, with $m \geq n^{\lfloor\sqrt{n}\rfloor +2}$, in the setting with ties. The approximation ratio of $J'$ is $O(\log\log m/\log m)$. 
\end{theorem}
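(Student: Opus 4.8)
The plan is to reduce to the one-sided matching problem studied in \cite{Aris14} and import their $O(\log\log m/\log m)$ bound. The key observation is that a voting mechanism on $m$ alternatives can be made to simulate an ordinal assignment mechanism on $n$ agents and $n$ items, provided $m$ is large enough that we can ``hide'' the combinatorial structure of an assignment inside the alternative set. Concretely, I would let the alternatives be indexed (at least conceptually) by the $n!$ perfect matchings between agents and $n$ designated goods, plus padding alternatives; with $m \geq n^{\lfloor\sqrt n\rfloor+2}$ there is ample room. Given a matching instance of one-sided matching, each agent's ordinal (or cardinal) preference over goods induces a preference over matchings — an agent prefers matching $\mu$ to $\mu'$ according to the value of the good it assigns her — and feeding these induced valuation profiles to $J'$ yields a distribution over matchings, i.e.\ a randomized assignment mechanism. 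Truthfulness of $J'$ (in expectation, with respect to the induced valuations) must be shown to imply truthfulness of the induced assignment mechanism, and the social welfare of the elected alternative corresponds to the social welfare of the induced random assignment.

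The steps, in order, would be: (1) Fix the embedding of matchings (and padding alternatives) into $M$, and describe precisely how a profile of agent valuations over goods is turned into a profile in $\valset{m}$ — this is where ties enter, since many distinct matchings give an agent the same good and hence must receive equal valuation, which is exactly why the theorem is stated in the ``with ties'' model and why the excerpt flags this as the one proof not portable to the no-ties setting. (2) Verify that a truthful $J'$ induces a truthful randomized assignment mechanism: an agent misreporting her ranking of goods corresponds to a specific misreport of her valuation over matchings, and her expected utility (which is linear in the induced assignment probabilities for her own goods) is preserved under the correspondence, so a beneficial manipulation downstream would lift to a beneficial manipulation of $J'$. (3) Relate approximation ratios: the optimal social welfare over alternatives is at least the optimal social welfare over matchings on the embedded instance, while $E[\text{welfare of }J']$ on the constructed profile equals the expected welfare of the induced assignment; hence $\mratio(J')$ is at most the best approximation ratio achievable by a truthful randomized assignment mechanism. (4) Invoke \cite{Aris14}, which states this quantity is $O(\log\log n/\log n)$, and finally translate the bound from $n$ to $m$ using $m \geq n^{\lfloor\sqrt n\rfloor + 2}$, so that $\log m = \Theta(\sqrt n \log n)$ and $\log\log m = \Theta(\log n)$, giving $O(\log\log m/\log m)$ as claimed.

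The main obstacle I expect is step (2): making the truthfulness reduction airtight. The subtlety is that the induced valuation function over matchings is highly degenerate (constant on huge blocks of alternatives), and one must check that \emph{every} manipulation available in the matching world — i.e.\ every reordering of the agent's goods — corresponds to a valid element of $\valset{m}$ (with ties) that $J'$ must respect, and conversely that we need not worry about manipulations of $J'$ that do not arise from the matching embedding, because the approximation-ratio statement only concerns the constructed profiles. A secondary technical point is ensuring the ``padding'' alternatives (needed so that $0$ and $1$ sit in each valuation's image and so that $m$ can be taken to its required size without disturbing the welfare comparison) are assigned valuations small enough not to affect either the optimum or $J'$'s expected welfare by more than a lower-order term; Lemma~\ref{lem-limit} already licenses adding near-zero alternatives, so this should be routine once the core reduction is in place.
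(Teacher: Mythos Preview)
Your overall strategy---reduce to one-sided matching by identifying alternatives with matchings, check that truthfulness carries over, and then import the bound from \cite{Aris14}---is exactly the paper's approach. However, the parameters you chose do not work, and this is not a minor bookkeeping slip: it is precisely the point where the hypothesis $m \geq n^{\lfloor\sqrt n\rfloor+2}$ enters.

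You propose to use $n$ items, so that the embedded alternatives are the $n!$ perfect matchings. But $n!$ grows far faster than $n^{\lfloor\sqrt n\rfloor+2}$, so the assumption on $m$ does \emph{not} give ``ample room'' for this embedding. Relatedly, the bound in \cite{Aris14} for $n$ agents and $n$ items is $O(1/\sqrt n)$, not $O(\log\log n/\log n)$ as you state; and if you actually had $m \approx n!$, then $\log m = \Theta(n\log n)$, so $1/\sqrt n$ would translate to $O\bigl(\sqrt{\log\log m/\log m}\bigr)$, which is strictly weaker than the claimed $O(\log\log m/\log m)$. Your step (4) arithmetic, where you write $\log m = \Theta(\sqrt n\log n)$, already implicitly assumes $m \approx n^{\sqrt n}$ rather than $m \approx n!$, so the proposal is internally inconsistent.

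The fix, and what the paper does, is to take only $k = \lfloor\sqrt n\rfloor + 2$ items rather than $n$. Then the number of matchings is at most $n^{k}$, which is exactly the hypothesis on $m$; the $O(1/\sqrt n)$ bound from \cite{Aris14} (which they note adapts from $n$ items to $\lfloor\sqrt n\rfloor+2$ items) then becomes $O(\log\log m/\log m)$ under the correct substitution $\log m = \Theta(\sqrt n\log n)$. Your discussion of the truthfulness transfer and the role of ties is on target; only the item count and the cited bound need correcting.
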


\begin{proof}
Filos-Ratsikas et. al \cite{Aris14} proved a related upper bound for the one-sided matching problem.\footnote{Their proof is actually for the setting of $n$ agents and $n$ items but it can be easily adapted to work when the number of items is $\lfloor \sqrt{n} \rfloor+2$.} The bound corresponds to an upper bound on the approximation ratio of any truthful mecahnism $J$ in the general setting with ties. This is because there is a reduction from the general setting with ties to the setting of the one-sided matching problem. 

In the one-sided matching problem, there is a set of $n$ agents and a set of $k$ items and each agent $i$ has a valuation function $v_i:[k] \rightarrow [0,1]$ mapping items to real values in the unit interval. Similarly to our definitions, these functions are injective and both $0$ and $1$ are in their image. A mechanism $J$ on input a valuation profile $\mathbf{v}=(v_1,...,v_n)$ outputs a \emph{matching} $J(\mathbf{v})$, i.e. an allocation of items to agents such that each agent receives at most one item. Let $J(\mathbf{v})_i$ be the item allocated to agent $i$. For convenience, we will refer to this problem as the \emph{matching setting} and to our problem as the \emph{general setting}. 

The reduction works as follows. Let $\mathbf{v}=(v_1,...,v_n)$ be a valuation profile of the matching setting. We will construct a valuation profile $\mathbf{u}=(u_1,...,u_n)$ of the general setting that will correspond to $\mathbf{v}$. Let each outcome of the matching setting correspond to a candidate in the general setting. For every agent $i$ and every item $j$ let $u_i(A)=v_i(j)$ for each candidate $A \in M$ that corresponds to a matching in which item $j$ is allocated to agent $i$. Note that the number of candidates is $n^k$ and a bound for the matching setting implies a bound for the general setting. Specifically, the $O(1/\sqrt{n})$ bound proved in \cite{Aris14} translates to a $O(\log\log m/\log m)$ upper bound. 
\end{proof}

\section{Mechanisms and negative results for the case of three candidates}\label{sec-few}
 
%

In this section, we consider the special case of three candidates $m=3$. To improve readability, we shall denote the three candidates by $A, B$ and $C$, rather than by 1,2 and 3.

When the number of candidates $m$ as well as the number of voters $n$ are small constants, the exact values of $\gratio{O}{m}{n}$ and $\gratio{OU}{m}{n}$ can be determined. We first give a clean example, and then describe a general method.



\begin{proposition}\label{prop-three}
For all $J \in \rmech{O}{3}{3}$, we have $\mratio(J) \leq 2/3$. 
\end{proposition}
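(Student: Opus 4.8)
The plan is to combine Lemma~\ref{lem-naenough} with a single carefully chosen family of three-voter profiles. By Lemma~\ref{lem-naenough}, since anonymizing and neutralizing a mechanism preserves ordinality and does not decrease the approximation ratio, it suffices to prove $\mratio(J)\le 2/3$ for every $J\in\rmech{O}{3}{3}$ that is additionally anonymous and neutral. For such $J$ I would exhibit, for each small $\epsilon>0$, one profile $\mathbf{u}_\epsilon$ on which $J$ achieves ratio exactly $\frac{(4+\epsilon)/3}{2-\epsilon}$, and then let $\epsilon\to 0$.

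Writing the three candidates as $A,B,C$, the profile $\mathbf{u}_\epsilon$ is the ``cyclic'' one: voter $1$ has $u_1(B)=1,\ u_1(C)=\epsilon,\ u_1(A)=0$ (ranking $B\succ C\succ A$); voter $2$ has $u_2(C)=1,\ u_2(A)=1-\epsilon,\ u_2(B)=0$ (ranking $C\succ A\succ B$); voter $3$ has $u_3(A)=1,\ u_3(B)=\epsilon,\ u_3(C)=0$ (ranking $A\succ B\succ C$). The three rankings are the three cyclic rotations of $A\succ B\succ C$, so the profile of rankings is invariant under simultaneously applying the $3$-cycle $\sigma=(A\,B\,C)$ to the candidates and a suitable $3$-cycle to the voters. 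Since $J$ is ordinal, $J(\mathbf{u}_\epsilon)$ depends only on this ranking profile; since $J$ is anonymous it is insensitive to the voter permutation, and since $J$ is neutral the candidate relabeling acts on the output by $\sigma$. Hence $J(\mathbf{u}_\epsilon)$ is a $\sigma$-invariant distribution on $\{A,B,C\}$, i.e.\ $J$ elects each of $A,B,C$ with probability exactly $1/3$. (Alternatively, by Corollary~\ref{cor-ud} one may decompose $J$ as a convex combination of the mechanisms $\unilat{3}{3}{q}$ for $q\in\{1,2,3\}$ and $\duple{3}{3}{q}$ for $q\in\{2,3\}$, and check directly that each of these five mechanisms elects each candidate with probability $1/3$ on $\mathbf{u}_\epsilon$.)

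It then remains to compute the social welfares on $\mathbf{u}_\epsilon$: $w_A=0+(1-\epsilon)+1=2-\epsilon$ while $w_B=w_C=1+\epsilon$, so $A$ is the unique socially optimal candidate with $\max_j w_j=2-\epsilon$, whereas the expected social welfare of $J$ is $\tfrac13(w_A+w_B+w_C)=\tfrac13(4+\epsilon)$. Therefore $\mratio(J)\le \frac{(4+\epsilon)/3}{2-\epsilon}$ for every $\epsilon\in(0,\tfrac12)$, and letting $\epsilon\to 0$ gives $\mratio(J)\le 2/3$.

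The only real obstacle is finding the right profile: it must be cyclically symmetric in the \emph{rankings} (so that every anonymous, neutral, ordinal mechanism is forced to split its probability evenly among the three candidates) while still possessing a \emph{strict} social-welfare winner, and in fact one whose welfare is about a $2/3$ fraction of the total. The resolution is to make the numerical values asymmetric --- voter $2$ values its second choice $1-\epsilon$ while the other two value theirs $\epsilon$ --- which is harmless precisely because an ordinal $J$ only sees the (symmetric) rankings. Once the profile is in hand, no estimates beyond the elementary computation above are needed.
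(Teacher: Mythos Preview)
Your proof is correct and is essentially the same as the paper's: reduce to anonymous, neutral, ordinal $J$ via Lemma~\ref{lem-naenough}, use a cyclic (Condorcet) ranking profile to force each candidate to be elected with probability $1/3$, and choose the numerical valuations so that one candidate has welfare $2-\epsilon$ while the others have $1+\epsilon$. Your profile is the paper's profile with the voters relabelled, and your extra remark that the claim also follows by checking each of the five basis mechanisms from Corollary~\ref{cor-ud} is a valid alternative justification of the $1/3$ election probabilities.
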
 
\begin{proof}
By Lemma \ref{lem-naenough}, we can assume that $J$ is anonymous and neutral.
Let $A >_i B$ denote the fact that voter $i$ ranks candidate $A$ higher than $B$ in his ballot. Let a {\em Condorcet} profile be any valuation profile with $A>_{1} B>_{1} C$, $B >_{2} C >_{2} A$ and $C >_{3} A >_{3} B$. Since $J$ is neutral and anonymous, by symmetry, $J$ elects each candidate with probability $1/3$. Now, for some small $\epsilon > 0$, consider the Condorcet profile where $u_1(B)= \epsilon$, $u_2(C)=\epsilon$ and $u_3(A)=1-\epsilon$. The socially optimal choice is candidate $A$ with social welfare $2-\epsilon$, while the other candidates have social welfare $1+\epsilon$. Since each candidate elected with probability $1/3$, the expected social welfare is $(4+\epsilon)/3$. By making $\epsilon$ suffciently small, the approximation ratio on the profile is arbitrarily close to $2/3$.
\end{proof}
With a case analysis and some pain, it can be proved by hand that
\emph{random-majority} achieves an approximation ratio of at least
$2/3$ on any profile with three voters and three candidates.
Together with Proposition \ref{prop-three}, this implies that
$\gratio{O}{3}{3} = \frac{2}{3}$.  Rather than presenting the case
analysis, we describe a general method for how to exactly and 
mechanically compute
$\gratio{O}{m}{n}$ and $\gratio{OU}{m}{n}$ and the associated optimal
mechanisms for small values of $m$ and $n$. The key is to apply {\em
Yao's principle} \cite{Yao77} and view the construction of a
randomized mechanism as devising a strategy for Player I in a
two-player zero-sum game $G$ played between Player I, the mechanism
designer, who picks a mechanism $J$ and Player II, the adversary, who
picks an input profile ${\bf u}$ for the mechanism, i.e., an element of
$(\valset{m})^n$. The payoff to Player I is the approximation ratio of
$J$ on ${\bf u}$. Then, the value of $G$ is exactly the
approximation ratio of the best possible randomized mechanism. In order
to apply the principle, the computation of the value of $G$ has to be
tractable. In our case, Theorem \ref{thm-na} allows us to reduce the
strategy set of Player I to be finite while Lemma \ref{lem-quasicomb}
allows us to reduce the strategy set of Player II to be finite. This
makes the game into a matrix game, which can be solved to optimality
using linear programming.  The details follow.

For fixed $m, n$ and $k > 2m$, recall that the set of quasi-combinatorial valuation functions $\combset{m}{k}$ 
is the set of valuation functions $u$ for which there is a $j$ so that
$\Im(u) = \{0, \frac{1}{k}, \frac{2}{k}, \ldots, \frac{m-j-1}{k} \} \cup \{\frac{k-j+1}{k}, \frac{k-j+2}{k}, \ldots,  \frac{k-1}{k}, 1\}$.
Note that a quasi-combinatorial valuation function $u$ is fully
described by the value of $k$, together with a partition of $M$ into
two sets $M_0$ and $M_1$, with $M_0$ being those candidates close to $0$ and
$M_1$ being those sets close to $1$ together with a ranking of the candidates (i.e., a total ordering $<$
on $M$), so that all elements of $M_1$ are greater than all elements of $M_0$ in this ordering. Let the {\em type} of a quasi-combinatorial valuation function
be the partition and the total ordering $(M_0, M_1, <)$. Then, a
quasi-combinatorial valuation function is given by its type and the
value of $k$. For instance, if $m=3$, one possible type is
$(\{B\},\{A,C\}, \{B < A < C\})$, and the quasi-combinatorial valuation
function $u$ corresponding to this type for $k=1000$ is $u(A) =
0.999$, $u(B)=0$, $u(C) = 1$. We see that for any fixed value of $m$,
there is a finite set $T_m$ of possible types. In particular, we have
$|T_3| = 12$.  Let $\eta: T_m \times {\mathbb N} \rightarrow
\combset{m}{k}$ be the map that maps a type and an integer $k$ into
the corresponding quasi-combinatorial valuation function.

For fixed $m, n$, consider the following matrices $G$ and $H$. The
matrix $G$ has a row for each of the mechanisms $\unilat{m}{n}{q}$ for
$q = 1,\ldots,m$, while the matrix $H$ has a row for each of the
mechanisms $\unilat{m}{n}{q}$ for $q = 1,\ldots,m$ as well as for each of the mechanisms
$\duple{m}{n}{q}$, for $q = \lfloor n/2 \rfloor + 1, \lfloor
  n/2 \rfloor + 2, \ldots, n$. Both matrices have a column for each element of $(T_m)^n$. The entries of the matrices are as follows: Each entry is indexed by a mechanism $J \in \mech{m}{n}$ (the row index) and by a type profile ${\bf t} \in (T_m)^n$ (the column index). We let that entry be 
\[ c_{J,{\bf t}} = \lim_{k \rightarrow \infty} \frac{E[\sum_{i=1}^n u_i(J({\bf u}^k))]}{\max_{j \in M}\sum_{i=1}^n u^k_i(j)}, \]
where $u^k_i = \eta(t_i, k)$.
Informally, we let the entry be the approximation ratio of the mechanism on the quasi-combinatorial profile of the type profile indicated in the column and with $1/k$ being ``infinitisimally small''. Note that for the mechanisms at hand, despite the fact that the entries are defined as a limit, it is straightforward to compute the entries symbolically, and they are rational numbers. 

We now have
\begin{lemma}\label{lem-matrix}
The value of $G$, viewed as a matrix game with the row player being
the maximizer, is equal to $\gratio{OU}{m}{n}$.  The value of $H$ is
equal to $\gratio{O}{m}{n}$. 
Also, the optimal strategies for the row players in
the two matrices, viewed as convex combinations of the mechanisms
corresponding to the rows, achieve those ratios.
\end{lemma}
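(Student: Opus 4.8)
The plan is to read off the value of the game from Yao's principle \cite{Yao77} (the minimax theorem for the finite matrix game) after pinning down the two players' strategy sets, and to identify the payoff of a mixed row strategy with the approximation ratio of the corresponding mechanism. I carry out the argument for $G$ and $\gratio{OU}{m}{n}$; the argument for $H$ and $\gratio{O}{m}{n}$ is the same word for word, with Theorem~\ref{thm-na} replaced by Corollary~\ref{cor-ud} (the rows of $H$ are exactly the generators listed there, with $\unilat{m}{n}{m}$ appearing once, so there is no double counting).

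First I would reduce the mechanism side: by Lemma~\ref{lem-naenough}, $\gratio{OU}{m}{n}=\sup\{\mratio(J): J\in\rmech{OU}{m}{n}\text{ anonymous and neutral}\}$, and by Theorem~\ref{thm-na} these $J$ are exactly the mechanisms $J_\lambda:=\sum_{q=1}^m \lambda_q\,\unilat{m}{n}{q}$ with $\lambda$ in the simplex $\Delta_m$. Each $J_\lambda$ is ordinal and neutral, so Lemma~\ref{lem-quasicomb} applies to it.

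The crux is to show $\mratio(J_\lambda)=\min_{{\bf t}\in(T_m)^n}\sum_{q=1}^m\lambda_q\,c_{\unilat{m}{n}{q},{\bf t}}$, i.e.\ that the approximation ratio of $J_\lambda$ is precisely the payoff of the row strategy $\lambda$ against a best-responding column. Starting from Lemma~\ref{lem-quasicomb}, $\mratio(J_\lambda)=\liminf_{k\to\infty}\min_{{\bf u}\in(\combset{m}{k})^n}\frac{E[\sum_i u_i(J_\lambda({\bf u}))]}{\sum_i u_i(1)}$, I would convert the right side in three steps. (a) For each fixed $k$ the denominator $\sum_i u_i(1)$ may be replaced by $\max_{j\in M}\sum_i u_i(j)$: given any quasi-combinatorial profile, relabel candidates so that the socially optimal one becomes candidate $1$; since $(\combset{m}{k})^n$ is closed under a simultaneous candidate relabelling of all ballots and $J_\lambda$ is neutral, this changes neither the numerator nor the multiset of social welfares, hence not the ratio, so the two inner minima coincide. (b) For $k\ge m$ the map $\eta(\cdot,k)$ is a bijection from $T_m$ onto $\combset{m}{k}$, so the inner minimum equals $\min_{{\bf t}\in(T_m)^n}\frac{E[\sum_i u^k_i(J_\lambda({\bf u}^k))]}{\max_j\sum_i u^k_i(j)}$ with $u^k_i=\eta(t_i,k)$. (c) Fix ${\bf t}$: since $J_\lambda$ is ordinal its output distribution on ${\bf u}^k$ is a single distribution independent of $k$, while each $u^k_i(j)$ tends to $0$ or $1$ (according to $t_i$) and $\max_j\sum_i u^k_i(j)\ge n/m$ stays bounded away from $0$; hence this ratio converges, its limit is by definition $c_{J_\lambda,{\bf t}}$, and by linearity of expectation (the denominator being common to all $q$) $c_{J_\lambda,{\bf t}}=\sum_q\lambda_q c_{\unilat{m}{n}{q},{\bf t}}$. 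As $(T_m)^n$ is finite, a minimum of finitely many convergent sequences converges to the minimum of the limits, giving $\liminf_k\min_{\bf t}(\cdots)=\min_{\bf t}\sum_q\lambda_q c_{\unilat{m}{n}{q},{\bf t}}$, which is the desired identity.

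It then follows that $\max_{\lambda\in\Delta_m}\min_{\bf t}\sum_q\lambda_q c_{\unilat{m}{n}{q},{\bf t}}=\sup_\lambda\mratio(J_\lambda)=\gratio{OU}{m}{n}$ (the supremum is attained, being that of a minimum of linear functions of $\lambda$ over a compact set), and by the minimax theorem the left side is the value of $G$ with the row player as maximizer; and for an optimal row strategy $\lambda^*$, the identity applied to $\lambda^*$ gives $\mratio(J_{\lambda^*})=\mathrm{val}(G)=\gratio{OU}{m}{n}$, which is the achievability clause. The main obstacle is the careful handling of the crux step — in particular that the two denominators genuinely agree (this is where neutrality is used), that the passage from $(\combset{m}{k})^n$ to $(T_m)^n$ is exact, and that $\liminf_k$ may be pulled through the finite minimum and identified with the matrix entries (which uses that ordinality freezes the output distribution and that the denominator stays bounded below); the remainder is standard matrix-game theory, with finiteness of both strategy sets making the value computable by linear programming.
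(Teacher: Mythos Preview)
Your proof is correct and follows essentially the same route as the paper's: reduce to anonymous and neutral mechanisms via Lemma~\ref{lem-naenough}, parametrize these by the simplex via Theorem~\ref{thm-na}/Corollary~\ref{cor-ud}, and invoke Lemma~\ref{lem-quasicomb} together with neutrality to reduce the adversary to the finite type space $(T_m)^n$. The only cosmetic difference is that the paper introduces auxiliary matrices $H^k$ (with entries the exact ratios at resolution $k$) and appeals to continuity of the value of a matrix game in its entries, whereas you argue pointwise convergence for each fixed ${\bf t}$ (the ordinal output distribution being frozen across $k$) and then swap the $\liminf$ with the finite minimum directly; the two arguments are equivalent.
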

\begin{proof}
We only show the statement for $\gratio{O}{m}{n}$, the other proof
being analogous.  For fixed $k$, consider the matrix $H^k$ defined
similarly to $H$, but with entries $c_{J,{\bf t}} = \frac{E[\sum_{i=1}^n
u_i(J({\bf u}^k))]}{\max_{j \in M}\sum_{i=1}^n u^k_i(j)}$, where
$u^k_i = \eta(t_i, k)$. Viewing $H^k$ as a matrix game, a mixed
strategy of the row player can be interpreted as a convex combination
of the mechanisms corresponding to the rows, and the expected payoff
when the column player responds with a particular column ${\bf t}$ is equal to the
approximation ratio of $J$ on the valuation profile
$(\eta(t_i,k))_i$. Therefore, the value of the game is the worst case
approximation ratio of the best convex combination, among profiles of
the form $(\eta(t_i,k))_i$ for a type profile ${\bf t}$.  By Lemma
\ref{lem-naenough}, $\gratio{O}{m}{n}$ is determined by the best available
anonymous and neutral ordinal mechanism. By Corollary \ref{cor-ud},
the anonymous and neutral ordinal mechanisms are exactly the convex
combinations of the $\unilat{m}{n}{q}$ and the $\duple{m}{n}{q}$
mechanisms for various $q$. Given any particular convex combination
yielding a mechanism $K$, by Lemma \ref{lem-quasicomb}, its worst case
approximation ratio is given by $\liminf_{k \rightarrow \infty}
\min_{{\bf u} \in (\combset{m}{k})^n} \frac{E[\sum_{i=1}^n u_i(K({\bf
u}))]}{\sum_{i=1}^n u_i(A)}$ which is equal to $\liminf_{k \rightarrow
\infty} \min_{{\bf u} \in (\combset{m}{k})^n} \frac{E[\sum_{i=1}^n
u_i(K({\bf u}^k))]}{\max_{j \in M}\sum_{i=1}^n u^k_i(j)}$, since $K$
is neutral. This means that no mechanism can have an approximation ratio better than the limit of the values of the games $H^k$ as $k$ approaches
infinity. By continuity of the value of a matrix game as a function of its entries, this is equal to the value of $H$. Therefore, $\gratio{O}{m}{n}$ is at most the value of $H$.
Now consider the mechanism $J$ defined by the optimal strategy for the row player in the
matrix game $H$. As the entries of $H_k$
converge to the entries of $H$ as $k \rightarrow \infty$, we have that
for any $\epsilon > 0$, and sufficiently large $k$, the strategy is
also an $\epsilon$-optimal strategy for $H_k$. Since $\epsilon$ is
arbitrary, we have that $\mratio(J)$ is at least the value of $H$, completing the proof.
\end{proof}
When
applying Lemma \ref{lem-matrix} for concrete values of $m,n$, one can take
advantage of the fact that all mechanisms corresponding to rows are
anonymous and neutral. This means that two different columns will have
identical entries if they correspond to two type profiles that can be
obtained from one another by permuting voters and/or candidates. This
makes it possible to reduce the number of columns drastically. After
such reduction, we have applied the theorem to $m=3$ and $n=2,3,4$ and
$5$, computing the corresponding optimal approximation ratios and optimal mechanisms. The ratios are given in Table \ref{aratios}.
 The mechanisms achieving the ratios are shown in Table \ref{probmix} and
Table \ref{probmixuni}. These mechanisms are in general not unique. Note in particular that a different approximation-optimal mechanism than {\em random-majority} was found in $\rmech{O}{3}{3}$.

\begin{table}
\caption{Approximation ratios for $n$ voters.\label{aratios}}
\begin{center}
       \begin{tabular}{ccc}
               \textbf{$\mathbf{n}$}/\textbf{Approximation ratio}  &  $\gratio{O}{3}{n}$ & $\gratio{OU}{3}{n}$ \\ \hline
               \noalign{\smallskip}
               2 & 2/3  &  2/3 \\ 
               3 & 2/3  &  105/171 \\
               4 & 2/3  &  5/8 \\
               5 & 6407/9899  &  34/55 \\
       \end{tabular}
       \hfill
\end{center}
\end{table}

\begin{table}
\caption{Mixed-unilateral ordinal mechanisms for $n$ voters. \label{probmixuni}}\begin{center}
       \begin{tabular}{cccc}
               \textbf{$\mathbf{n}$}/\textbf{Mechanism}  &  $\unilat{3}{n}{1}$ & $\unilat{3}{n}{2}$ & $\unilat{3}{n}{3}$ \\ \hline
               \noalign{\smallskip}
               2 & 1/3  &  2/3 & 0  \\ 
               3 & 9/19  &  10/19 & 0  \\ 
               4 & 1/2  &  1/2 & 0  \\ 
               5 & 5/11  &  6/11 & 0  \\  
       \end{tabular}
\end{center}
\end{table}

\begin{table}
\caption{Ordinal mechanisms for $n$ voters. \label{probmix}}
\begin{center}
       \begin{tabular}{ccccccc}
               \textbf{$\mathbf{n}$}/\textbf{Mechanism}  &  $\unilat{3}{n}{1}$ & $\unilat{3}{n}{2}$ & $\unilat{3}{n}{3}$ & $\duple{n}{3}{\lfloor n/2\rfloor +1}$ & $\duple{n}{3}{\lfloor n/2\rfloor +2}$ & $\duple{n}{3}{\lfloor n/2\rfloor +3}$ \\ \hline
               \noalign{\smallskip}
               2 & 4/100  &  8/100 & 0 & 88/100 & | & | \\ 
               3 & 47/100  &  0 & 0 & 53/100 & 0 & | \\ 
               4 & 0  &  0 & 0 & 1 & 0 & | \\ 
               5 & 3035/9899  &  0 & 0 & 3552/9899 & 3312/9899 & 0 \\ 
       \end{tabular}
\end{center}
\end{table}

We now turn our attention to the case of three candidates and
arbitrarily many voters. In particular, we shall be interested in
$\agratio{O}{3} = \liminf_{n \rightarrow \infty} \gratio{O}{3}{n}$ and
$\agratio{OU}{3} = \liminf_{n \rightarrow \infty}
\gratio{OU}{3}{n}$. By Lemma \ref{lem-limit}, we in fact have
$\agratio{O}{3} = \lim_{n \rightarrow \infty} \gratio{O}{3}{n}$ and
$\agratio{OU}{3} = \lim_{n \rightarrow \infty} \gratio{OU}{3}{n}$.

We present a family of ordinal and mixed-unilateral mechanisms $J_n$ with $\mratio(J_n) > \oul$. In particular, $\agratio{OU}{3} > \oul$. The coefficents $\cone$ and $\ctwo$ were found by trial-and-error; we present more information about how later.
\begin{theorem}\label{thm-oul} Let $\cone = \frac{77066611}{157737759}\approx 0.489$ and $\ctwo = \frac{80671148}{157737759}\approx 0.511$. Let $J_n = \cone \cdot \unilat{m}{n}{1} + \ctwo \cdot \unilat{m}{n}{2}$. For all $n$, we have $\mratio(J_n) > \oul$. 
\end{theorem}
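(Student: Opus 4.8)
The plan is to use the tools of Section~\ref{sec-prel} to turn the claim into a finite-dimensional polynomial optimization problem and then to solve it. Since $J_n=\cone\,\unilat{3}{n}{1}+\ctwo\,\unilat{3}{n}{2}$ is a convex combination of ordinal, anonymous and neutral mechanisms, it is itself ordinal and neutral, so Lemma~\ref{lem-quasicomb} applies and $\mratio(J_n)=\liminf_{k\to\infty}\min_{{\bf u}\in(\combset{3}{k})^n}E[\sum_i u_i(J_n({\bf u}))]/\sum_i u_i(A)$, where $A$ is the fixed social optimum. For $m=3$, a quasi-combinatorial profile is, in the limit $k\to\infty$, one in which every voter assigns value $1$ to his favorite candidate, value $1$ to his second candidate if he is of the ``$j=2$'' type and value $0$ otherwise, and value $0$ to his least preferred candidate. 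I would describe such a limiting profile by the fraction $\alpha_X$ of voters whose favorite is $X$, the fraction $\beta_X$ whose second choice is $X$, and the fraction $\gamma_X\le\beta_X$ whose second choice is $X$ and who are of the ``$j=2$'' type, for $X\in\{A,B,C\}$; then $\sum_X\alpha_X=\sum_X\beta_X=1$, the social welfare of $X$ scaled by $1/n$ is $W_X=\alpha_X+\gamma_X$, and $J_n$ elects $X$ with probability $\pi_X=p\alpha_X+q\beta_X$, where $p=\cone+\ctwo/2$ and $q=\ctwo/2$, so that $p+q=1$ and $\sum_X\pi_X=1$. Thus the ratio on the profile equals $g=\bigl(\sum_X\pi_XW_X\bigr)/W_A$. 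Since the triples $(\alpha,\beta,\gamma)$ coming from $n$-voter profiles lie, for every $n$, in the single compact polytope $P$ defined by the above equalities together with $\gamma_X\le\beta_X$, $W_A\ge W_B$, $W_A\ge W_C$ and the realizability bounds $\alpha_X+\beta_X\le1$, it suffices to prove $\min_{P}g>\oul$; this bound then holds for all $n$ simultaneously.

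The second step is to reduce the dimension of this optimization. The variables $\gamma_B$ and $\gamma_C$ occur in $g$ only through the nonnegative numerator summands $\pi_B\gamma_B$ and $\pi_C\gamma_C$, and setting them to $0$ keeps the point in $P$ (it only lowers $W_B$ and $W_C$); hence $\min_Pg$ is attained with $\gamma_B=\gamma_C=0$, so that $W_B=\alpha_B$ and $W_C=\alpha_C$. Fixing these, one computes $dg/d\gamma_A=(\pi_AW_A-\sum_X\pi_XW_X)/W_A^2=-(\pi_BW_B+\pi_CW_C)/W_A^2\le0$, so $g$ is non-increasing in $\gamma_A$ and its minimum is attained with $\gamma_A=\beta_A$, so that $W_A=\alpha_A+\beta_A$. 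What is left is to minimize the ratio of the two explicit quadratic forms $\sum_X\pi_XW_X$ and $\alpha_A+\beta_A$ in the six nonnegative variables $\alpha_A,\alpha_B,\alpha_C,\beta_A,\beta_B,\beta_C$, subject to $\sum\alpha=\sum\beta=1$, $\alpha_B\le\alpha_A+\beta_A$, $\alpha_C\le\alpha_A+\beta_A$, and $\alpha_X+\beta_X\le1$.

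For the final step I would fix the denominator $\alpha_A+\beta_A$ and minimize the numerator. For fixed $\alpha$'s it is linear in $(\beta_B,\beta_C)$ with nonnegative coefficients, so the remaining mass $\beta_B+\beta_C=1-\beta_A$ is placed entirely on whichever of $B$ and $C$ has the smaller $\alpha$-value (subject to the bound $\alpha_X+\beta_X\le1$); with the $\beta$'s thereby eliminated, the problem reduces to one- or two-parameter minimizations of a rational function on each face of the reduced region, and a short case analysis over these faces (by which of $W_A\ge W_B$, $W_A\ge W_C$ and $\alpha_X+\beta_X\le1$ are tight) gives that $\min_P g$ is a rational number slightly exceeding $\oul$. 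The coefficients $\cone$ and $\ctwo$ were found by numerical search precisely so that the largest of these face-minima clears $\oul$ while the several competing worst cases remain nearly tied; accordingly I expect the hard part to be this last step --- correctly locating the worst-case voter-type distribution and carrying out the exact arithmetic with these particular rationals --- the $\gamma$-elimination observations above making everything else routine.
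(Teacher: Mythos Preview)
Your approach is the paper's at the strategic level --- use Lemma~\ref{lem-quasicomb} to reduce to quasi-combinatorial profiles, express the limiting ratio as a rational function of the type fractions, and verify that the resulting finite-dimensional (non-convex) polynomial optimization has value exceeding $\oul$ --- but you carry out considerably more analytic preprocessing. The paper parametrizes directly by the twelve type fractions $x_1,\dots,x_{12}$ of Table~\ref{variables}, writes the single quadratic program ``minimize $(p_Aw_A+p_Bw_B+p_Cw_C)-\oul\,w_A$ over the simplex $\sum x_i=1$, $x_i\ge0$'', and hands it to Maple for symbolic solution by facet enumeration; no dimension reduction is attempted.

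Your observation that $g$ depends only on the nine marginals $(\alpha_X,\beta_X,\gamma_X)$, together with your monotonicity arguments eliminating the three $\gamma$'s, is a genuine simplification absent from the paper and brings the problem down to four free variables before any case analysis. (Incidentally, for $m=3$ the conditions $\alpha_X+\beta_X\le1$ together with the simplex constraints are not merely necessary but exactly sufficient for $(\alpha,\beta)$ to arise from some ranking profile, so your polytope $P$ is precisely the realizable region, not a proper superset.) Both proofs ultimately defer the final numerical verification --- the paper to a computer algebra system, you to an unspecified ``short case analysis'' --- so the difference is one of execution rather than idea: your route makes a by-hand check plausible, while the paper's is transparently mechanizable but opaque.
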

\begin{proof}
By Lemma \ref{lem-quasicomb}, we have that $\mratio(J_n) = \liminf_{k
\rightarrow \infty} \min_{{\bf u} \in (\combset{3}{k})^n}
\frac{E[\sum_{i=1}^n u_i(J_n({\bf u}))]}{\sum_{i=1}^n u_i(A)}.$ Recall
the definition of the set of {\em types} $T_3$ of quasi-combinatorial
valuation functions on three candidates and the definintion of $\eta$
preceding the proof of Lemma \ref{lem-matrix}. From that discussion, we have
$\liminf_{k
\rightarrow \infty} \min_{{\bf u} \in (\combset{m}{k})^n}
\frac{E[\sum_{i=1}^n u_i(J_n({\bf u}))]}{\sum_{i=1}^n u_i(A)} 
= \min_{{\bf t} \in (T_3)^n} \liminf_{k
\rightarrow \infty} \frac{E[\sum_{i=1}^n u_i(J_n({\bf u}))]}{\sum_{i=1}^n u_i(A)}$, where $u_i = \eta(t_i, k)$. 
Also recall that $|T_3|
= 12$. Since $J_n$ is anonymous, to determine the approximation ratio
of $J_n$ on ${\bf u} \in (\combset{m}{k})^n$, we observe that we only
need to know the value of $k$ and the {\em fraction} of voters of each
of the possible 12 types.  In particular, fixing a type profile ${\bf
t} \in (\combset{m}{k})^n$, for each type $k \in T_3$, let $x_k$ be
the fraction of voters in ${\bf u}$ of type $k$. For convenience of
notation, we identify $T_3$ with $\{1,2,\ldots,12\}$ using the scheme
depicted in Table \ref{variables}.  Let $w_j = \lim_{k \rightarrow
\infty} \sum_{i=1}^n u_i(i)$, where $u_i = \eta(t_i,k)$, and let $p_j = \lim_{k \rightarrow \infty} \Pr[E_j]$, where $E_j$ is the event that candidate $j$ is 
elected by $J_n$ in an election with valuation profile ${\bf u}$ where $u_i = \eta(t_i, k)$. We then have 
$\liminf_{k
\rightarrow \infty} \frac{E[\sum_{i=1}^n u_i(J_n({\bf u}))]}{\sum_{i=1}^n u_i(A)} = (p_A \cdot w_A + p_B \cdot w_B + p_C \cdot w_C)/w_A.$ 
Also, from Table \ref{variables} and the definition of $J_n$, we see:
\begin{eqnarray*}
w_A & = & n (x_1 + x_2 + x_3 + x_4 + x_5 + x_9) \\
w_B & = & n (x_1 + x_5 + x_6 + x_7 + x_8 + x_{11}) \\
w_C & = & n (x_4 + x_7 + x_9 + x_{10} + x_{11}+ x_{12}) \\
p_A & = & (\cone + \ctwo/2) (x_1 + x_2 + x_3 + x_4) + (\ctwo/2)(x_5 + x_6 + x_9 + x_{10}) \\
p_B & = & (\cone + \ctwo/2) (x_5 + x_6 + x_7 + x_8) + (\ctwo/2)(x_1 + x_2 + x_{11}+x_{12}) \\
p_C & = & (\cone + \ctwo/2) (x_9 + x_{10} + x_{11} + x_{12}) + (\ctwo/2)(x_3 + x_4 + x_7 + x_8)
\end{eqnarray*}
Thus we can establish that $\mratio(J_n) > \oul$ for all $n$, by
showing that the {\em quadratic program} ``{\em Minimize $(p_A \cdot w_A +
p_B \cdot w_B + p_C \cdot w_C) - \oul w_A$ subject to $x_1 + x_2 +
\cdots + x_{12} = 1, x_1, x_2, \ldots, x_{12} \geq 0$}'', where $w_A, w_B,
w_C, p_A, p_B, p_C$ have been replaced with the above formulae using the
variables $x_i$, has a strictly positive minimum (note that the parameter $n$ appears as a multiplicative constant in the objective function and can be removed, so there is only one program, not one for each $n$). This was
established rigorously by solving the program symbolically in Maple by a facet
enumeration approach (the program being non-convex), which is easily feasible for quadratic programs of this relatively small size.
\end{proof} 

We next present a family of ordinal mechanisms $J'_n$ with $\mratio(J'_n) > \ogl$. In particular, $\agratio{O}{3} > \ogl$. The coefficents defining the mechanism $\cone$ and $\ctwo$ were again found by trial-and-error; we present more information about how later.

\begin{theorem}\label{thm-ogl}
Let $c'_1 = 0.476, c'_2=0.467$ and $d = 0.057$ and let $J_n = c'_1 \cdot \unilat{3}{n}{1} + c'_2 \unilat{3}{n}{2} + d \cdot \duple{m}{n}{\lfloor{n/2}\rfloor + 1}$. Then $\mratio(J_n) > \ogl$ for all $n$. 
\end{theorem}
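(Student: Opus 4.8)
The plan is to mimic the proof of Theorem~\ref{thm-oul}, the only new feature being the duple term $d\cdot\duple{3}{n}{\lfloor n/2\rfloor+1}$ (i.e.\ random-majority). Since $J_n$ is ordinal, Lemma~\ref{lem-quasicomb} gives $\mratio(J_n)=\liminf_{k\to\infty}\min_{{\bf u}\in(\combset{3}{k})^n}\bigl(E[\sum_i u_i(J_n({\bf u}))]\bigr)/\bigl(\sum_i u_i(A)\bigr)$, and exactly as in Theorem~\ref{thm-oul} --- using the $12$-element type set $T_3$, the map $\eta$, and anonymity of $J_n$ --- this equals $\min_x\,(p_Aw_A+p_Bw_B+p_Cw_C)/w_A$, where $x=(x_1,\dots,x_{12})$ records the fraction of voters of each type and ranges over the simplex $\Delta=\{x\ge 0:\sum_k x_k=1\}$, the $w_j$ are the limiting social welfares, and the $p_j$ the limiting election probabilities. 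The formulas for $w_A,w_B,w_C$ are those of Theorem~\ref{thm-oul}, and the two unilateral terms of $J_n$ contribute to each $p_j$ a fixed linear form in $x$ (with $c_1,c_2$ replaced by the new $c'_1,c'_2$, the coefficients read off from Table~\ref{variables}).

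The duple term is what breaks linearity. Random-majority picks one of the three candidate pairs uniformly and elects its pairwise majority winner, a coin settling ties; hence it contributes $\tfrac{d}{3}$ to $p_j$ for each of the (at most two) pairs containing $j$ that $j$ wins, where a tie counts $\tfrac12$. Writing $S_{XY}\subseteq\{1,\dots,12\}$ for the set of types that rank $X$ above $Y$, the pairwise margins are the affine functionals $\ell_{AB}(x)=\bigl(\sum_{k\in S_{AB}}x_k\bigr)-\tfrac12$, $\ell_{AC}$, $\ell_{BC}$, whose sign pattern determines the majority tournament on $\{A,B,C\}$. I would therefore partition $\Delta$ into the (at most eight) closed polytopes $\Delta_\tau$, one per tournament $\tau$, on the interior of which the duple probabilities --- and hence the objective $F_\tau(x):=p_Aw_A+p_Bw_B+p_Cw_C-\ogl\,w_A$ (the common factor $n$ divided out) --- is one fixed quadratic polynomial. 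The theorem then reduces to: for every tournament $\tau$, the quadratic program ``minimize $F_\tau$ over $\Delta_\tau$'' has a strictly positive minimum. These programs are small; as in Theorem~\ref{thm-oul} they are non-convex but can be solved symbolically in Maple by facet enumeration.

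Two points remain. On a tie hyperplane $\ell_{XY}=0$ the coin flip makes the true objective equal to the average of the two adjacent quadratics $F_{\tau_1},F_{\tau_2}$ (on an intersection of several tie hyperplanes, the average over the $2^{(\#\text{ties})}$ adjacent tournaments, the three pairs contributing independently); since each $F_\tau$ is continuous and bounded below by some $\epsilon_\tau>0$ on the \emph{closed} cell $\Delta_\tau$, the true objective is bounded below by $\min_\tau\epsilon_\tau>0$ on all of $\Delta$. And since the type-fraction vectors realizable with $n$ voters form a subset of $\Delta$, a positive lower bound valid on all of $\Delta$ gives $\mratio(J_n)>\ogl$ for every $n$ at once --- no separate parity analysis of $\lfloor n/2\rfloor$ is needed, the tie faces having already been dealt with. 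The main obstacle I anticipate is bookkeeping: correctly extracting $S_{AB},S_{AC},S_{BC}$ and the per-tournament quadratics from Table~\ref{variables}, and then certifying positivity of each non-convex program --- in particular, checking that the specific coefficients $c'_1=0.476$, $c'_2=0.467$, $d=0.057$ put the minimum of the worst cell genuinely above $\ogl$ rather than merely near it.
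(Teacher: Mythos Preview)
Your proposal is correct and follows essentially the same route as the paper: reduce via Lemma~\ref{lem-quasicomb} to quasi-combinatorial profiles, encode them by type fractions, and then partition according to the outcomes of the three pairwise majority contests so that the duple contribution becomes constant on each piece, yielding finitely many quadratic programs to certify. The only cosmetic difference is that the paper enumerates $27$ cases (each pairwise vote has three outcomes: win, lose, tie), whereas you enumerate the $8$ strict tournaments and handle ties separately by your averaging argument on the boundary hyperplanes; the paper's footnote about relaxing $>$ to $\geq$ is effectively the dual of your closure argument, so the two treatments agree.
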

\begin{proof}
The proof idea is the same as in the proof of Theorem
\ref{thm-oul}. In particular, we want to reduce proving the theorem to
solving quadratic programs. The fact that we have to deal with the
$\duple{m}{n}{\lfloor{n/2}\rfloor + 1}$, i.e., {\em random-majority}, makes this task slightly more
involved. In particular, we have to solve many programs rather than just one. We only provide a sketch, showing how to modify the proof of
Theorem \ref{thm-oul}.

As in the proof of Theorem \ref{thm-oul}, we let $w_j = \lim_{k \rightarrow
\infty} \sum_{i=1}^n u_i(i)$, where $u_i = \eta(t_i,k)$. The expressions for $w_A$ as functions of the variables $x_i$ remain the same as in that proof. 
Also, we let $p_j = \lim_{k \rightarrow \infty} \Pr[E_j]$, where $E_j$ is the event that candidate $j$ is elected by $J'_n$ in an election with valuation profile ${\bf u}$ where $u_i = \eta(t_i, k)$.  We then have
\begin{eqnarray*}
p_A & = & (c'_1 + c'_2/2) (x_1 + x_2 + x_3 + x_4) + (c'_2/2)(x_5 + x_6 + x_9 + x_{10}) + d\cdot q_A({\bf t}) \\
p_B & = & (c'_1 + c'_2/2) (x_5 + x_6 + x_7 + x_8) + (c'_2/2)(x_1 + x_2 + x_{11}+x_{12}) + d\cdot q_B({\bf t}) \\
p_C & = & (c_1' + c'_1/2) (x_9 + x_{10} + x_{11} + x_{12}) + (c'_2/2)(x_3 + x_4 + x_7 + x_8) + d\cdot q_C({\bf t}) 
\end{eqnarray*}
where $q_j({\bf t})$ is the probability that {\em random-majority} elects candidate $j$ when the type profile is ${\bf t}.$ Unfortunately, this quantity is not a linear combination of the $x_i$ variables, so we do not immediately arrive at a quadratic program.

However, we can observe that the values of $q_j({\bf t}), j=A,B,C$
depend only on the outcome of the three pairwise majority votes
between $A,B$ and $C$, where the majority vote between, say, $A$ and
$B$ has three possible outcomes: A wins, B wins, or there is a tie. In
particular, there are 27 possible outcomes of the three pairwise
majority votes. To show that $\min_{{\bf t} \in (T_3)^n} \liminf_{k
\rightarrow \infty} \frac{E[\sum_{i=1}^n u_i(J'_n({\bf
u}))]}{\sum_{i=1}^n u_i(A)} > \ogl$, where $u_i = \eta(t_i, k)$, we
partition $(T_3)^n$ into 27 sets according to the outcomes of the
three majority votes of an election with type profile ${\bf t}$ and
show that the inequality holds on all 27 sets in the partition. We
claim that on each of the 27 sets, the inequality is equivalent to a
quadratic program. Indeed, each $q_A({\bf t})$ is now a constant, and
the constraint that the outcome is as specified can be expressed as a
linear constraint in the $x_i$'s and added to the program. For
instance, the condition that $A$ beats $B$ in a majority vote can be
expressed as $x_1 + x_2 + x_3 + x_4 + x_9 + x_{10} > 1/2$ while $A$ ties
$C$ can be expressed as $x_1 + x_2 + x_3 + x_4 + x_5 + x_6 = 1/2$.
Except for the fact that these constraints are added, the program is
now constructed exactly as in the proof of Theorem
\ref{thm-oul}. Solving\footnote{To make the program amenable to
standard facet enumeration methods of quadratic programming, we changed the sharp inequalities $>$ expresssed the
majority vote constraints into weak inequalities $\geq$. Note that
this cannot decrease the cost of the optimal solution.} the programs
confirms the statement of the theorem.
\end{proof}



\begin{table}
\caption{Variables for types of quasi-combinatorial valuation functions with $\epsilon$ denoting $1/k$ \label{variables}}
       \begin{tabular}{ccccccccccccc}
               \textbf{Candidate}/\textbf{Variable}  &  $x_1$ & $x_2$ & $x_3$ & $x_4$ & $x_5$ & $x_6$ & $x_7$ & $x_8$ & $x_9$ & $x_{10}$ & $x_{11}$ & $x_{12}$ \\ \hline
               \noalign{\smallskip}
               A & $1$ & $1$ & $1$ & $1$ & $1-\epsilon$ & $\epsilon$  & $0$ & $0$ & $1-\epsilon$ & $\epsilon$ & $0$ & $0$  \\ 
               B & $1-\epsilon$ & $\epsilon$ & $0$ & $0$ & $1$ & $1$  & $1$ & $1$ & $0$ & $0$ & $1-\epsilon$ & $\epsilon$  \\
               C & $0$ & $0$ & $\epsilon$ & $1-\epsilon$ & $0$ & $0$  & $1-\epsilon$ & $\epsilon$ & $1$ & $1$ & $1$ & $1$  \\             
       \end{tabular}
\end{table}

We next show that $\agratio{OU}{3} \leq \ouu$ and $\agratio{O}{4} \leq \ogu$. By Lemma \ref{lem-limit}, it is enough to show that 
$\gratio{OU}{3}{n^*} \leq \ouu$ and $\gratio{O}{3}{n^*} \leq \ogu$ for some fixed $n^*$.
Therefore, the statements follow from the following theorem.

\begin{theorem}\label{thm-ou}
$\gratio{OU}{3}{23000} \leq \frac{32093343}{52579253} < \ouu$ and $\gratio{O}{3}{23000} \leq \frac{41}{64} < \ogu$.
\end{theorem}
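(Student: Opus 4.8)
The plan is to turn the matrix-game/LP-duality machinery behind Lemma~\ref{lem-matrix} around and run it in the ``dual'' direction. To bound $\gratio{OU}{3}{n}$ and $\gratio{O}{3}{n}$ from \emph{above} for $n = 23000$, it is enough to exhibit one strategy for the adversary — a probability distribution $\mu$ over quasi-combinatorial $n$-voter profiles — such that \emph{every} mechanism in the relevant class gets expected approximation ratio at most the claimed value $\beta$ against $\mu$. Then Lemmas~\ref{lem-scale} and \ref{lem-limit} (as in the paragraph preceding the theorem) promote $\gratio{OU}{3}{23000}\le\beta$ to $\agratio{OU}{3}\le\beta$, and similarly for $\rmech{O}{}{}$.

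First I would reduce, via Lemma~\ref{lem-naenough}, to anonymous and neutral mechanisms, and then, via Theorem~\ref{thm-na} (resp.\ Corollary~\ref{cor-ud}), write every such mechanism as a convex combination of the ``vertex'' mechanisms $\unilat{3}{n}{q}$, $q\in\{1,2,3\}$ (resp.\ together with $\duple{3}{n}{q}$, $q\in\{\lfloor n/2\rfloor+1,\dots,n\}$). On any fixed profile the approximation ratio is an \emph{affine} function of the convex-combination coefficients, so it suffices to choose $\mu$ with $E_{{\bf u}\sim\mu}[\mathrm{ratio}(V,{\bf u})]\le\beta$ for each vertex mechanism $V$: an arbitrary convex combination $J$ then inherits $E_{{\bf u}\sim\mu}[\mathrm{ratio}(J,{\bf u})]\le\beta$, so $\mratio(J)\le\min_{{\bf u}\in\mathrm{supp}(\mu)}\mathrm{ratio}(J,{\bf u})\le\beta$. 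Next, by Lemma~\ref{lem-quasicomb} the worst case is approached by quasi-combinatorial profiles, and — exactly as in the proof of Theorem~\ref{thm-oul} — for an anonymous mechanism such a profile is captured, in the $k\to\infty$ limit, by the vector $(x_1,\dots,x_{12})$ of fractions of voters of each of the $12$ types in $T_3$, with $w_A,w_B,w_C$ and the unilateral probabilities $p_A,p_B,p_C$ the explicit linear forms in the $x_i$ recorded around Table~\ref{variables}. So the search for $\mu$ collapses to a finite-dimensional problem: pick finitely many fraction vectors $x^{(1)},\dots,x^{(s)}$ realizable with exactly $23000$ voters (each coordinate a multiple of $1/23000$), pick weights $\mu_\ell$, and check the numerical inequalities $\sum_\ell\mu_\ell\,\mathrm{ratio}(V,x^{(\ell)})\le\beta$ — one per unilateral vertex, and, in the $\rmech{O}{}{}$ case, one per duple $\duple{3}{n}{q}$.

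To produce $\mu$ and pin down $\beta$ I would solve the corresponding linear program. For $\rmech{OU}{3}{n}$ this is cleanest: $\max_{(c_1,c_2,c_3)}\min_x\mathrm{ratio}$ has its inner minimum attained at a vertex of the polytope $\{x\in\Delta_{12}: w_A\ge w_B,\ w_A\ge w_C\}$, so it is an LP in $(c_1,c_2,c_3)$ with one constraint per such vertex; its optimal dual solution is the adversary distribution supported on a few of these vertices, and its optimal value is $\beta=\frac{32093343}{52579253}$, which one certifies is below $\ouu$. For $\rmech{O}{3}{n}$ the duples must also be handled; I would do this as in the proof of Theorem~\ref{thm-ogl}, partitioning the type-fraction simplex into the at most $27$ cells determined by the outcomes of the three pairwise majority contests, so that within each cell the duple election probabilities become constants (and the cell-defining majority margins become linear constraints), giving an LP per cell; the worst cell yields $\beta=\frac{41}{64}<\ogu$. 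All LP solves and the final verification are carried out in exact rational arithmetic so the strict inequalities can be asserted rigorously.

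The hard part is scale and bookkeeping rather than any new idea. For $n=23000$ the raw adversary strategy space is astronomically large, so one must first collapse it with anonymity/neutrality to the $12$-dimensional type-fraction simplex before computing anything; one must ensure the optimal adversary profiles are realizable with \emph{exactly} $23000$ voters — this divisibility requirement, not any intrinsic significance, is what dictates the value $23000$: it must be large enough that $\gratio{}{3}{23000}$ has already entered the narrow window permitted by Lemmas~\ref{lem-scale}--\ref{lem-limit}, and rich enough in small factors to realize the fractions the LP calls for; and in the $\rmech{O}{}{}$ case one must correctly enumerate the majority-outcome cells and confirm that in each cell no duple $\duple{3}{n}{q}$ exceeds $\beta$. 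I expect identifying the few vertices/profiles the adversary actually needs, and verifying the handful of binding inequalities symbolically, to be the bulk of the effort.
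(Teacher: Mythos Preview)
Your high-level plan coincides with the paper's: invoke Lemma~\ref{lem-matrix} and bound the value of the matrix game $G$ (resp.\ $H$) from above by restricting the column player to a small explicit set of type profiles, then solve the resulting game exactly. The paper simply exhibits five concrete profiles on $23000$ voters (found by the ``co-evolution'' process described immediately after the proof) and solves the $3\times 5$ and roughly $11500\times 5$ games in exact arithmetic; your LP-duality framing is an equivalent rephrasing of the same idea.

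Your execution for the ordinal case, however, contains a real error. You claim that partitioning the type-fraction simplex into the $27$ majority-outcome cells makes ``the duple election probabilities become constants'' within each cell. This holds only for $\duple{3}{n}{\lfloor n/2\rfloor+1}$, i.e.\ \emph{random-majority}. For any $q>\lfloor n/2\rfloor+1$, the outcome of $\duple{3}{n}{q}$ in a pairwise contest depends on whether the winner's vote count reaches the supermajority threshold $q$, not merely on who holds the majority; hence within a single majority cell the behaviour of the roughly $n/2$ distinct duple row-mechanisms still varies with the $x_i$. The $27$-cell decomposition appears in Theorem~\ref{thm-ogl} precisely because the mechanism analysed there uses only random-majority; it does not transplant to the upper-bound direction, where \emph{every} $\duple{3}{n}{q}$ must be bounded simultaneously. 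The paper sidesteps this entirely: once the five profiles are fixed, the three pairwise vote counts are fixed integers, so for each $q$ one reads off directly whether the threshold is met and computes the entry $c_{J,{\bf t}}$; no cell decomposition is needed, and the remaining matrix game (few columns, many rows) is a trivial LP.
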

\begin{proof}
Lemma \ref{lem-matrix} states that the two upper bounds can be proven
by showing that the values of two certain matrix games $G$ and $H$ are
smaller than the stated figures. While the two games have a reasonable
number of rows, the number of columns is astronomical, so we cannot
solve the games exactly. However, we can prove upper bounds on the
values of the games by restricting the strategy space of the column
player. Note that this corresponds to selecting a number of {\em bad
type profiles}. We have constructed a catalogue of just 5 type profiles, each with
23000 voters. Using the ``fraction
encoding'' of profiles suggested in the proof of Theorem
\ref{thm-oul}, the profiles are:
\begin{itemize}
\item{}$x_2 = 14398/23000, x_5 = 2185/23000,  x_{11}=6417/23000$.
\item{}$x_2 = 6000/23000, x_5 = 8000/23000, x_{12} = 9000/23000$.
\item{}$x_1 = 11500/23000, x_{11}=11500/23000$.
\item{}$x_2 = 9200/23000, x_5 = 4600/23000, x_{12} = 9200/23000$.
\item{}$x_2 = 13800/23000, x_{12} = 9200/23000$.
\end{itemize}
Solving the corresponding matrix games yields the
stated upper bound.
\end{proof}

While the catalogue of bad type profiles of the proof of Theorem \ref{thm-ou} suffices to prove Theorem \ref{thm-ou}, we should discuss how we arrived at this particular ``magic'' catalogue. This discussion also explains how we arrived at the ``magic'' coefficients in Theorems \ref{thm-oul} and \ref{thm-ogl}. In fact, we arrived at the catalogue and the coefficients iteratively in a joint local search process (or ``co-evolution'' process).
To get an initial catalogue, we used the fact that we had already solved the matrix games yielding the values of $\gratio{OU}{3}{n}$ and $\gratio{O}{3}{n}$, for $n = 2,3,5$. By the theorem of Shapley and Snow \cite{Shapley50}, these matrix games have optimal strategies for the column player with support size at most the number of rows of the matrices. One can think of these supports as a small set of bad type profiles for $2,3$ and $5$ voters. Utilizing that $2,3$ and $5$ all divide 1000, we scaled all these {\em up} to 1000 voters. Also, we had solved the quadratic programs of the proofs of Theorem \ref{thm-oul} and Theorem \ref{thm-ogl}, but with inferior coefficients and resulting bounds to the ones stated in this paper. The quadratic programs obtained their minima at certain type profiles. We added these entries to the catalogue, and scaled all profiles to their least common multiple, i.e. 23000. 
Solving the linear programs of the proof of Theorem \ref{thm-ou} now gave not only an upper bound on the approximation ratio, but the optimal strategy of Player I in the games also suggested reasonable mixtures of the $\unilat{3}{n}{q}$ (in the unilateral case) and of the $\unilat{3}{n}{q}$ and {\em random-majority} (all $\duple{3}{n}{q}$ mechanisms except {\em random-majority} were assigned zero weight) to use for large $n$, making us update the coefficients and bounds of Theorem \ref{thm-oul} and \ref{thm-ogl}, with new bad type profiles being a side product. We also added by hand some bad type profiles along the way, and iterated the procedure until no further improvement was found. In the end we pruned the catalogue into a set of five, giving the same upper bound as we had already obtained.

We finally show that
$\agratio{U}{3}$ is between $\cgl$ and $\cgu$. The upper bound follows from the following proposition and Lemma \ref{lem-limit}.
\begin{proposition}
$\gratio{U}{3}{2} \leq 0.75$.
\end{proposition}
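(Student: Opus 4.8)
The plan is to exhibit a single bad valuation profile with $n=2$ voters and $m=3$ candidates on which every mixed-unilateral truthful mechanism achieves approximation ratio at most $0.75$. Since $\gratio{U}{3}{2}$ is, by Lemma~\ref{lem-naenough}, attained (up to the $\sup$) by an anonymous and neutral mixed-unilateral mechanism, and since a mixed-unilateral mechanism is by definition a convex combination of truthful unilateral mechanisms, it suffices to bound the performance of an arbitrary truthful one-voter mechanism $U$ (applied to a uniformly random voter) on the profile. The key structural fact I would use is the same one exploited in the proof of Theorem~\ref{thm-neg}: for a \emph{neutral} truthful one-voter mechanism $U$, conditioned on voter $i^*$ being consulted, the probability that a given candidate $j$ is elected is at most $1/(r+1)$, where $r$ is the number of candidates that outrank $j$ on voter $i^*$'s ballot; more generally, truthfulness forces the election probabilities to respect the voter's ranking (the top candidate gets the most probability, etc.). Actually, even without neutrality, truthfulness of a one-voter mechanism already pins down a lot: the probability vector on $\{A,B,C\}$ as a function of the reported valuation must make truth-telling optimal, which (by a standard argument) forces the probability of the top candidate to be a monotone function of the reported gap, and so on.

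\medskip

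Concretely, I would pick a profile where both voters rank some candidate, say $A$, in the \emph{middle}, with valuation close to $1$, and place their top and bottom candidates so that $A$ is the social optimum while no single voter's ballot "points strongly" at $A$. For instance, with candidates $A,B,C$ and voters $1,2$: let $u_1(B)=1$, $u_1(A)=1-\epsilon$, $u_1(C)=0$ and $u_2(C)=1$, $u_2(A)=1-\epsilon$, $u_2(B)=0$. Then $w_A = 2(1-\epsilon)$, $w_B = 1$, $w_C = 1$, so $A$ is socially optimal. A one-voter mechanism applied to voter $1$ sees $A$ as the \emph{second} choice, so by truthfulness it cannot put probability more than roughly $1/2$ on $A$ relative to $B$ (the top choice) in a way that still deters voter $1$ from misreporting — and symmetrically for voter $2$. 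The mechanism must hedge between $A$ and the respective top candidate ($B$ for voter $1$, $C$ for voter $2$), and since these top candidates are the \emph{bottom} candidates of the other voter, electing them is costly. Averaging over the random choice of voter, the expected social welfare is bounded by something approaching $\tfrac{3}{4}\cdot 2 + o(1) = 1.5 + o(1)$ relative to $w_A \to 2$, giving ratio $\le 0.75$ in the limit $\epsilon \to 0$. I would need to set up the truthfulness constraints precisely: if $U$ on voter $1$'s true ballot elects $A$ with probability $p$ and $B$ with probability $q$, then comparing to voter $1$ reporting $A$ as his \emph{top} choice (which by truthfulness can only decrease $(1-\epsilon)p + q$... wait, increase) bounds $p+q$, or more carefully bounds the quantity $(1-\epsilon)p + 1\cdot q$ against what a deviation yields.

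\medskip

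The cleanest way to organize the calculation is probably: let $U$ be an arbitrary truthful one-voter neutral mechanism, and let $(\alpha, \beta, \gamma)$ be the probabilities $U$ assigns to the top, middle, bottom candidate of whatever ballot it is handed (neutrality makes this well-defined as a function of the ranking and the numerical gaps). On voter $1$'s ballot the ranking is $B \succ A \succ C$ with middle value $1-\epsilon$; on voter $2$'s it is $C \succ A \succ B$ with middle value $1-\epsilon$. The mechanism consults voter $1$ w.p.\ $1/2$ and voter $2$ w.p.\ $1/2$. The social welfare contributed is $\tfrac12\big[\alpha\, w_B + \beta\, w_A + \gamma\, w_C\big] + \tfrac12\big[\alpha\, w_C + \beta\, w_A + \gamma\, w_B\big] = \beta w_A + \tfrac{\alpha+\gamma}{2}(w_B + w_C) = 2\beta(1-\epsilon) + (\alpha+\gamma) = 2\beta(1-\epsilon) + (1-\beta)$. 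To maximize this we want $\beta$ large; but truthfulness caps $\beta$. The key lemma (a one-voter, three-candidate instance of the argument in Theorem~\ref{thm-neg}, or a direct computation) is that for a truthful neutral one-voter mechanism, $\beta \le 1/2$ always — indeed if the middle candidate were elected with probability $> 1/2$ it would be elected with strictly higher probability than the top candidate, and swapping them would be a profitable deviation. Plugging $\beta = 1/2$: social welfare $\le (1-\epsilon) + 1/2 = 3/2 - \epsilon$, while $w_A = 2(1-\epsilon)$, giving ratio $\le \frac{3/2-\epsilon}{2(1-\epsilon)} \to 3/4$.

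\medskip

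The main obstacle, and the step I would be most careful about, is justifying the reduction cleanly: that it suffices to analyze a single truthful \emph{neutral one-voter} mechanism applied to a random voter, and that neutrality lets us speak of "the probability on the top/middle/bottom candidate." The reduction to anonymous+neutral is Lemma~\ref{lem-naenough}; the fact that an anonymous+neutral mixed-unilateral mechanism is a convex combination of the $\unilat{3}{2}{q}$ is Theorem~\ref{thm-na}, but I actually want a slightly more flexible handle since the unilateral building blocks need not be ordinal — so I would instead argue directly from the definition (mixed-unilateral = convex combination of truthful unilaterals $U_\ell$, each consulting a fixed voter), apply the averaging/symmetrization of Lemma~\ref{lem-naenough} to make each $U_\ell$ neutral and voter-symmetric, and then note that approximation ratio is affine in the mixture, so it suffices to bound each neutralized $\tfrac12 U_\ell(\text{voter }1) + \tfrac12 U_\ell(\text{voter }2)$. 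The second delicate point is the truthfulness bound $\beta \le 1/2$: I should verify it holds for cardinal (not just ordinal) one-voter mechanisms. It does, by exactly the swap argument used in Theorem~\ref{thm-neg}: neutrality means swapping two candidates on the ballot swaps their election probabilities and fixes the rest; if the middle candidate had probability $>1/2$ it would beat the top candidate's probability, contradicting truthfulness (the voter would swap). Everything else is an elementary optimization.
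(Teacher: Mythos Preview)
Your proposal is correct and takes essentially the same approach as the paper: the same bad profile (up to relabeling voters), the same key inequality $\Pr[A]\le 1/2$ obtained via the neutrality-plus-swap argument from Theorem~\ref{thm-neg}, and the same limiting computation yielding ratio $\to 3/4$. The paper's proof is much terser---it simply asserts ``as in the proof of Theorem~\ref{thm-neg}, by neutrality, the probability of $A$ being elected is at most $\tfrac12$'' and leaves the welfare computation implicit---whereas you carefully spell out the reduction from an arbitrary mixed-unilateral mechanism to ``pick a voter uniformly, apply a single truthful neutral one-voter mechanism $U$'', and then parametrize by $(\alpha,\beta,\gamma)$ to make the bound $\beta\le\alpha$ (hence $\beta\le 1/2$) and the welfare optimization explicit; this extra care is warranted since the building blocks here are cardinal, not covered by Theorem~\ref{thm-na}.
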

\begin{proof}
Suppose $J \in \rmech{U}{3}{2}$ has $\mratio(J) > 0.75$. By Lemma \ref{lem-naenough}, we can assume $J$ is neutral. For some $\epsilon > 0$, consider the valuation profile with $u_1(A) = u_2(A)= 1-\epsilon$, $u_1(B) = u_2(C) = 0$, and $u_1(C) = u_2(B) = 1$. As in the proof of Theorem \ref{thm-neg}, by neutrality, we must have that the probability of $A$ being elected is at most $\frac{1}{2}$. The statement follows by considering a sufficiently small $\epsilon$. 
\end{proof}
The lower bound follows from an analysis of the {\em quadratic-lottery} of Feige and Tennenholtz \cite{Feige10}. The main reason that we focus on this particular cardinal mechanism is given by the following lemma.
\begin{lemma}\label{lem-qquasicomb}
Let $J \in \mech{3}{n}$ be a convex combination of $Q_n$ and any ordinal and neutral mechanism. Then
\[  \mratio(J) = \liminf_{k \rightarrow \infty} \min_{{\bf u} \in (\combset{m}{k})^n} \frac{E[\sum_{i=1}^n u_i(J({\bf u}))]}{\sum_{i=1}^n u_i(1)}. \]
\end{lemma}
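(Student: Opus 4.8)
The plan is to mimic the proof of Lemma \ref{lem-quasicomb} as closely as possible, checking that the one place where ordinality was used can be replaced by a property shared by $Q_n$. Write $J = \lambda Q_n + (1-\lambda) K$ where $K$ is ordinal and neutral. As in the earlier proof, set $g({\bf u}) = E[\sum_i u_i(J({\bf u}))]/\sum_i u_i(1)$ and aim to establish the chain of equalities (\ref{eq-fixed-cand})--(\ref{eq-quasicomb}): equation (\ref{eq-fixed-cand}) holds because $J$ is neutral (a convex combination of neutral mechanisms is neutral), so we may assume candidate $1$ is socially optimal; equation (\ref{eq-discretized}) holds because $g$ is continuous on a compact set and the finite domains $\rvalset{m}{k}$ are dense; and everything reduces to proving the claim that for every ${\bf u} \in (\rvalset{m}{k})^n$ there is ${\bf u}' \in (\combset{m}{k})^n$ with $g({\bf u}') \le g({\bf u})$, by induction on $\sum_i \altern(u_i)$.

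First I would set up the induction exactly as before: if some $u_i$ has alternation number $\ge 4$, pick indices $r \le s$ bounding a ``plateau'' of consecutive values in $\Im(u_i)$, together with the neighbouring occupied levels $\tilde r < r$ and $\tilde s > s$, and define the one-parameter family $u^x$ that shifts precisely the candidates mapped into $\{r/k,\ldots,s/k\}$ by $x/k$, for $x$ in the interval $[\tilde r - r + 1, \tilde s - s - 1]$. Let $h(x) = g((u^x, u_{-i}))$. The key point to re-verify is that $h$ is still a fractional-linear function of $x$ on this interval. The denominator $\sum_i u_i(1)$ is affine in $x$ (it changes linearly, or not at all, depending on whether candidate $1$ lies in the shifted block of voter $i$). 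For the numerator, I need that for each candidate $j$, the election probability $\Pr[J \text{ elects } j]$ is an affine function of $x$ on the whole interval. For the ordinal component $K$ this is constant in $x$, since shifting values within the interval $[\tilde r - r + 1, \tilde s - s - 1]$ never changes the ranking on voter $i$'s ballot (the block stays strictly between level $\tilde r$ and level $\tilde s$) — this is the original argument. For the $Q_n$ component, condition on which voter is sampled: if it is not voter $i$, the contribution is constant in $x$; if it is voter $i$, the election probabilities are the quadratic polynomials $(4-\alpha^2)/6$, $(1+2\alpha)/6$, $(1-2\alpha+\alpha^2)/6$ in the value $\alpha$ of voter $i$'s \emph{second-most-preferred} candidate, and crucially $\alpha$ is an affine function of $x$ (either $\alpha$ is unaffected, if the second-ranked candidate is outside the shifted block, or $\alpha = \alpha_0 + x/k$). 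Hence $E[u_i(J)]$ restricted to $Q_n$ sampling voter $i$ is a quadratic in $x$, and so is the whole numerator, while the denominator is affine — a priori giving a \emph{quadratic-over-linear} function, not fractional-linear.

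Here is where I expect the main obstacle, and the fix: a quadratic-over-linear function $h$ on an interval need not attain its minimum at an endpoint. However, I would argue that the numerator's quadratic term has the \emph{right sign}. The quadratic terms come only from $Q_n$ sampling voter $i$: the coefficient of $\alpha^2$ in $\sum_j (\text{prob of } j)\cdot u_i(j)$ contributes, for voter $i$'s own valuation, the combination $-\tfrac{1}{6}(u_i(\text{1st}) ) + \tfrac{1}{6} u_i(\text{3rd})$ (reading off the $\alpha^2$ coefficients $-1/6, 0, 1/6$), which is $\le 0$ since the top candidate has the largest value; and for the \emph{other} voters $i'$ (whose valuations are fixed) there is no $\alpha^2$ term at all. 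So the numerator is a \emph{concave} quadratic in $x$ (or affine), while the denominator is affine and strictly positive. A concave-over-positive-affine function on a closed interval attains its minimum at one of the two endpoints — this is an elementary calculus fact (its derivative is of the form (affine)/(affine)$^2$, changing sign at most once from $-$ to... actually one checks directly that a concave numerator over a positive affine denominator cannot have an interior strict local minimum). Therefore $h$ attains its minimum on $[\tilde r - r + 1, \tilde s - s - 1]$ at an endpoint; replacing $u_i$ by the corresponding $u^x$ strictly decreases $\sum_{i'} \altern(u_{i'})$ (it merges the plateau with a neighbouring occupied or empty block), and the induction hypothesis finishes the argument. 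The induction basis $\sum_i \altern(u_i) = 2n$ is the case ${\bf u} \in (\combset{m}{k})^n$, handled by taking ${\bf u}' = {\bf u}$, exactly as in Lemma \ref{lem-quasicomb}. I would close by noting that this covers all the terms, so $\mratio(J) = \liminf_k \min_{{\bf u} \in (\combset{m}{k})^n} g({\bf u})$ as claimed.
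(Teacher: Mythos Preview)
Your overall strategy matches the paper's: reduce to quasi-combinatorial profiles by showing that, when you slide the (single, since $m=3$) interior value of some voter $i$, the function $h(x)=g((u^x,u_{-i}))$ attains its minimum at an endpoint of the sliding interval. The paper uses a slightly different inductive counter ($c_{\bf u}$ rather than alternation number), but for $m=3$ the two reductions are identical: only one candidate's value moves.

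The genuine gap is your concavity argument. You compute the $\alpha^2$-coefficient of the numerator as $-\tfrac16 u_i(\text{1st})+\tfrac16 u_i(\text{3rd})$ and then claim the other voters contribute no $\alpha^2$ term. Both parts are wrong. First, $u_i(\text{2nd})=\alpha$ is itself moving, so $p_{\text{mid}}(\alpha)\cdot u_i(\text{2nd})=\tfrac{(1+2\alpha)\alpha}{6}$ already adds $+\tfrac13$ to the $\alpha^2$ coefficient; voter $i$'s own contribution is $-\tfrac16\cdot 1+\tfrac13+\tfrac16\cdot 0=+\tfrac16$, not $\le 0$. Second, for $i'\neq i$ the valuations $u_{i'}(j)$ are fixed constants, but they multiply the \emph{quadratic} election probabilities $p_j^{(i)}(\alpha)$, yielding an $\alpha^2$ coefficient $\tfrac16\bigl(u_{i'}(\text{bot}_i)-u_{i'}(\text{top}_i)\bigr)$, which can be positive. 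Summing, the total $\alpha^2$ coefficient of the numerator is $\tfrac{\lambda}{6n}\bigl(2+W_{\text{bot}}-W_{\text{top}}\bigr)$ where $W_j=\sum_{i'}u_{i'}(j)$, and this need not be negative.

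Concretely: take $n=3$, $\lambda=1$, voter $1$ with $u_1(A)=1$, $u_1(B)=\alpha$, $u_1(C)=0$, and voters $2,3$ with $u_{2,3}(C)=1$, $u_{2,3}(A)=\gamma$, $u_{2,3}(B)=0$ for small $\gamma$. Then $W_{\text{top}}=W_A=1+2\gamma$, $W_{\text{bot}}=W_C=2$, so the $\alpha^2$ coefficient is $\tfrac{1}{18}(3-2\gamma)>0$: the numerator is strictly convex in $\alpha$. With candidate $1=C$ the denominator is the constant $2$, and one checks the minimum of $h$ over $\alpha\in[0,1]$ sits near $\alpha\approx\tfrac{1+4\gamma-2\gamma^2}{2(3-2\gamma)}\in(0,1)$, strictly interior. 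So the step ``concave numerator over positive affine denominator, hence minimum at an endpoint'' fails as written. (The paper's own proof asserts the same concavity ``by inspection'' without further detail, so this is a point where the published argument is also incomplete; your proposal does not add the missing justification.)
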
 
\begin{proof}
The proof is a simple modification of the proof of Lemma \ref{lem-quasicomb}. As in that proof, for a valuation profile ${\bf u} = (u_i)$, define 
$g({\bf u}) = \frac{E[\sum_{i=1}^n u_i(J({\bf u}))]}{\sum_{i=1}^n u_i(1)}.$
We show the following equations:
\begin{eqnarray}
\mratio(J) 
& = & \inf_{{\bf u} \in \valset{3}^n} \frac{E[\sum_{i=1}^n u_i(J({\bf u}))]}{\max_{j \in M}\sum_{i=1}^n u_i(j)} \\
& = & \inf_{{\bf u} \in \valset{3}^n} g({\bf u}) \label{eq-fixed-cand2}\\ 
& = & \liminf_{k \rightarrow \infty} \min_{{\bf u} \in (\rvalset{3}{k})^n} g({\bf u}) \label{eq-discretized2}\\
& = & \liminf_{k \rightarrow \infty} \min_{{\bf u} \in (\combset{3}{k})^n} g({\bf u}) \label{eq-quasicomb2}
\end{eqnarray}

Equation (\ref{eq-fixed-cand2}), (\ref{eq-discretized2}) follows as in the proof of Lemma \ref{lem-quasicomb}.
Equation (\ref{eq-quasicomb2}) follows from the following argument.
For a profile ${\bf u} = (u_i)\in (\rvalset{3}{k})^n$, let $c_{\bf u}$ denote the number of pairs $(i,j)$ with $i$ being a voter and $j$ a candidate, for which $u_i(j)-{1/k}$ and $u_i(j) + {1/k}$ are both in $[0,1]$ and both {\em not} in
the image of $u_i$. Then, $\combset{3}{k}$ consists of exactly those ${\bf u}$ in $\rvalset{3}{k}$ for which $c_{\bf u} = 0$. To establish equation (\ref{eq-quasicomb2}), we merely have to show that for any ${\bf u} \in \rvalset{3}{k}$ for which $c_{\bf u} > 0$, there is a ${\bf u}' \in \rvalset{3}{k}$ for which
$g({\bf u}') \leq g({\bf u})$ and $c_{\bf u'} < c_{\bf u}$. 
We will now construct such ${\bf u}'$. Since $c_{\bf u} > 0$, there is a pair $(i,j)$ so that $u_i(j)-{1/k}$ and $u_i(j) + {1/k}$ are both in $[0,1]$ and both not in the image of $u_i$. Let $\ell_-$ be the smallest integer value so that
$u_i(j) - {\ell/k}$ is not in the image of $u_i$, for any integer $\ell \in \{\ell_-, \ldots, j-1\}$. Let $\ell_+$ be the largest integer value so that
$u_i(j) + {\ell/k}$ is not in the image of $u_i$, for any integer $\ell \in \{j+1, \ldots, \ell_+\}$. We can define a valuation function $u^x \in \valset{m}$ for any $x \in [-\ell_-/k; \ell_+/k]$ 
as follows: $u^x$ agrees with $u_i$ except on $j$, where we let $u^x(j) = u_i(j) + x$. Let ${\bf u}^x = (u^x, u_{-i})$. Now consider the function $h: x \rightarrow g({\bf u}^x)$. Since $J$ is a convex combination of {\em quadratic-lottery} and a neutral ordinal mechanism, we see by inspection of the definition of the function $g$, that $h$ on the domain $[-\ell_-/k; \ell_+/k]$  is the quotient of two quadratic polynomials where the numerator has 
second derivative being a negative constant and the denominator is postive throughout the interval.
This means that $h$ attains its minimum at either $\ell_-/k$ or at $\ell_+/k$.
 In the first case, we let ${\bf u}' = {\bf u}^{\ell_-/k}$ and in the second, we let ${\bf u}' = {\bf u}^{\ell_+/k}$. This completes the proof.
\end{proof}


\begin{theorem}\label{thm-cardthree}
The limit of the approximation ratio of $Q_n$ as $n$ approaches infinity, is exactly the golden ratio, i.e., $(\sqrt{5}-1)/2  \approx 0.618$. Also, let $J_n$ be the mechanism for $n$ voters that selects {\em random-favorite} with probability $29/100$ and {\em quadratic-lottery} with probability $71/100$. Then, $\mratio(J_n) > \frac{33}{50}=0.660$.
\end{theorem}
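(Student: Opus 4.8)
The plan is to mimic the proofs of Theorems~\ref{thm-oul} and~\ref{thm-ogl}, with Lemma~\ref{lem-qquasicomb} playing the role of Lemma~\ref{lem-quasicomb}. Both mechanisms in the statement --- $Q_n$ itself, and the mechanism $J_n=\tfrac{29}{100}\,\unilat{3}{n}{1}+\tfrac{71}{100}\,Q_n$ --- are convex combinations of $Q_n$ with an ordinal, neutral mechanism, since {\em random-favorite} $=\unilat{3}{n}{1}$ is ordinal and neutral; hence Lemma~\ref{lem-qquasicomb} applies and it suffices to analyse limiting ($k\to\infty$) quasi-combinatorial profiles, each encoded by the vector $(x_1,\dots,x_{12})$ of fractions of voters of the twelve types of Table~\ref{variables}. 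The limits $w_A,w_B,w_C$ of $\sum_i u_i(\cdot)$ are the same linear forms in the $x_i$ as in the proof of Theorem~\ref{thm-oul}. The only genuinely new computation is the limiting election probability of $Q_n$ for each type: on a quasi-combinatorial valuation function the value $\alpha$ of the second-ranked candidate tends to $0$ or to $1$, so $Q_n$ elects the (first, second, third)-ranked candidate with probabilities $(\tfrac23,\tfrac16,\tfrac16)$ when $\alpha\to0$ and $(\tfrac12,\tfrac12,0)$ when $\alpha\to1$; reading off from Table~\ref{variables} which types have $\alpha\to0$ and which have $\alpha\to1$ then expresses $p_A,p_B,p_C$ as explicit linear forms in the $x_i$ (for $J_n$, add $\tfrac{29}{100}$ times the likewise linear {\em random-favorite} probabilities). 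In every case the approximation ratio equals $\min\,(p_Aw_A+p_Bw_B+p_Cw_C)/w_A$ over the simplex $\{x\ge0,\ \sum_i x_i=1,\ w_A>0\}$, with the $x_i$ ranging over multiples of $1/n$; letting $n\to\infty$ replaces this by the infimum over the whole simplex.

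For the claim $\mratio(J_n)>\tfrac{33}{50}$ it then suffices, exactly as in Theorem~\ref{thm-oul}, to verify that the single quadratic program ``minimize $p_Aw_A+p_Bw_B+p_Cw_C-\tfrac{33}{50}w_A$ subject to $x\ge0,\ \sum_i x_i=1$'' (with the substitutions above) has a strictly positive minimum: since $w_A\le1$ on the simplex, a positive minimum here forces the ratio to exceed $\tfrac{33}{50}$ on every profile with $w_A>0$, simultaneously for all $n$. This is a non-convex quadratic program of the same modest size as those of Theorems~\ref{thm-oul} and~\ref{thm-ogl}, dispatched symbolically by facet enumeration; the weights $\tfrac{29}{100},\tfrac{71}{100}$ were located by the trial-and-error/local-search procedure described after Theorem~\ref{thm-ou}.

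For $\lim_n\mratio(Q_n)=\tfrac{\sqrt5-1}{2}$ I would prove the two bounds separately. Upper bound: exhibit the extremal profile in which a fraction $\sqrt5/3$ of voters have ranking $A>B>C$ with $\alpha\to0$ (valuation pattern $\to(1,0,0)$) and a fraction $(3-\sqrt5)/6$ each have the two cyclic rotations $B>C>A$ and $C>A>B$, also with $\alpha\to0$; then $w_A=\sqrt5/3$ and $w_B=w_C=(3-\sqrt5)/6$, each voter gives probability $\tfrac16$ to each of the two candidates other than its favorite (as $\alpha\to0$), so $p_j=\tfrac16+\tfrac12 w_j$, and $(p_Aw_A+p_Bw_B+p_Cw_C)/w_A=\big(\tfrac16+\tfrac12(w_A^2+w_B^2+w_C^2)\big)/w_A$ evaluates to exactly $\tfrac{\sqrt5-1}{2}$; approximating the fractions by rationals with denominator $n$ gives $\liminf_n\mratio(Q_n)\le\tfrac{\sqrt5-1}{2}$. (This also explains where the golden ratio comes from: restricting to the six ``$\alpha\to0$'' types makes $p_j=\tfrac16+\tfrac12 w_j$ identically, so the ratio is $\big(\tfrac16+\tfrac12\sum_j w_j^2\big)/w_A$, minimized by $w_B=w_C=(1-w_A)/2$ and then, by one-variable calculus, by $w_A=\sqrt5/3$, yielding $\tfrac{\sqrt5-1}{2}$.) Lower bound: show $p_Aw_A+p_Bw_B+p_Cw_C-\tfrac{\sqrt5-1}{2}\,w_A\ge0$ over the entire $12$-simplex --- again a quadratic program (automatically accounting for the ``$\alpha\to1$'' types as well), but now with optimal value exactly $0$, attained at a neutral image of the profile above. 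I expect this lower bound to be the main obstacle: unlike the program for $J_n$ it has an irrational target and a non-strict optimum, so the facet-enumeration certificate has to be carried out with all arithmetic inside the field $\mathbb{Q}[\sqrt5]$ (the candidate minimizers on each face solve linear systems with $\mathbb{Q}[\sqrt5]$ coefficients, so this is feasible), and one must take care that the verification of ``$\ge0$ with equality only at the extremal profile'' is done exactly rather than numerically.
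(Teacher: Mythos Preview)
Your proposal is correct and follows essentially the same route as the paper's (brief) proof: invoke Lemma~\ref{lem-qquasicomb} to reduce to limiting quasi-combinatorial profiles encoded by the $12$-vector $(x_1,\dots,x_{12})$, express $p_A,p_B,p_C,w_A,w_B,w_C$ as linear forms in the $x_i$, and verify the claimed inequalities by solving the resulting quadratic programs --- exactly as in Theorem~\ref{thm-oul}. Your explicit identification of the extremal profile (fractions $\sqrt5/3,\ (3-\sqrt5)/6,\ (3-\sqrt5)/6$ on the three cyclic ``$\alpha\to0$'' types) and the clean derivation of the golden ratio via $p_j=\tfrac16+\tfrac12 w_j$ go a bit beyond what the paper writes out, and your remark about carrying the lower-bound QP for $Q_n$ over $\mathbb{Q}[\sqrt5]$ correctly flags the one place where a purely numerical check would not suffice.
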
  
\begin{proof} (sketch)
Lemma \ref{lem-qquasicomb} allows us to proceed completely as in the proof of Theorem \ref{thm-oul}, by constructing and solving appropriate quadratic programs. As the proof is a straightforward adaptation, we leave out the details.
\end{proof}

Mechanism $J_n$ of Theorem \ref{thm-cardthree} achieves an approximation ratio strictly better than $0.64$. In other words, the best truthful cardinal mechanism for three candidates strictly outperforms all ordinal ones.

\section{Conclusion}
\label{sec-conc}
By the statement of Lee \cite{Lee14}, mixed-unilateral mechanisms are asymptotically no better than ordinal mechanisms. Can a cardinal mechanism which is not mixed-unilateral beat this approximation barrier? Getting upper bounds on the performance of general cardinal mechanisms is impaired by the lack of a
characterization of cardinal mechanisms a la Gibbard's. Can we adapt the proof of Theorem \ref{thm:anyupper} to work in the general setting without ties? 
For the case of $m=3$, can we close the gaps for ordinal mechanisms and for mixed-unilateral mechanisms? How well can cardinal mechanisms do for $m=3$? Theorem \ref{thm-dm} holds for $m=3$ as well, but perhaps we could prove a tighter upper bound for cardinal mechanisms in this case.



\bibliographystyle{plain}
\bibliography{voting}


\end{document}